\definecolor{cerulean}{rgb}{0.0, 0.48, 0.65}
\definecolor{chromeyellow}{rgb}{1.0, 0.65, 0.0}
\lstdefinestyle{customcoq}{
  columns=flexible,
  mathescape=true,
  belowcaptionskip=1\baselineskip,
  breaklines=true,
  xleftmargin=\parindent,
  language=Coq,
  morekeywords={Variant, fun, Arguments, Type, cofix},
  emph={%
    SOCKAPI,ITree,data_at,data_at_
  },
  emphstyle={\bfseries\color{green!40!red!80}},
  showstringspaces=false,
  basicstyle=\small\ttfamily,
  keywordstyle=\bfseries\color{green!20!black},
  commentstyle=\itshape\color{red!40!black},
  identifierstyle=\color{violet!50!black},
  stringstyle=\color{orange},
  escapeinside={<@}{@>}
}
\newcommand{\inlinecoq}[1]{\mbox{\lstinline[style=customcoq,columns=fixed,basewidth=.48em]{#1}}}
\newcommand{\ilc}[1]{\inlinecoq{#1}}
\newif\ifarxiv\arxivtrue   
  \newcommand{\arxiv}[2]{#1}
  \newcommand{\arxiv}[2]{#2}
\newif\ifcomments\commentstrue   
\newif\ifaftersubmission \aftersubmissionfalse 
\newif\ifplentyofspace \plentyofspacefalse 
\newcommand{\proposecut}[1]{\ifcomments{\color{gray} #1}\fi}
\newcommand{\yz}[1]{\textcolor{blue}{{[YZ:~#1]}}}
\newcommand{\sz}[1]{\textcolor{brown!100!black!100}{{[SZ:~#1]}}}
\newcommand{\ph}[1]{\textcolor{olive}{{[PH:~#1]}}}
\newcommand{\lh}[1]{\textcolor{green!75!black}{{[LH:~#1]}}}
\newcommand{\nc}[1]{\textcolor{orange}{{[NC:~#1]}}}
\newcommand{\proposecut}[1]{}
\newcommand{\yz}[1]{}
\newcommand{\sz}[1]{}
\newcommand{\ph}[1]{}
\newcommand{\lh}[1]{}
\newcommand{\nc}[1]{}
\newcommand{\pfpn}{post fixpoint\xspace}
\newcommand{\gfpn}{greatest fixpoint\xspace}
\newcommand{\cofixes}{cofixes\xspace}
\newcommand{\gfpc}[1]{{\color{cerulean}{#1}}}
\newcommand{\gfptop}[1]{\ensuremath{\gfpc{\tm{gfp}~#1}}}
\newcommand{\gfp}[1]{\ensuremath{\gfpc{\tm{gfp}~\lambda #1}~\cdot}}
\newcommand{\cofix}[1]{\ensuremath{\gfpc{\tm{cofix}~#1}}}
\newcommand{\compa}[1]{\ensuremath{\tm{t}_{#1}}}
\newcommand{\equn}{\tm{equ}\xspace}
\newcommand{\sembra}[1]{\ensuremath{\llbracket #1 \rrbracket}}
\newcommand{\ssembra}[1]{\mathcal{S}\ensuremath{\llbracket #1 \rrbracket}}
\newcommand{\lts}{LTS}
\newcommand{\equ}{\ensuremath{\cong}}
\newcommand{\myurl}{https://github.com/vellvm/ctrees/blob/popl23}
\newcommand{\linksrc}{\href{https://github.com/vellvm/ctrees}{\includegraphics[width=0.3cm,height=0.3cm,keepaspectratio]{}}}
\newcommand{\link}[3]{\href{#1/#2/#3}{\includegraphics[width=0.3cm,height=0.3cm,keepaspectratio]{logo.png}}}
\newcommand{\linkt}[2]{\link{\myurl}{theories}{#1\#L#2}}
\newcommand{\linkccs}[2]{\link{\myurl}{examples/CCS}{#1\#L#2}}
\newcommand{\linkimp}[2]{\link{\myurl}{examples/ImpBr}{#1\#L#2}}
\newcommand{\linkyield}[2]{\link{\myurl}{examples/Yield}{#1\#L#2}}
\newcommand{\mkw}[1]{\mbox{\sc{#1}}}
\newcommand{\reflC}{\ensuremath{\mkw{refl}^{up}}\xspace}
\newcommand{\symC}{\ensuremath{\mkw{sym}^{up}}\xspace}
\newcommand{\transC}{\ensuremath{\mkw{trans}^{up}}\xspace}
\newcommand{\visC}{\ensuremath{\mkw{Vis}^{up}}\xspace}
\newcommand{\bindC}[1]{\ensuremath{\mkw{bind}^{up}(#1)}\xspace}
\newcommand{\uptorelC}[1]{\ensuremath{\mkw{upto}^{up}(#1)}\xspace}
\newcommand{\itreen}{ITree\xspace}
\newcommand{\itreesn}{ITrees\xspace}
\newcommand{\delay}{\tm{later}\xspace}
\newcommand{\itau}{\tm{Tau}\xspace}
\newcommand{\ctreen}{CTree\xspace}
\newcommand{\ctreesn}{CTrees\xspace}
\newcommand{\visn}{\tm{Vis}\xspace}
\newcommand{\brSn}{\ensuremath{Br_S}\xspace}
\newcommand{\brDn}{\ensuremath{Br_D}\xspace}
\newcommand{\Vis}[2]{\ensuremath{Vis~#1~#2}}
\newcommand{\brS}[2]{\ensuremath{Br_S^#1~#2}}
\newcommand{\brStwo}[2]{\ensuremath{Br_S^2~#1~#2}}
\newcommand{\brD}[2]{\ensuremath{Br_D^#1~#2}}
\newcommand{\brDtwo}[2]{\ensuremath{Br_D^2~#1~#2}}
\newcommand{\brDthree}[3]{\ensuremath{Br_D^3~#1~#2~#3}}
\newcommand{\brDfour}{\ensuremath{Br_D^4}}
\newcommand{\Ret}[1]{\ensuremath{\tm{Ret}~#1}}
\newcommand{\ret}[1]{\ensuremath{\tm{ret}~#1}}
\newcommand{\bind}{~\tm{;\!;}~}
\newcommand{\bindc}{>\!\!>\!\!=}
\newcommand{\bindn}{\ilc{bind}}
\newcommand{\bget}{\gets}
\newcommand{\trigger}[1]{\ensuremath{\tm{trigger}~(#1)}}
\newcommand{\triggernoparens}[1]{\ensuremath{\tm{trigger}~#1}}
\newcommand{\interp}{\tm{interp}}
\newcommand{\gethead}[1]{\tm{head~#1}}
\newcommand{\guard}{\tm{Guard}\xspace}
\newcommand{\step}{\tm{Step}\xspace}
\newcommand{\isstuck}[1]{#1\not\rightarrow}
\newcommand{\stuck}{\ensuremath{\emptyset}\xspace}
\newcommand{\spinDn}{\ensuremath{\tm{spinD\_nary}}\xspace}
\newcommand{\spinSn}{\ensuremath{\tm{spinS\_nary}}\xspace}
\newcommand{\lretKW}{\tm{val}}
\newcommand{\lobsKW}{\tm{obs}}
\newcommand{\llabel}{\tm{label}\xspace}
\newcommand{\lret}[1]{\lretKW~#1\xspace}
\newcommand{\lobs}[2]{\lobsKW~#1~#2\xspace}
\newcommand{\ltau}{\tm{tau}\xspace}
\newcommand{\lstep}[3]{#1 \xrightarrow{#2} #3\xspace}
\newcommand{\wstep}[3]{#1 \xRightarrow{#2} #3\xspace}
\tikzset{
  treenode/.style =
  {align=center, inner sep=0pt, text centered, font=\sffamily},
  ncv/.style =
  {treenode, circle, gray, font=\sffamily\bfseries, draw=gray, fill=gray,
    text width=1.5em},
  nci/.style = {treenode, circle, white, draw=gray,
    text width=1.5em, very thick},
  nv/.style = {treenode, circle, draw=black,
    minimum width=0.5em, minimum height=0.5em},
  nr/.style = {treenode, rectangle, draw=black,
    minimum width=0.5em, minimum height=0.5em},
  ntau/.style =
  {treenode, circle, gray, font=\sffamily\bfseries, draw=gray, fill=gray,
    text width=1.5em}

}
\newcommand{\true}{\nt{true}}
\newcommand{\false}{\nt{false}}
\newcommand{\imp}{\tm{imp}\xspace}
\newcommand{\ccomm}{\tm{comm}\xspace}
\newcommand{\cskip}{\tm{skip}\xspace}
\newcommand{\cassign}[2]{#1 \tm{::=} ~#2\xspace}
\newcommand{\cseq}[2]{#1 {;} ~#2\xspace}
\newcommand{\cwhile}[2]{\tm{while} ~#1 ~\tm{do} ~#2\xspace}
\newcommand{\cbr}[2]{\tm{br} ~#1 ~\tm{or} ~#2\xspace}
\newcommand{\cbrn}{\tm{br}\xspace}
\newcommand{\cstuck}{\tm{block}\xspace}
\newcommand{\cprint}{\tm{print}\xspace}
\newcommand{\cfork}[2]{\tm{fork} ~#1~#2\xspace}
\newcommand{\cyield}{\tm{yield}\xspace}
\newcommand{\ebr}{\tm{flip}\xspace}
\newcommand{\efork}{\tm{fork}\xspace}
\newcommand{\ccs}{\tm{ccs}\xspace}
\newcommand{\ccssem}{\ensuremath{\tm{ccs}^\#}\xspace}
\newcommand{\pP}{\nt{P}}
\newcommand{\pQ}{\nt{Q}}
\newcommand{\comm}{\nt{c}}
\newcommand{\pnil}{0}
\newcommand{\prf}[2]{#1\cdot #2}
\newcommand{\pls}[2]{#1\oplus #2}
\newcommand{\para}[2]{#1\parallel #2}
\newcommand{\compat}[2]{#1\not \in #2}
\newcommand{\new}[2]{\nu #1\cdot #2}
\newcommand{\bang}[1]{! #1}
\newcommand{\stepccs}[3]{#1 \xrightarrow{#2}_{\scriptsize{\mathrm{ccs}}} #3}
\newcommand{\spnil}{\overline{0}}
\newcommand{\sprf}[2]{#1\overline{\cdot} #2}
\newcommand{\spls}[2]{#1\overline{\oplus} #2}
\newcommand{\spara}[2]{#1\overline{\parallel} #2}
\newcommand{\snew}[2]{\nu #1\overline{\cdot} #2}
\newcommand{\sparabang}[2]{#1\overline{\parallel!} #2}
\newcommand{\sbang}[1]{\overline{!} #1}
\newcommand{\actL}{\tm{actL}}
\newcommand{\actR}{\tm{actR}}
\newcommand{\actLR}{\tm{actLR}}
\newcommand{\actRPB}{\tm{pbR}}
\newcommand{\actRRPB}{\tm{pbRR}}
\newcommand{\actLRPB}{\tm{pbLR}}
\newcommand{\nt}[1]{\ensuremath{\mathit{#1}}} 
\newcommand{\tm}[1]{\ensuremath{\mathtt{#1}}} 
\newcommand{\sep}{~\mid~}
\newcommand{\defeq}{\triangleq}
\newcommand{\ir}{LLVM IR\xspace}
\newcommand{\coq}{\tm{Coq}\xspace}
\newcommand{\ocaml}{\tm{OCaml}\xspace}
\newcommand{\impbr}{\tm{ImpBr}\xspace}
\newcommand{\unit}{\ensuremath{\mathtt{(\:)}}}
\newcommand{\sem}[1]{\ensuremath{\llbracket #1 \rrbracket}}
\newcommand{\euttn}{\tm{eutt}\xspace}
\newcommand{\schedule}{\ensuremath{\mathsf{schedule}}\xspace}
\newcommand{\sbisim}{\sim}
\newcommand{\sbisimR}{\sim_R}
\newcommand{\sbisimn}{\ensuremath{\tm{sbisim}}\xspace}
\newcommand{\wbisim}{\approx}
\newcommand{\treq}{\equiv_{\tm{tr}}}
\newcommand{\ssim}{\lesssim}
\begin{document}

\title{Choice Trees}
\subtitle{Representing Nondeterministic, Recursive, and Impure Programs in Coq}


\author{Nicolas Chappe}
\orcid{0000-0003-3732-7704}
\affiliation{
  \institution{Univ Lyon, EnsL, UCBL, CNRS, Inria,  LIP, F-69342, LYON Cedex 07}
  \country{France}
}
\email{nicolas.chappe@ens-lyon.fr}

\author{Paul He}
\orcid{0000-0002-6305-4335}
\affiliation{
  \institution{University of Pennsylvania}
  \city{Philadelphia}
  \state{PA}
  \country{USA}
}
\email{paulhe@cis.upenn.edu}

\author{Ludovic Henrio}
\orcid{0000-0001-7137-3523}
\affiliation{
  \institution{Univ Lyon, EnsL, UCBL, CNRS, Inria,  LIP, F-69342, LYON Cedex 07}
  \city{Lyon}
  \country{France}
}
\email{ludovic.henrio@cnrs.fr}

\author{Yannick Zakowski}
\orcid{0000-0003-4585-6470}
\affiliation{
  \institution{Univ Lyon, EnsL, UCBL, CNRS, Inria,  LIP, F-69342, LYON Cedex 07}
  \country{France}
}
\email{yannick.zakowski@inria.fr}

\author{Steve Zdancewic}
\orcid{0000-0002-3516-1512}
\affiliation{
  \institution{University of Pennsylvania}
  \country{USA}
}
\email{stevez@cis.upenn.edu}


\begin{abstract}
  This paper introduces Choice Trees (\ctreesn), a monad for modeling nondeterministic, recursive, and
  impure programs in \coq.
  Inspired by \citeauthor{XZHH+20}'s \itreesn, this novel data structure embeds computations into
  coinductive trees with three kind of nodes: external events, and two
  variants of nondeterministic branching.
  This apparent redundancy allows us to provide shallow embedding of
  denotational models with internal choice in the style of \ccs, while
  recovering an inductive LTS view of the computation.
  \ctreesn inherit a vast collection of bisimulation and refinement tools, with respect to
  which we establish a rich equational theory.

  We connect \ctreesn to the \itreesn infrastructure by showing how a monad
  morphism embedding the former into the latter permits to use \ctreesn to
  implement nondeterministic effects.
  We demonstrate the utility of \ctreesn by using them to model
  concurrency semantics in two case studies: \ccs and cooperative multithreading.
\end{abstract}

\begin{CCSXML}
<ccs2012>
 <concept>
  <concept_id>10010520.10010553.10010562</concept_id>
  <concept_desc>Computer systems organization~Embedded systems</concept_desc>
  <concept_significance>500</concept_significance>
 </concept>
 <concept>
  <concept_id>10010520.10010575.10010755</concept_id>
  <concept_desc>Computer systems organization~Redundancy</concept_desc>
  <concept_significance>300</concept_significance>
 </concept>
 <concept>
  <concept_id>10010520.10010553.10010554</concept_id>
  <concept_desc>Computer systems organization~Robotics</concept_desc>
  <concept_significance>100</concept_significance>
 </concept>
 <concept>
  <concept_id>10003033.10003083.10003095</concept_id>
  <concept_desc>Networks~Network reliability</concept_desc>
  <concept_significance>100</concept_significance>
 </concept>
</ccs2012>
\end{CCSXML}


\keywords{Nondeterminism, Formal Semantics, Interaction Trees, Concurrency}

\maketitle



\section{Introduction}
\label{sec:intro}

Reasoning about and modeling nondeterministic computations is important for many
purposes. Formal specifications use nondeterminism to abstract away from the
details of implementation choices.  Accounting for nondeterminism is crucial
when reasoning about the semantics of concurrent and distributed systems, which
are, by nature, nondeterministic due to races between threads, locks, or message
deliveries.  Consequently, precisely defining nondeterministic behaviors and
developing the mathematical tools to work with those definitions has been an
important research endeavor, and has led to the development of
formalisms like nondeterministic automata, labeled transition systems
and relational operational semantics~\cite{BPS01}, powerdomains~\cite{Smy76}, or
game semantics~\cite{AM99,RW11}, among others, all of which have been used to give semantics to
nondeterministic programming language features such as
concurrency~\cite{SW01,ccs,Harper13}.

In this paper, we are interested in developing tools for modeling
nondeterministic computations in a dependent type theory such as Coq's
CIC~\cite{coq}.  Although any of the formalisms mentioned above could be used
for such purposes (and many have been~\cite{compcerttso,promising,promising2,KS20,VMSJ+22}), those
techniques offer various tradeoffs when it comes to the needs of formalization:
automata, and labeled transitions systems, while offering powerful
bisimulation proof principles, are not easily made modular (except, perhaps, with complex extensions to the framework~\cite{henrio:01299562}).  Relationally-defined operational semantics are flexible and
expressive, but again suffer from issues of compositionality, which
makes it challenging to build general-purpose libraries
that support constructing complex models.  Conversely, powerdomains and game
semantics are more denotational approaches, aiming to ensure compositionality
by construction; however, the mathematical structures involved are themselves
very complex, typically involving many relations and constraints~\cite{AM99,MM07,RW11} that are not easy to implement in constructive logic (though there are some notable exceptions~\cite{KS20,VMSJ+22}).
Moreover, in all of the above-mentioned approaches, there are other tensions at
play. For instance, how ``deep'' the embedding is affects the amount of effort
needed to implement a formal semantics---``shallower'' embeddings typically
allow more re-use of metalanguage features, e.g., meta-level function
application can obviate the need to define and prove properties about a
substitution operation; ``deeper'' embeddings can side-step meta-level
limitations (such as Coq's insistence on pure, total functions) at the cost of
additional work to define the semantics.  Moreover, these tradeoffs can
have significant impact on how difficult it is to use other tools and
methodologies: for instance, to use QuickChick~\cite{quickchick}, one must be able to extract an
executable interpreter from the semantics, something that isn't always easy or
possible.

This paper introduces a new formalism designed specifically to facilitate the
definition of and reasoning about nondeterministic computations in Coq's
dependent type theory.  The key idea is to update Xia, et al.'s \textit{interaction trees} (\itreesn)
framework \cite{XZHH+20} with native support for nondeterminisic ``choice
nodes'' that represent internal choices made during computation. The
main technical contributions of this paper are to introduce the definition of
these \ctreesn{} (``choice trees'') and to develop the suitable metatheory and equational reasoning principles to accommodate that change.

We believe that \ctreesn{} offer an appealing, and novel, point in the design
space of formalisms for working with nondeterministic specifications within type
theory.  Unlike purely relational specifications, \ctreesn build nondeterminism
explicitly into a datatype, as nodes in a tree, and the nondeterminism is
realized propositionally at the level of the equational theory, which determines
when two \ctreen{} computations are in bisimulation.  This means that the user
of \ctreesn{} has more control over how to represent nondeterminism and when to apply the
incumbent propositional reasoning. By reifying the choice construct into a
data structure, one can write meta-level functions that manipulate \ctreesn{},
rather than working entirely within a relation on syntax.  This design allows us
to bring to bear the machinery of monadic interpreters to refine the
nondeterminism into an (executable) implementation.
While \itreesn{}
can represent such choice nodes, in our experience, using that feature to model
``internal'' nondeterminism is awkward: the natural equational theory for
\itreesn{} is too fine, and other techniques, such as interpretation into
\ilc{Prop}, don't work out neatly.

At the same time, the notion of bisimulation for \ctreesn{} is still connected
to familiar definitions like those from labeled transition systems (\lts), meaning
that much of the well-developed theory from prior work can be imported whole-sale.
Indeed, we define bisimilarity for \ctreesn{} by viewing them as LTSs and
applying standard definitions.  The fact that the definition of bisimulation
ends up being subtle and nontrivial is a sign that we gain something by working
with the \ctreesn{}: like their \itreen{} predecessors, \ctreesn{} have
compositional reasoning principles, the type is a monad, and the useful
combinators for working with \itreesn{}, namely sequential composition,
iteration and recursion, interpretation, etc., all carry over directly.
\ctreesn{}, though, further allow us to
conveniently, and flexibly, define nondeterminstic semantics, ranging from
simple choice operators to various flavors of parallel composition.  The benefit
is that, rather than just working with a ``raw'' LTS directly, we can construct
one using the \ctreesn{} combinators---this is a big benefit because, in
practice, the LTS defining the intended semantics of a nondeterministic
programming language cannot easily be built in a compositional way without using some kind of intermediate representation, which is exactly what \ctreesn provides (see the discussion
about Figure~\ref{fig:choices}).  A key technical novelty of our \ctreesn{}
definition is that it makes a distinction between \textit{stepping} choices (which correspond to $\tau$ transitions and introduce new LTS states) and \textit{delayed} choices (which don't correspond to a transition and don't create a state in the LTS). This design allows for compositional
construction of the LTS and generic reasoning rules that are usable in any
context.

The net result of our contributions is a library, entirely formalized in Coq,
that  offers flexible building blocks for constructing nondeterministic, and
hence concurrent, models of computation.  To demonstrate the applicability of this library, we
use it to implement the semantics from two different formalisms: \ccs~\cite{Milner:CC1989} and a language with
cooperative threads inspired from the literature~\cite{Abadi2010}.  Crucially, in both of these scenarios, we are
able to define the appropriate parallel composition combinators such
that the semantics of the programming language can be defined fully
compositionally (i.e., by straightforward induction on the syntax).  Moreover, we
recover the classic definition of program equivalence for \ccs directly from the
equational theory induced by the encoding of the semantics using \ctreesn{}; for the language with cooperative threading, we prove some standard program equivalences.

To summarize, this paper makes the following contributions:

\begin{itemize}
\item We introduce \ctreesn{}, a novel data structure for defining
  nondeterministic computations in type theory, along with a set of combinators
  for building semantic objects using \ctreesn{}.
\item We develop the appropriate metatheory needed to reason about strong and
  weak bisimilarity of \ctreesn{}, connecting their semantics to concepts
  familiar from labeled transition systems.
\item We show that \ctreesn{} admit appropriate notions of refinement and that
  we can use them to construct monadic interpreters; we show that \itreesn{} can
  be faithfully embedded into \ctreesn{}.
\item We demonstrate how to use \ctreesn{} in two case studies: (1) to define a
  semantics for Milner's classic \ccs and prove that the resulting derived
  equational theory coincides with the one given by the standard operational
  semantics, and (2) to model cooperative multithreading with support for
  \efork and \cyield operations and prove nontrivial program equivalences.
\end{itemize}

All of our results have been implemented in \coq, and all claims in this paper
are fully mechanically verified.
For expository purposes, we stray away from \coq's syntax in the body of this
paper, but systematically link our claims to their formal counterpart via
hyperlinks represented as (\linksrc).

The remainder of the paper is organized as follows. The next section gives some
background about interaction trees and monadic interpreters, along with a
discussion of the challenges of modeling nondeterminism, laying the foundation
for our results. We introduce the \ctreesn{} data structure and its main
combinators in Section \ref{sec:ctrees}. Section~\ref{sec:bisim} introduces
several notions of equivalences over \ctreesn---(coinductive) equality, strong
bisimilarity, weak bisimilarity, and trace equivalence---and describes its core
equational theory. Section~\ref{sec:interp} describes how to interpret
uninterpreted events in an \itreen into ``choice'' branches in a \ctreen, as
well as how to define the monadic interpretation of events from \ctreesn.
Section~\ref{sec:ccs} describes our first case study, a model for \ccs.
Section~\ref{sec:yield} describes our second case study, a model for the \imp
language extended with cooperative multithreading. Finally, Section~\ref{sec:rw}
discusses related work and concludes.

\section{Background}
\label{sec:background}

\subsection{Interaction trees and monadic interpreters}
\label{sec:itrees}

Monadic interpreters have grown to be an attractive way to mechanize the
semantics of a wide class of computational systems in dependent typed theory,
such as the one found in many proof assistants, for which the host language is
purely functional and total.  In the Coq ecosystem, interaction
trees~\cite{XZHH+20} provide a rich library for building and reasoning about
such monadic interpreters.  By building upon the free(r)
monad~\cite{freer,freespec}, one can both design highly reusable components, as
well as define modular models of programming languages more amenable to
evolution.  By modeling recursion coinductively, in the style of Capretta's
delay monad~\cite{Cap05,ADK17}, such interpreters can model non-total object
languages while retaining the ability to \emph{extract} correct-by-construction,
executable, reference interpreters.  By generically lifting monadic
implementations of effects into a monad homomorphism, complex interpreters can
be built by stages, starting from an initial structure where all effects are
free and incrementally introducing their implementation.  Working in a proof
assistant, these structures are well suited for reasoning about program
equivalence and program refinement: each monadic structure comes with its own
notion of refinement, and the layered infrastructure gives rise to increasingly
richer equivalences~\cite{YZZ22}, starting from the free monad, which comes
with no associated algebra.

\begin{figure}

\begin{lstlisting}[style=customcoq,basicstyle=\small\ttfamily]
  CoInductive itree (E: Type -> Type) (R: Type) : Type :=
  | Ret (r: R)                                  (* computation terminating with value r *)
  | Vis {A: Type} (e : E A) (k : A -> itree E R) (* event e yielding an answer in A *)
  | later (t: itree E R).                       (* "silent" tau transition with child t *)

\end{lstlisting}

\vspace{-1ex}
\caption{Interaction trees: definition}
\label{fig:itree-def}
\end{figure}

Interaction trees are coinductive data structures for representing (potentially
divergent) computations that interact with an external environment through
\textit{visible events}.  A definition of the \itreen datatype is shown in
Figure~\ref{fig:itree-def}.\footnote{The signature of \itreesn is presented with
  a positive coinductive datatype for expository purposes. The actual
  implementation is defined in the negative style.}  The datatype takes as its
first parameter a signature---described as a family of types \ilc{E : Type ->
  Type}---that specifies the set of interactions the computation may have with
the environment.  The \ilc{Vis} constructor builds a node in the tree
representing such an interaction, followed by a continuation indexed by the
return type of the event.  The second parameter, \ilc{R}, is the \textit{result
  type}, the type of values that the entire computation may return, if it halts. The
constructor \ilc{Ret} builds such a pure computation, represented as a leaf.
Finally, the \delay constructor models an internal, non-observable step of
computation, allowing the representation of silently diverging computations; it
is also used for guarding corecursive definitions.\footnote{The \itreen library
  uses \ilc{Tau} to represent \delay nodes.  \ilc{Tau} and $\tau$ are overloaded
  in our context, so we rename it to \delay here to avoid ambiguity. \ctreesn
  will replace the \delay (i.e. \ilc{Tau}) constructor with a more general
  construct anyway.}

To illustrate the approach supported by \itreesn, and motivate the contributions
of this paper, we consider how to define the semantics for a simple imperative programming language, \imp:
\begin{mathpar}
  \ccomm \defeq \cskip \sep \cassign{x}{e} \sep \cseq{c1}{c2} \sep \cwhile{b}{c}
\end{mathpar}
The language contains a \cskip construct, assignments, sequential composition, and loops---we
assume a simple language of expressions, $e$, that we omit here.
Consider the following \imp programs:
\begin{mathpar}
  p_1 \defeq \cwhile{\true}{\cskip} \and
  p_2 \defeq \cseq{\cassign{x}{0}}{\cassign{x}{y}} \and
  p_3 \defeq \cassign{x}{y}
\end{mathpar}

Following a semantic model for \imp built on \itreesn in the style of
\citeauthor{XZHH+20}~\cite{XZHH+20}, one builds a semantics in two stages. First,
commands are represented as monadic computations of type \ilc{itree MemE unit}:
commands do not return values, so the return type of the computation is
the trivial \ilc{unit} type; interactions with the memory are (at first) left
uninterpreted, as indicated by the event signature \ilc{MemE}.
This signature encodes two operations:  \ilc{rd} yields a
value, while  \ilc{wr} yields only the acknowledgment that the operation took
place, which we encode again using \ilc{unit}.

\begin{lstlisting}[style=customcoq]
Variant MemE : Type -> Type :=
  | rd (x : var)                : MemE value
  | wr (x : var) (v : value) : MemE unit
\end{lstlisting}

Indexing by the \ilc{value} type in the continuation of \ilc{rd} events
gives rise to non-unary branches in the tree representing these programs. For
instance, the programs $p_1,p_2,p_3$ are, respectively, modeled at this stage by
the trees shown in Figure~\ref{fig:trees}. These diagrams omit the \ilc{Vis} and
\ilc{Ret} constructors, because their presence is clear from the picture. For
example, the second tree would be written  as
\[p_2 = \ilc{Vis (wr x 0) (fun _ => Vis (rd y) (fun ans => (Vis (wr x ans) (fun _ => Ret tt))))}.\]

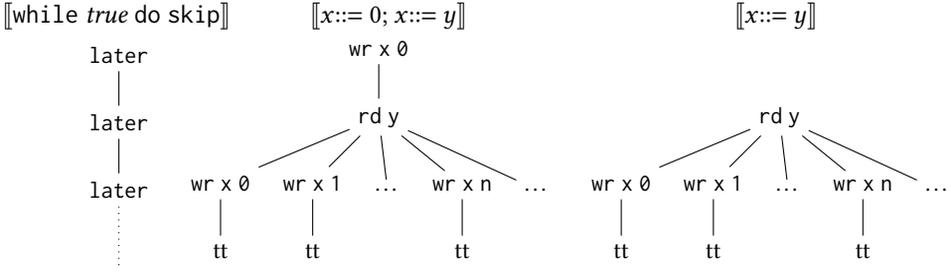
\begin{figure}
  \[
    \begin{array}{ccc}
      \multicolumn{3}{l}{\sem{\cwhile{\true}{\cskip}} \qquad \quad \sem{\cseq{\cassign{x}{0}}{\cassign{x}{y}}} \hspace{1.4in} \sem{\cassign{x}{y}} } \\
      \qquad \quad
{\small
 \begin{forest}
  [\delay
  [\delay
  [\delay, name=foo
  ]]]
  \draw[dotted] (foo) --+(0,-1);
\end{forest}}
      &
{\small
\begin{forest}
  [\tm{wr~x~0}
  [\tm{rd~y}
  [\tm{wr~x~0}[tt]] [\tm{wr~x~1}[tt]] [\dots] [\tm{wr~x~n}[tt]] [\dots]
  ]]
\end{forest}}
      &
{\small
\begin{forest}
  [\tm{rd~y}
  [\tm{wr~x~0}[tt]] [\tm{wr~x~1}[tt]] [\dots] [\tm{wr~x~n}[tt]] [\dots]
  ]
\end{forest}}
\end{array}
\]
\caption{Example \itreesn denoting the \imp programs $p_1$, $p_2$, and $p_3$.}
\label{fig:trees}
\end{figure}

The \delay nodes in the first tree are the guards from
Capretta's monad: because the computation diverges silently, it is modeled as
an infinite sequence of such guards. The equivalence used for
computations in the \itreen monad is a \emph{weak bisimulation}, dubbed
\emph{equivalence up-to taus} (\euttn), which allows one to ignore finite sequences of
\delay nodes when comparing two trees. It remains termination-sensitive: the silently diverging
computation is not equivalent to any other \itreen.

With \itreesn, no assumptions about the semantics of the uninterpreted memory
events is made.  Although one would expect $p_2$ and $p_3$ to be equivalent
as \imp programs, their trees are not \euttn since the former starts with a different
event than the latter.  This missing algebraic equivalence is concretely
recovered at the second stage of modeling: \imp programs are given a semantics
by interpreting the trees into the state monad, by \emph{handling} the
\ilc{MemE} events.  This yields computations in \ilc{stateT mem (itree voidE)
  unit}, or, unfolding the definition, \ilc{mem -> itree voidE (mem * unit)}.  Here, \ilc{voidE} is the
``empty'' event signature, such that an \itreen at that type either silently diverges
or deterministically returns an answer.  For $p_2$ and $p_3$, assuming an initial state
\ilc{m}, the computations become (writing the trees horizontally to save space):
\[
  \begin{array}{lcl}
\ilc{interp}\ h_{mem}\ \sem{p_2} ~\ilc{m} &=&
    \delay - \delay - \delay - (\ilc{m}\{x\gets 0\}\{x\gets \ilc{m}(y)\}, tt)
\\
\ilc{interp}\ h_{mem}\ \sem{p_3}~ \ilc{m} &=&
    \delay - \delay - (\ilc{m}\{x\gets \ilc{m}(y)\}, tt)
  \end{array}
\]


The \delay nodes are introduced by the interpretation of the memory events. More precisely,  an \ilc{interp} combinator  applies
the handler $h_{mem}$ to the \ilc{rd} and \ilc{wr} nodes of the trees, implementing their
semantics in terms of the state monad.  Assuming an appropriate implementation
of the memory, one can show that \(\ilc{m}\{x\gets0\}\{x\gets \ilc{m}(y)\}\) and
\(\ilc{m}\{x\gets \ilc{m}(y)\}\) are extensionally equal, and hence $p_2$ and
$p_3$ are eutt after interpretation.

\subsection{Nondeterminism}
\label{sec:nondeterminism}

While the story above is clean and satisfying for stateful effects,
nondeterminism is much more challenging.  Suppose we extend \imp with a
branching operator \cbr{p}{q} whose semantics is to nondeterministically pick a
branch to execute. This new feature is modeled very naturally using a
boolean-indexed \ebr event, creating a binary branch in the tree.  The new event
signature, a sample use, and the corresponding tree are shown below:

\noindent
\begin{minipage}[t]{2in}
\begin{lstlisting}[style=customcoq,aboveskip=-1.4\medskipamount]
  Variant Flip : Type -> Type :=
  | flip : Flip bool.
\end{lstlisting}
\end{minipage}
\hfill
\begin{minipage}[t]{2.5in}
  \ilc{Vis flip (fun b => if b then p else q)}
\end{minipage}
\hfill
\begin{minipage}[c]{.65in}
\small
\begin{forest}
    [\ebr
    [$\sembra{p}$]
    [$\sembra{q}$]]
\end{forest}
\end{minipage}

Naturally, as with memory events, \ebr does not come with its expected algebra:
associativity, commutativity and idempotence.
To recover these necessary equations to establish program equivalences
such as $p_3 \equiv \cbr{p_2}{p_3},$ we need to find a suitable
monad to interpret \ebr into.

\citeauthor{zakowski2021} used this approach in the Vellvm project
\cite{zakowski2021} for formalizing the nondeterministic features of the \ir.
Their model consists of a propositionally-specified set of computations:
ignoring other effects, the monad they use is \ilc{itree E _ -> Prop}.
The equivalence they build on top of it essentially amounts to a form of
bijection up-to equivalence of the contained monadic computations.
However, this approach suffers from several drawbacks.
First, one of the monadic laws is broken: the \ilc{bind} operation does not associate to the left.
Although stressed in the context of Vellvm~\cite{zakowski2021} and Yoon et al.'s work on layered
monadic interpreters~\cite{YoonZZ22},
this issue is not specific to \itreesn but rather to an hypothetical ``Prop Monad Transformer'',
i.e. to "fun M X => M X -> Prop", as pointed out previously in~\cite{MHRV20}.
The definition is furthermore particularly difficult to work with. Indeed, the corresponding monadic
equivalence is a form of bijection up-to setoid: for any trace in the source, we must existentially exhibit
a suitable trace in the target. The inductive nature of this existential is problematic: one usually cannot exhibit
upfront a coinductive object as witness, they should be produced coinductively.
Second, the approach is very much akin to identifying a communicating system with
its set of traces, except using a richer structure, namely monadic computations,
instead of traces: it forgets all information about \emph{when} nondeterministic choices are made.
As has been well identified by the process calculi tradition, and while trace equivalence is sometimes the
desired relation, such a model leads
to equivalences of programs that are too coarse to be compositional in general.
In a general purpose semantics library, we believe we should strive to provide as much compositional reasoning as possible and thus our tools  support both trace equivalence, and bisimulations~\citep{bloom1988}.
Third, because the set of computations is captured propositionally,
this interpretation is incompatible with the generation of an \emph{executable} interpreter by extraction, losing
one of the major strengths of  the \itreen framework.
\citeauthor{zakowski2021} work around this difficulty by providing two
interpretations of their nondeterministic events, and formally relating them.
But this comes at a cost --- the promise of a sound interpreter for free is broken --- and with constraints ---
non-determinism must come last in the stack of interpretations, and combinators whose equational theory is sensible
to nondeterminism are essentially impossible to define.

This difficulty with properly tackling nondeterminism extends also to
concurrency. \citeauthor{LesaniXKBCPZ22}  used \itreesn to prove
the linearizability of concurrent objects~\cite{LesaniXKBCPZ22}. Here too, they rely on sets of
linearized traces and consider their interleavings. While a reasonable
solution in their context, that approach strays from the monadic interpreter style
and fails to capture bisimilarity.

This paper introduces \ctreesn, a suitable monad for modeling nondeterministic
effects. As with \itreesn, the structure is compatible with divergence, external
interaction through uninterpreted events, extraction to executable reference
interpreters, and monadic interpretation.  The core intuition is based on the
observation that the tree-like structure from \itreesn is indeed the right one
for modeling nondeterminism. The problem arises from how the \itreesn definition
of \euttn observes which branch is taken, requiring that \textit{all} branches
be externally visible: while appropriate to model nondeterminism that results
from a lack of information, it does not correspond to true \textit{internal}
choice. Put another way, thinking of the trees as labeled transition systems, \itreesn
are \emph{deterministic}.  With \ctreesn, we therefore additionally consider
truly branching nodes, explicitly build the associated nondeterministic \lts,
and define proper bisimulations on the structure.

The resulting definitions are very expressive. As foreseen, they form a proper monad,
validating all monadic laws up-to coinductive structure equality, they allow us to
establish desired \imp equations such as $p_3 \equiv \cbr{p_2}{p_3}$, but
they also scale to model \ccs and cooperative multithreading.

Before getting to that, and to better motivate our definitions, let us further extend
our toy language with a \cstuck construction that cannot
reduce, and a \cprint instruction that simply prints a dot. We will refer to
this language as \impbr.
\begin{mathpar}
  \ccomm \defeq \cskip \sep \cassign{x}{e} \sep \cseq{c1}{c2} \sep \cwhile{b}{c}
  \sep \cbr{c1}{c2} \sep \cstuck \sep \cprint
\end{mathpar}
Consider the program $p \defeq \cbr{(\cwhile{\true}{\cprint})}{\cstuck}$.
Depending on the intended operational semantics associated with \cbrn, this program can
have one of two behaviors: (1) either to always reactively print an infinite chain
of dots, or (2) to become nondeterministically either similarly reactive, or completely
unresponsive.

When working with (small-step) operational semantics, the distinction between
these behaviors is immediately apparent in the reduction rule for \cbrn (we only show rules for the left branch here).
\begin{mathpar}
  \inferrule*[right=BrInternal]{\;}{\cbr{c1}{c2} \rightarrow c1}
  \and
  \inferrule*[right=BrDelayed]{c_1 \rightarrow c_1'}{\cbr{c_1}{c_2} \rightarrow c_1'}
\end{mathpar}

\textsc{BrInternal} specifies that \cbrn may simply reduce to the left branch,
while \textsc{BrDelayed} specifies that \cbrn can reduce to any state
reachable from the left branch.
From an observational perspective, the former situation describes a system where,
although we do not observe which branch has been taken, we do observe that
\emph{a} branch has been taken. On the contrary, the latter only progresses if one of the branches can progress, we thus directly observe the subsequent evolution of the chosen branch, but not the branching itself.

In order to design the right monadic structure allowing for enough flexibility
to model either behavior, it is useful to look ahead and anticipate how we will
reason about program equivalence, as described in detail in
Section~\ref{sec:bisim}.
The intuition we follow is to  interpret our computations as labeled transition systems and
define bisimulations over those, as is generally done in the process algebra literature.
From this perspective, the \imp program $p$ may correspond to three distinct LTSs
depending on the intended semantics, as shown in Figure~\ref{fig:choices}.
\begin{figure}
  \centering
\tikzset{
  ->, 
  >=stealth, 
  node distance=2cm, 
  every state/.style={
    minimum size=30pt,
    thick, fill=gray!10
  }, 
  every loop/.style={
    looseness=4, in=70,out=110,
  },
  initial text=$ $, 
}

\begin{subfigure}{0.32\textwidth}
  \centering
\begin{tikzpicture}
  \node[state] (1) {$p$};
  \node[state, right of=1] (2) {$reac$};
  \node[state, above of=1] (3) {$stuck$};

  \draw (1) edge[align=center] node[sloped]{\ebr\\\true}  (2);
  \draw (1) edge[align=center] node[sloped]{\ebr\\\false} (3);
  \draw (2) edge[loop above,align=center] node{\cprint\\\unit} (2);
\end{tikzpicture}
\caption{Observation: branches}
\label{subfig:vis}
\end{subfigure}
\hfill
\begin{subfigure}{0.32\textwidth}
  \centering
\begin{tikzpicture}
  \node[state] (1) {$p$};
  \node[state, right of=1] (2) {$reac$};
  \node[state, above of=1] (3) {$stuck$};

  \draw (1) edge[below] node[sloped]{$\tau$}  (2);
  \draw (1) edge[above] node[sloped]{$\tau$} (3);
  \draw (2) edge[loop above, align=center] node{\cprint\\\unit} (2);
\end{tikzpicture}
\caption{Observation: branching}
\label{subfig:tau}
\end{subfigure}
\hfill
\begin{subfigure}{0.32\textwidth}
  \centering
\begin{tikzpicture}
  \node[state] (1) {$p$};
  \node[state, right of=1] (2) {$reac$};

  \draw (1) edge[align=center] node{\cprint\\\unit}  (2);
  \draw (2) edge[loop above,align=center] node{\cprint\\\unit} (2);
\end{tikzpicture}
\caption{Observation: none}
\label{subfig:none}
\end{subfigure}
\caption{Three possible semantics for the program $p$, from an LTS perspective}
\label{fig:choices}
\end{figure}
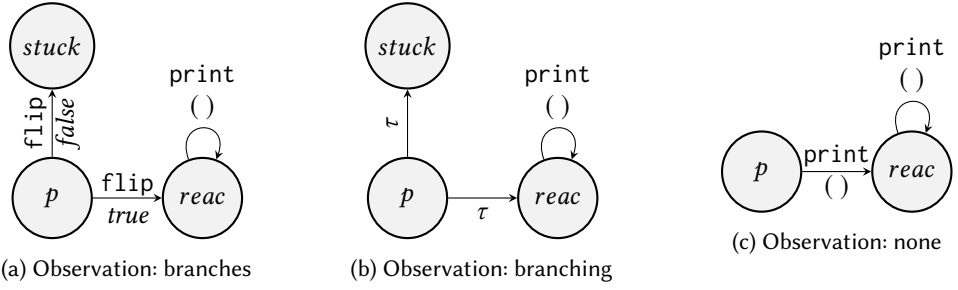

Figure~\ref{subfig:vis} describes the case where picking a branch is an unknown external event, hence where taking a specific branch is an \emph{observable} action with a dedicated label:
this situation is naturally modeled by a \visn node in the style of
\itreesn, that is
$\sembra{\cbr{p}{q}} \defeq \Vis{\ebr}{\lambda b \cdot \text{if } b \text{ then }
  \sembra{p} \text{ else } \sembra{q}}.$

Figure~\ref{subfig:tau} corresponds to \textsc{BrInternal}: both the stuck and
the reactive states are reachable, but we do not
observe the label of the  transition. This transition exactly corresponds to the internal
$\tau$ step of process algebra.
This situation is captured by introducing a new kind of node in our data structure,
a \brSn branch,
that maps in our bisimulations defined in Section~\ref{sec:bisim} to
a nondeterministic \textit{internal step}.
For this semantics, we thus have $\sembra{\cbr{p}{q}} \defeq \brS{2}{\lambda b \cdot \text{if } b \text{ then }
  \sembra{p} \text{ else } \sembra{q}}.$\footnote{The 2 indicates the arity of
  the branching.}

Figure~\ref{subfig:none} corresponds to
\textsc{BrDelayed} but raises the question: how do we build such a behavior?
A natural answer can be to
assume that it is the responsibility of the model, i.e., the function mapping
\imp's syntax to the semantic domain, \ctreesn, to explicitly build this LTS.
Here, $\sembra{p}$ would be an infinite sequence of \visn
\cprint nodes, containing no other node.  While that would be
convenient for developing the meta-theory of \ctreesn, this design choice would
render them far less compositional (and hence less useful) than we want them to
be. Indeed, we want our models to be defined as computable functions by
recursion on the syntax, whenever possible. But to build this LTS directly, the
model for \cbr{p}{q} needs to introspect the models for \sembra{p} and
\sembra{q} to decide whether they can take a step, and hence whether it
should introduce a branching node. But, in general, statically determining
whether the next reachable instruction is \cstuck is intractable, so that
introspection will be hard (or impossible) to implement. We thus extend \ctreesn
with a third category of  node, a \brDn node, which does not directly
correspond to states of an LTS.  Instead, \brDn nodes aggregate sub-trees
such that the (inductively reachable) \brDn children of a \brDn node are ``merged''
in the LTS view of the \ctreen.
This design choice means that, for the \textsc{BrDelayed} semantics, the model
is again trivial to define: $\sembra{\cbr{p}{q}} \defeq \brD{2}{\lambda b \cdot \text{if } b \text{ then }
  \sembra{p} \text{ else } \sembra{q}}$, but the definition of
bisimilarity for \ctreesn ensures that the behavior of $\sembra{p}$ is exactly
the LTS in Figure~\ref{subfig:none}.

Although \textsc{BrDelayed} and its corresponding LTS in
Figure~\ref{subfig:none} is \textit{one} example in which \brDn nodes are
needed, we will see another concrete use in modeling \ccs in
Section~\ref{sec:ccs}.  Similar situations arise frequently, for modeling
mutexes, locks, and other synchronization mechanisms (e.g., the ``await''
construct in Encore~\cite{encore}), dealing with crash failures in
distributed systems, encoding relaxed-consistency shared memory models (e.g.,
the \textsc{ThreadPromise} rule of promising
semantics~\cite{promising}).  The \brDn construct is needed whenever the
operational semantics includes a rule whose possible transitions depend on the
existence of other transitions, i.e. for any rule of the shape shown below, when
it may sometimes be the case (usually due to nondeterminism) that
$P \not \rightarrow$:
\[
  \inferrule*[right=ContingentStep]{P \rightarrow P'}{C[P] \rightarrow C[P']}
\]
The point is that to implement the LTS corresponding to \textsc{ContingentStep}
without using a \brDn node would require introspection of $P$ to determine
whether it may step, which is potentially non-computable.  The \brDn node
bypasses the need for that introspection at \textit{representation time},
instead pushing it to the characterization of the \ctreen as an LTS, which is
used only for \textit{reasoning} about the semantics.  The presence of the \brDn
nodes retains the constructive aspects of the model, in particular the ability
to \textit{interpret} these nodes at later stages, for instance to obtain an
executable version of the semantics.


\section{Ctrees: Definition and Combinators}
\label{sec:ctrees}

\subsection{Core definitions}

\begin{figure}
  \begin{lstlisting}[style=customcoq,basicstyle=\small\ttfamily]
  (* Core datatype *)
  CoInductive ctree (E : Type -> Type) (R : Type) :=
  | Ret (r : R)                                         (* a pure computation *)
  | Vis {X : Type} (e : E X) (k : X -> ctree)           (* an external event *)
  | brS (n : nat)  (k : fin n -> ctree)                 (* stepping branching *)
  | brD (n : nat)  (k : fin n -> ctree)                 (* delayed branching *)

  (* Bind, sequencing computations *)
  CoFixpoint bind {E T U} (t : ctree E T) (k : T -> ctree E U) : ctree E U :=
  match u with
  | Ret r    => k r
  | Vis e h => Vis e (fun x => bind (h x) k)
  | brS n h => brS n (fun x => bind (h x) k)
  | brD n h => brD n (fun x => bind (h x) k)
  end

  (* Unary guards *)
  Definition Guard (t : ctree E R) : ctree E R := brD 1 (fun _ => t)
  Definition Step    (t : ctree E R) : ctree E R := brS 1 (fun _ => t)

  (* Main fixpoint combinator *)
  CoFixpoint iter {I: Type} (body : I -> ctree E (I + R)) : I -> ctree E R :=
    bind (body i) (fun lr => match lr with
                                  | inr r => Ret r
                                  | inl i => Guard (iter body i)
                                  end)

  Notation "E ~> F" := (forall X, E X -> F X)
  (* Atomic ctrees triggering a single event *)
  Definition trigger : E ~> ctree E := fun R (e : E R) => Vis e (fun x => Ret x)
  (* Atomic branching ctrees *)
  Definition brS : ctree E (fin n) := fun n => brS n (fun x => Ret x)
  Definition brD : ctree E (fin n) := fun n => brD n (fun x => Ret x)
  \end{lstlisting}

  \caption{\ctreesn: definition and core combinators (\linkt{Core/CTreeDefinitions.v}{42})}
  \label{fig:defs}

\end{figure}

We are now ready to define our core datatype, displayed in the upper part of
Figure~\ref{fig:defs}.
The definition remains close to a coinductive implementation of the free monad, but hardcodes
support for an additional effect: unobservable, nondeterministic branching.
The \ctreen datatype, much like an \itreen, is parameterized by a signature of
(external) events \ilc{E} encoded as a family of types, and a return type \ilc{R}.
It is defined as a coinductive tree\footnote{The actual implementation uses
  a negative style with primitive projections. We omit this technical detail in
  the presentation.}
with four kind of nodes: pure computations (\ilc{Ret}),
external events (\ilc{Vis}),
internal branching with implicitly associated $\tau$ step (\ilc{brS}),
and delayed internal branching (\ilc{brD}).
The continuation following external events is indexed by the return type
specified by the emitted event.
For the sake of simpler exposition, we restrict both internal
branches to be of finite width: their continuation are
indexed by finite types \ilc{fin}, but this is not a fundamental
limitation.\footnote{We actually also support another branch
  that defines \emph{enhanced \ctreesn}, with arbitrary branching
  over indexes specified in the type of the data structure by an
  interface akin to \ilc{E}. It is not linked here for anonymity reasons, but
  should the paper be accepted, we will add it to the artifact. It is only
  used for the material described in this paper for Lemma~\ref{lem:refine}.}
When  using finite branching, we abuse notation and write, for instance,
$\brS{2}{t~u}$ for the computation branching with two branches, rather
than explicitly spelling out the continuation which  branches on the choice
index: $\ilc{fun i => match i with 0 => t | 1 => u end}$.


The remainder of Figure~\ref{fig:defs} displays (superficially simplified)
definitions of the core combinators.
As expected, \ilc{ctree E} forms a monad for any interface \ilc{E}: the
\ilc{bind} combinator simply lazily crawls the potentially infinite first tree,
and passes the value stored in any reachable leaf to the continuation.
The \ilc{iter} combinator is central to encoding looping and recursive features:
it takes as argument a body, \ilc{body}, intended to be iterated, and is defined
such that the computation returns either a new index over which to continue iterating, or
a final value; \ilc{iter} ties the recursive knot.
Its definition is analogous to the one for \itreesn, except that we need to ask
ourselves how to guard the \ilc{cofix}: if \ilc{body} is a constant, pure
computation, unguarded corecursion would be ill-defined. \itreesn use a \delay node for this purpose.
Here, we instead use a unary delayed branch
as a guard, written \guard. We additionally
write \step for the unary stepping branch---we will discuss how they relate in Section~\ref{subsec:itree}.
The minimal computations respectively triggering an event \ilc{e}, generating observable
branching, or delaying a branch, are defined as \ilc{trigger}, $Br_S^n$, and $Br_D^n$.

Convenience in building models comes at a cost: many \ctreesn represent the same LTS.
Figure~\ref{fig:stuck-spin} illustrates this  by defining
several \ctreesn implementing the stuck LTS and the silently spinning one.
The \ilc{stuckS} and \ilc{stuckD} correspond to internal choices with no outcome, while \ilc{stuckE}
is a question to the environment that cannot be answered, leaving the computation hanging;
the \ilc{spinD*} trees are infinitely deep, but never find in their structure a transition to take.
We define formally the necessary equivalences on computations to prove this
informal statement in Section~\ref{sec:bisim}.
The astute reader may wonder whether both kind of branching nodes
really are necessary. While convenient, we show
in Section~\ref{subsec:guarded} that $\brSn$
can  be expressed in terms of $\brDn$ and $\step$.

\begin{figure}
  \tikzset{
    ->, 
    >=stealth, 
    node distance=2cm, 
    every state/.style={
      minimum size=30pt,
      thick, fill=gray!10
    }, 
    every loop/.style={
      looseness=4, in=70,out=110,
    },
    initial text=$ $, 
}
\begin{minipage}{0.75\textwidth}
  \begin{lstlisting}[style=customcoq,basicstyle=\small\ttfamily]
  (* Stuck processes *)
  Definition stuckE (e : E void) : ctree E void := trigger e
  Definition stuckS : ctree E void := brS 0
  Definition stuckD : ctree E void := brD 0
  CoFixpoint spinD  : ctree E R := Guard spinD
  CoFixpoint spinD_nary n : ctree E R := brD n ;; spinD_nary n
\end{lstlisting}
\end{minipage}
\begin{minipage}{0.2\textwidth}
  \hfil\begin{tikzpicture}
    \node[state] (1) {$stuck$};
  \end{tikzpicture}
\end{minipage}

\begin{minipage}{0.75\textwidth}
  \begin{lstlisting}[style=customcoq,basicstyle=\small\ttfamily]
  (* Spinning processes *)
  CoFixpoint spinS : ctree E R := Step spinS
  CoFixpoint spinS_nary n : ctree E R := brS n ;; spinS_nary n
\end{lstlisting}
\end{minipage}
\begin{minipage}{0.2\textwidth}
  \hfil\begin{tikzpicture}
    \node[state] (1) {$spin$};
    \draw (1) edge[loop above,align=center] node[sloped]{$\tau$} (1);
  \end{tikzpicture}
\end{minipage}

\caption{Concrete representations of stuck and spinning LTSs}
\label{fig:stuck-spin}
\end{figure}
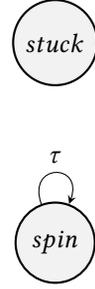


\subsection{A hint of introspection: heads of computations}
\label{subsec:head}

\brDn nodes avoid the need for introspection on trees to model something as generic as a delayed
branching construct such as the one specified by \textsc{BrDelayed}.
However, introspection becomes necessary to build a tree that depends on the
reachable external actions of the sub-trees.
This is the case for example for \ccs's parallel operator that we model in Section~\ref{sec:ccs}.
The set of reachable external actions is not computable in general, as we may have to first know if the
computation to the left of a sequence terminates before knowing if the events
contained in the continuation are reachable.
We are, however, in luck, as we have at hand a semantic domain able to represent
potentially divergent computations: \ctreesn themselves!

\begin{figure}
  \begin{lstlisting}[style=customcoq,basicstyle=\small\ttfamily]
  Variant action E R :=
    | ARet (r : R)
    | ABr  {n} (k : fin n -> ctree E R)
    | AVis {X} (e : E X) (k : X -> ctree E R).

  CoFixpoint head {E X} (t : ctree E X) : ctree E (action E X) :=
    match t with
    | Ret x => Ret (ARet x) | Vis e k => Ret (AVis e k) | brS n k => Ret (ABr k)
    | BrD n k => BrD n (fun i => head (k i))
    end.
  \end{lstlisting}

  \caption{Lazily computing the set of reachable observable nodes (\linkt{Misc/Head.v}{1})}
\label{fig:head}
\end{figure}

The \ilc{head} combinator, described on Figure~\ref{fig:head}, builds a
pure, potentially diverging computation
\emph{only made of delayed choices}, and whose leaves contain all reachable
subtrees starting with an observable node.
These ``immediately'' observable trees are captured in an \ilc{action} datatype, which is used
as the return type of the built computation. The \ilc{head}\footnote{The \ilc{head t} computation could be typed at the empty interface. It is in practice simpler to directly type it at the same interface as the \ilc{t}.} combinator simply crawls the
tree by reconstructing all delayed branches, until it reaches a subtree with any other node at its root; it then returns that subtree as the corresponding \ilc{action}.


\section{Equivalences and Equational Theory for \ctreesn}
\label{sec:bisim}
Section~\ref{sec:ctrees} introduced \ctreesn, the domain of computations we consider,
as well as a selection of combinators upon it. We now turn to the question of
comparing computations represented as \ctreesn for notions of equivalence
and refinement. In particular, we formalize the LTS
representation of a \ctreen that we have followed to justify our definitions.
This section first introduces a (coinductive) syntactic equality of
\ctreesn, then it recovers the traditional notions of strong and weak
bisimilarity of processes, as well as trace equivalence, for an LTS
 derived from the tree structure.  We equip these notions with a
primitive equational theory for \ctreesn; this theory provides
the building blocks necessary for deriving domain-specific equational
theories, such as the ones established in Section~\ref{sec:ccs} for \ccs, and in
Section~\ref{sec:yield} for cooperative scheduling.

\subsection{Coinductive proofs and up-to principles in Coq}

Working with \ctreesn requires the use of several coinductive predicates and
relations to describe equivalences, refinements and invariants.
Doing so at scale in \coq would be highly impractical if only using its native
support for coinductive proofs, but it is nowadays more feasible thanks to
library support~\cite{paco,gpaco,companion}.
Our development relies on Pous's \texttt{coinduction}
library~\cite{coq-coinduction} to
define and reason about the various coinductive predicates and relations we
manipulate. We briefly recall the essential facilities, based
on the \emph{companion}~\cite{companion}, that the library provides.
In particular, we define a subset of candidate ``up-to'' functions---we indicate
in the subsequent subsections which ones constitute valid up-to principles for
respectively the structural equality and the strong bisimulation we manipulate.
Note: this section is aimed at the reader interested in understanding the internals of
our library: it can be safely skipped at first read.

The core construction provided by the library is a \gfpn operator (\gfptop{b}\ilc{:X})
for any complete lattice \ilc{X}, and monotone endofunction $b : \ilc{X} \to \ilc{X}$.
In particular, the sort of \coq propositions \ilc{Prop} forms a
complete lattice, as do any function
from an arbitrary type into a complete lattice---coinductive relations, of
arbitrary arity, over arbitrary types, can therefore be built using this combinator.
We most often instantiate \ilc{X} with the complete
lattice of binary relations over \ctreesn: \ilc{X := ctree E A -> ctree E A ->
  Prop} for fixed parameters \ilc{E} and \ilc{A}.
We write such binary relations as \ilc{rel(A,B)} for
\ilc{A -> B -> Prop}, and \ilc{rel(A)} for \ilc{rel(A,A)};
for instance: \ilc{X := rel(ctree E A)}.

The library provides tactic support for   coinductive proofs based on
Knaster-Tarski's theorem: any \pfpn is below the \gfpn.
Specialized to relations on \ctreesn, the proof
method consists in exhibiting a relation \ilc{R}, that can be thought of as a
set of pairs of trees, providing a ``coinduction candidate'', and
proving that \ilc{(forall t u, R t u -> b R t u) -> forall t u, R t u -> gfp b t
  u}.
The major benefit of the companion is, however, to further provide support for
proofs by \emph{enhanced coinduction}.

Given an endofunction $b$, a (sound) enhanced coinduction principle,
also known as an \emph{up-to principle}, relies on an additional function
$f$\ilc{: X -> X} allowing one to work with \ilc{bf} (the composition of $b$ with $f$) instead of $b$:
any \pfpn of $bf$ is below the \gfpn of $b$. Concretely, the user has now access to a new proof principle.
Rather than having to ``fall back'' exactly into their coinduction hypothesis after ``stepping''
through $b$, they may first apply $f$. In the case of a coinductive relation,
simple examples of up-to principles include adding the diagonal or swapping the arguments,
allowing one to conclude by reflexivity or invoke symmetry regardless of the coinduction candidate considered.

Significant effort has been invested in identifying classes of sound up-to
principles, and developing ways to combine them~\cite{sangiorgi98,pous07,SR12}.
The companion relies on one particular sub-class of sound principles that forms a complete lattice,
the \emph{compatible} functions: we hence can conduct
proofs systematically up-to the greatest compatible function, dubbed the
\textit{companion} and written \compa{b}.
In practice, this means that
 a coinductive proof can, on-the-fly, use any
valid up-to principle drawn from the companion.
We refer the interested reader to the literature~\cite{companion} and
our formal development for further details, and
describe below the  specialization to \ctreesn of the up-to functions we use.

\begin{figure}
\begin{mathpar}
  \reflC R \defeq \{(x,x)\} \and
  \symC R \defeq \{(y,x)~\mid~R~x~y\} \and
  \transC R \defeq \{(x,z)~\mid~\exists y,~R~x~y\land R~y~z\} \and
  \visC R \defeq \{(\Vis{e}{k},\Vis{e}{k'})~\mid~\forall v,~R~(k~v)~(k'~v)\}\and
  \bindC{\sc{equiv}}~ R \defeq \{(x\bindc k,y\bindc l)~\mid~\sc{equiv}~x~y \land \forall v,R~(k~v)~(l~v)\}\and
  \uptorelC{\sc{equiv}}~ R \defeq \{(x,y)~\mid~\exists
  x'~y',~\sc{equiv}~x~x'\land R~x'~y'\land~\sc{equiv}~y'~y\}
\end{mathpar}
\caption{Main generic up-to principles used for relations of \ctreesn where
  \ilc{R : rel (ctree E X)}}
\label{fig:upto}
\end{figure}

Figure~\ref{fig:upto} describes the main generic up-to principles we use for our
relations on \ctreesn.\footnote{These are the core examples of library level up-to principles we provide. For our \ccs case study, we also prove the traditional language level ones.}
Note that since we are considering relations on \ctreesn, each of these principles, these candidates ``$f$'', are endofunctions of relations: for instance, \reflC is the constant diagonal relation, \symC builds the symmetric relation, \transC is the composition of relations.
The validity of \reflC, \symC and \transC
for a given endofunction $b$ entails respectively the reflexivity, symmetry and
transitivity of the relations $(b\compa{b}\;R)$ and $(\compa{b}\;R)$.
These two relations are precisely the ones involved during a proof by
coinduction up-to companion: the former as our goal, the latter as our
coinduction hypothesis.
The \visC and \bindC{\_} up-to functions help when reasoning structurally, respectively
allowing to cross through \visn nodes and \texttt{bind} constructs during proofs by coinduction.
Finally, validity of the \uptorelC{\sc{equiv}} principle allows for rewriting
via the $equiv$ relation during coinductive proofs for $b$.

\subsection{Coinductive equality for \ctreen}

Coq's equality, \ilc{eq}, is not a good fit to express the structural equality
of coinductive structures---even the eta-law for a coinductive data structure does
not hold up-to \ilc{eq}. We therefore define, as is standard, a structural
equality\footnote{Note that for \itreesn, this relation corresponds to what \citeauthor{XZHH+20}
  dub as \emph{strong bisimulation}, and name \ilc{eq_itree}. We carefully avoid
this nomenclature here to reserve this term for the relation we define in Section~\ref{subsec:bisim}}
by coinduction \ilc{equ: rel(ctree E A)} (written \equ{} in infix).
The endofunction simply matches head constructors and behaves extensionally on continuations.

\begin{definition}{Structural equality (\linkt{Eq/Equ.v}{32})}
\begin{align*}
  \equn \defeq \gfp{R}
  & \{(\Ret{v},~\Ret{v})\}~\cup
    \{(\Vis{e}{k},~\Vis{e}{k'})~\mid~\forall v,~\gfpc{R}~(k~v)~(k'~v)\} ~\cup\\
  &\{(\brD{n}{k},~\brD{n}{k'})~\mid~\forall v,~\gfpc{R}~(k~v)~(k'~v)\} \cup
    \{(\brS{n}{k},~\brS{n}{k'})~\mid~\forall v,~\gfpc{R}~(k~v)~(k'~v)\}
\end{align*}
\end{definition}

The \equn relation raises no surprises: it is an equivalence relation, and is
adequate to
prove all eta-laws---for the \ctreen structure itself and for the \cofixes
we manipulate. Similarly, the usual monadic laws are established with respect to \equn.

\begin{lemma}{Monadic laws (\linkt{Eq/Equ.v}{838})}
\begin{mathpar}
  \Ret v \bindc k \equ k~v \and x \gets t\bind \Ret x \equ t \and (t \bindc k) \bindc l \equ t \bindc (\lambda x \Rightarrow k~ x \bindc l)
\end{mathpar}
\end{lemma}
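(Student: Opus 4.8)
The plan is to establish the three equations in order of increasing difficulty, all relative to the structural equality $\equ$, whose generating functional (the union of the four relation constructors in its definition) I will call $b$. The left-identity law $\Ret v \bindc k \equ k~v$ requires no coinduction: forcing one unfolding of the \ilc{bind} \ilc{cofix} and matching against $\Ret v$ rewrites the left-hand side to $k~v$, so the two sides have the same observation and are related by reflexivity of $\equ$. (As noted above, they are \emph{not} related by Coq's \ilc{eq}, which is precisely why the law is stated up-to $\equ$.) The remaining two laws are proved by guarded coinduction: I exhibit a relation that is a \pfpn of $b$ and conclude via Knaster--Tarski, using Pous's \texttt{coinduction} library; the only real care needed is to choose the coinduction candidate \emph{generalized over all continuations}, so that the coinductive step falls back into it.

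For right identity I take the candidate
\[ R_1 \defeq \{(x \gets t\bind \Ret x,~t) \mid t : \ctree{E}{A}\} \]
and show $R_1$ is a \pfpn of $b$, i.e. $R_1 \subseteq b~R_1$, by case analysis on the head constructor of $t$. If $t = \Ret r$, the left-hand side unfolds to $\Ret r$ by the left-identity law, so both sides are $\Ret r$ and we close by reflexivity (the \reflC up-to principle). If $t$ is $\Vis{e}{k}$, $\brS{n}{k}$, or $\brD{n}{k}$, the \ilc{bind} commutes past the head constructor, the two sides carry the same head constructor, and the residual pointwise obligation on the continuations is literally an instance of $R_1$ (taking $t := k~v$). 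All three constructor cases are thus discharged identically; \visC lets one cross the \visn node on the spot, though the \visn clause of $b$ already suffices.

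For associativity I take
\[ R_2 \defeq \{((t\bindc k)\bindc l,~t\bindc(\lambda x \Rightarrow k~x\bindc l)) \mid t,~k,~l\} \]
and again case-split on $t$. When $t = \Ret r$, the left-hand side unfolds in two steps and the right-hand side in one step to the \emph{same} term $(k~r)\bindc l$, so reflexivity closes the case (equivalently, one rewrites with left identity on each side via up-to-$\equ$). When $t$ is $\Vis{e}{j}$, $\brS{n}{j}$, or $\brD{n}{j}$, both \ilc{bind}s commute past the head constructor, the constructors on the two sides agree, and the residual continuation obligation is exactly $R_2$ with $t$ replaced by $j~v$ (respectively $j~i$), closing the coinduction.

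The main obstacle is not mathematical but bureaucratic. First, \ilc{bind} is a \ilc{cofix} over the negatively-presented \ctreen type, so every ``unfold \ilc{bind}'' step above is really an application of the \ctreen unfolding/eta lemma followed by rewriting up-to $\equ$ rather than \ilc{eq}; keeping that bookkeeping in order is the bulk of the work. Second --- and this is the place where a naive attempt fails --- the candidates $R_1$ and $R_2$ must be quantified over \emph{all} continuations (and all trees $t$), not just the ones appearing in the statement, or else the coinductive step has nothing to fall back into. Everything else is the textbook free-monad argument, with the three non-\ilc{Ret} constructors \visn, \brSn, and \brDn handled by one uniform pattern.
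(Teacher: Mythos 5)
Your proposal is correct and follows essentially the same route as the formal development the paper links to: left identity by a single forced unfolding of \ilc{bind}, and right identity and associativity by guarded coinduction with candidates generalized over all trees and continuations, discharged by case analysis on the head constructor. One small note: the paper's enhanced-coinduction lemma for $\equn$ lists \reflC, \symC, \transC, \bindC{\equ} and \uptorelC{\equ} but not \visC, so you are right to fall back on the \visn clause of the endofunction itself rather than relying on \visC there.
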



Of course, formal equational reasoning with respect to an equivalence relation
other than \ilc{eq} comes at the usual cost: all constructions introduced over
\ctreesn must be proved to respect \equn (in \coq parlance, they must be
\ilc{Proper}), allowing us to work  painlessly with setoid-based
rewriting.

Finally, we establish some enhanced coinduction principles for \equn.
\begin{lemma}{Enhanced coinduction for \equn (\linkt{Eq/Equ.v}{167})}

$\reflC$, $\symC$, $\transC$, $\bindC{\equ}$ and
$\uptorelC{\equ}$ provide valid up-to principles for $\equn$.
\end{lemma}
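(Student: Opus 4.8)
The plan is to validate each of the five functions through the criterion the \texttt{coinduction} library provides. Write $b$ for the monotone functional on the lattice of binary relations over \ctreesn whose greatest fixpoint is $\equn$. A candidate $f$ counts as a valid up-to principle as soon as it sits below the companion $\compa{b}$, and the standard way to establish that is to prove $f$ \emph{compatible}: $f$ monotone with $f\circ b\le b\circ f$. Since $b$ is a one-layer functional that matches on head constructors, every such check follows the same pattern: peel one layer of $b$ from the hypothesis, case-split on the head constructor of the trees in play---forcing the two sides to agree, and, for $\brSn$ and $\brDn$ nodes, to carry the same arity---and then rebuild one layer of $b$ applied to $f\,R$, discharging the induced continuation obligations.

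Four of the five are routine in this style. For $\reflC$ (the constant diagonal), compatibility amounts to the diagonal being a post fixpoint of $b$, which is the witness for reflexivity of $\equn$ and falls to a single case split. For $\symC$, $b$ is self-dual: swapping the two sides and transposing $R$ commute clause by clause. For $\transC$, from $(x,y),(y,z)\in b\,R$ a split on the head of $x$ forces the heads of $y$ and of $z$ to match and the continuations to compose pointwise. For $\uptorelC{\equ}$, use that $\equn=b(\equn)$ to peel the two $\equn$-steps surrounding the $R$-step: given $x\equ x'$, $(x',y')\in b\,R$ and $y'\equ y$, a split on the head of $x'$ makes all four trees share that head, leaving continuations related by $\uptorelC{\equ}\,R$. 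Granting the validity of $\reflC$, $\symC$, $\transC$ for $b$ then delivers the reflexivity, symmetry and transitivity of $(b\,\compa{b}\;R)$ and $(\compa{b}\;R)$ announced earlier.

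The substantive case is the bind closure $\bindC{\equ}$, and unlike the previous four a plain one-step compatibility check does not go through: once $\bind$ is pushed through a $\visn$ or branching head of the prefix, the residual children $k'\,w\bindc k$ need continuations related by $R$, whereas the hypothesis only supplies ones related by $b\,R$; this is the familiar fact that bisimulation-up-to-context alone is too strong, and it is why the $\equn$-wrapper on the prefixes is baked into the definition. A clean route is instead to show $\bindC{\equ}\circ\compa{b}\le\compa{b}$---which, since $\mathrm{id}\le\compa{b}$ and $\bindC{\equ}$ is monotone, yields $\bindC{\equ}\le\compa{b}$ at once---and to prove that remaining inclusion coinductively, taking the bind-context closure of $\compa{b}\,R$ as candidate and showing it is preserved by $b$ up to $\compa{b}$, by case analysis on the head of the prefix $x$ (which shares the head of $y$ since $x\equ y$). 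In the $\visn$, $\brSn$ and $\brDn$ cases one unfolding of $\bind$ commutes it through the node ($\Vis{e}{k'}\bindc k=\Vis{e}{(\lambda w.\,k'\,w\bindc k)}$, similarly for branches), and each residual child is again of bind shape with the same continuations, so the coinduction hypothesis closes it; in the $\Ret{v}$ case one unfolding (the established monadic law $\Ret{v}\bindc k\equ k\,v$) reduces the goal to the data we already have about $k\,v$ and $l\,v$, modulo rewriting by $\equn$---legitimate now that $\uptorelC{\equ}$ has been validated. I expect this last case to be the main obstacle: choosing the coinductive candidate so that it is closed under $\equn$ and so that a residual prefix which is itself a $\Ret$ keeps peeling $\bind$ until a node carrying a genuine step is exposed, all while staying inside the companion's accumulation machinery and keeping the finite-arity bookkeeping of $\brSn$ and $\brDn$ straight. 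The other four principles are, by comparison, immediate head-constructor case analyses.
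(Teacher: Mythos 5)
Your proposal is correct and follows essentially the same route as the formal development the paper defers to (the paper gives no in-text proof of this lemma, only the pointer to \texttt{Eq/Equ.v}): the four structural principles are discharged by a one-step compatibility check via head-constructor case analysis, while $\bindC{\equ}$ is handled by showing the bind-context closure lands in the companion through the library's coinduction-up-to-the-companion lemma, exactly because raw compatibility fails once the residual continuations are only $b\,R$-related. Your diagnosis of why the na\"ive check breaks for \bindn, and the use of $\mathrm{id}\le\compa{b}$ to conclude, match the mechanized argument.
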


While $\equn$, being a structural equivalence, is very comfortable to work with,
it, naturally, is much too stringent.  To reason semantically about \ctreesn, we need
a relation that remains termination sensitive but allows for differences in
internal steps, that still imposes a tight correspondence over external events,
but relaxes its requirement for nondeterministically branching nodes.  We achieve this by drawing
from standard approaches developed for process calculi.

\subsection{Looking at \ctreesn under the lens of labeled transition systems}
\label{subsec:LTS}

To build a notion of bisimilarity between \ctreen computations, we
associate a labeled transition system to a \ctreen, as defined in
Figure~\ref{fig:lts}.
This LTS exhibits three kinds of labels: a \ltau\footnote{We warn again the reader accustomed to \itreesn to think of \ltau under the lens of the process algebra literature, and not as a representation of \itreen's \ilc{Tau} constructor.}
witnesses a stepping branch,
an \lobs{e}{x} observes the encountered event together with the answer from the
environment considered, and a \lret{v} is emitted when returning a value.
Interestingly, there is a significant mismatch between the structure of the
tree and the induced LTS: the states of the LTS correspond to the nodes of the \ctreen \emph{that are not
  immediately preceded by a delayed choice}.
Accordingly, the definition of the transition relation between states inductively iterates over delayed branches.
Stepping branches and visible nodes map immediately to a set of transitions, one for each outgoing edge; finally
a return node generates a single \ret{} transition, moving onto a stuck state,
encoded as a nullary branching node and written \stuck.
These rules formalize the intuition we gave in Section~\ref{sec:nondeterminism} that allowed us to derive the LTSs of Figure~\ref{fig:choices}  from the corresponding \impbr terms.


\begin{figure}
  \begin{mathpar}
    \llabel \defeq \ltau \sep \lobs{e}{v} \sep \lret{v}\\
    \inferrule{\lstep{k~v}{l}{t}}{\lstep{\brD{n}{k}}{l}{t}}\and
    \inferrule{~}{\lstep{\brS{n}{k}}{\ltau}{k~v}}\and
    \inferrule{~}{\lstep{\Vis{e}{k}}{\lobs{e}{v}}{k~v}}\and
    \inferrule{~}{\lstep{\Ret{v}}{\lret{v}}{\stuck}}
  \end{mathpar}
  \caption{Inductive characterisation of the LTS induced by a \ctreen (\linkt{Eq/Trans.v}{69})}
  \label{fig:lts}
\end{figure}

Defining the property of a tree to be \textit{stuck}, that is: $\isstuck{t} \ \defeq \ \forall l~u,~\lnot (\lstep{t}{l}{u})$, we can make
the depictions from Figure~\ref{fig:stuck-spin} precise: nullary
nodes are stuck by construction, since stepping would require a
branch, while \spinDn is proven to be stuck by induction, since it cannot reach any
step. %
  \begin{mathpar}
    \isstuck{\brS{0}}
    \and
    \isstuck{\brD{0}}
    \and
    \isstuck{\spinDn~n}
  \end{mathpar}

The stepping relation interacts slightly awkwardly with \bindn: indeed,
although a unit for \bindn, the \ilc{Ret} construct is not inert from the
perspective of the LTS. Non \lretKW{} transitions can therefore be propagated
below the left-hand-side of a \bindn, while a \lretKW{} transition in the prefix
does not entail the existence of a transition in the
\bindn---Figure~\ref{lemma-bind} describes the corresponding lemmas.

\begin{figure}
  \begin{mathpar}
    \inferrule{\lstep{t}{l}{u} \quad l \not = \lret{v}}{\lstep{t \bindc k}{l}{u \bindc k}}
    \and
    \inferrule{\lstep{t}{\lret{v}}{\stuck} \quad \lstep{k~v}{l}{u}}{\lstep{t \bindc k}{l}{u}}
    \and
    \inferrule{\lstep{t \bindc k}{l}{u}}
    {(l \not = \lret{v} \land \exists
      t',\lstep{t}{l}{t'} \land u \equ t' \bindc k) \lor (\exists v,
      \lstep{t}{\lret{v}}{\stuck}\land \lstep{k~v}{l}{u})}
  \end{mathpar}
  \caption{Transitions under bind (\linkt{Eq/Trans.v}{953})}
  \label{lemma-bind}
\end{figure}


We additionally define the traditional \emph{weak transition} $\wstep{s}{l}{t}$ on
the LTS that can perform tau transitions before or after the $l$ transition
(and possibly be none if $l=\tau$).
This part of the theory is so standard that we can directly reuse parts of the
development for \ccs that Pous developed to illustrate the
companion~\cite{coq-coinduction-example},
with the exception that we need to work in a Kleene Algebra with a model closed
under \equn rather than \ilc{eq}.

\subsection{Bisimilarity}
\label{subsec:bisim}

Having settled on the data structure and its induced LTS, we are back on a
well-traveled road: strong bisimilarity (referred simply as
bisimilarity in the following) is defined in a completely standard way
over the LTS view of \ctreesn.

\begin{definition}[Bisimulation for \ctreesn (\linkt{Eq/SBisim.v}{62})]
  The progress function $sb$ for  bisimilarity maps a relation
  $\mathcal{R}$ over \ctreesn\ to the relation such that $sb~R~s~t$ holds if and only if:\\
  \begin{minipage}[c]{.6\textwidth}
    \[
      \forall l~s',~ \lstep{s}{l}{s'}
      \implies
      \exists t'.~ s' ~\mathcal{R}~ t'
      \land \lstep{t}{l}{t'}
    \]
    \begin{center}
      and conversely
    \end{center}
    \vspace{-1ex}
    \[
      \forall l~t',~ \lstep{t}{l}{t'}
      \implies
      \exists s'.~ s' ~\mathcal{R}~ t'
      \land \lstep{s}{l}{s'}
    \]
    \vspace{.5ex}
  \end{minipage}
  i.e. \qquad
  \begin{minipage}[c]{.35\textwidth}
    \begin{tikzpicture}
      \node (s) at (-1,1) {$s$};
      \node (R) at (0,1) {$sb~\mathcal{R}$};
      \node (t) at (1,1) {$t$};
      \node (sp) at (-1,0) {$s^\prime$}
      edge [<-] node[auto] {$l$} (s);
      \node (Rp) at (0,0) {$\mathcal{R}$};
      \node (tp) at (1,0) {$t^\prime$}
      edge [<-] node[auto,swap] {$l$} (t);
    \end{tikzpicture}
  \end{minipage}

  Bisimilarity, written $s \sbisim t$, is defined as the
  \gfpn of $sb$: $\sbisimn \defeq \gfptop{sb}.$
\end{definition}

All the traditional tools
surrounding bisimilarity can be transferred to our setup.
We omit the details for spacing concerns, but additionally provide:
\begin{itemize}
\item weak bisimilarity, written $s \wbisim t$, derived from the
  definition of the weak transition (\linkt{Eq/WBisim.v}{88});
\item a characterization of the traces (represented as colists) of a \ctreen,
  used to define trace-equivalence (written $\treq$) (\linkt{Eq/Trace.v}{15});
\item strong simulations (written $\ssim$), defined as the \gfpn of the half
  game for strong bisimulation (\linkt{Eq/SSim.v}{31}).
\end{itemize}


\subsubsection{Core equational theory}

\begin{figure}
  \begin{mathpar}
    {\mprset { fraction ={===}}
      \inferrule {x = y} {\Ret{x} \sbisim \Ret{y}}
    }
    \and
    {\mprset { fraction ={===}}
      \inferrule {\forall x,~h~x\sbisim k~x} {\Vis{e}{h} \sbisim \Vis{e}{k}}
    }
    \and
    {\mprset { fraction ={===}}
      \inferrule{(\forall x,\exists y,~h~x\sbisim k~y) \land(\forall y,\exists x,~h~x\sbisim k~y)}{\brS{n}{h}\sbisim\brS{m}{k}}
    }
    \and
    \inferrule{(\forall x,\exists y,~h~x\sbisim k~y) \land(\forall y,\exists x,~h~x\sbisim k~y)}{\brD{n}{h}\sbisimR\brD{m}{k}}
    \and
    \inferrule{t \sbisim u \and (\forall x,~g~x\sbisim k~x)}{t\bindc g \sbisim u \bindc k}
    \\
    \guard~t\sbisim t
    \and
    \inferrule{\isstuck{u}}{\brDtwo{t}{u} \sbisim t}
    \and
    \brDtwo{t}{(\brDtwo{u}{v})} \sbisim \brDtwo{(\brDtwo{t}{u})}{v}
    \and
    \brDtwo{t}{u} \sbisim \brDtwo{u}{t}
    \and
    \brDtwo{t}{t} \sbisim t
    \and
    \brDtwo{(\brDtwo{t}{u})}{v} \sbisim \brDthree{t}{u}{v}
    \and
    \brStwo{t}{u} \sbisim \brStwo{u}{t}
    \and
    \brStwo{t}{t} \sbisim \step~t
    \and
    \step~t\wbisim t
    \and
    \spinDn~n \sbisim \spinDn~m
    \and
    \inferrule{(n>0 \land m>0)\lor (n = m = 0)}{\spinSn~n \sbisim \spinSn~m}
    \and
  \end{mathpar}

  \caption{Elementary equational theory for \ctreesn (\linkt{Eq/SBisim.v}{1008})}
  \label{fig:sbisim-laws}
\end{figure}

Bisimilarity forms an equivalence relation satisfying a collection of
primitive laws for \ctreesn summed up in Figure~\ref{fig:sbisim-laws}.
We use simple inference rules to represent an implication from the
premises to the conclusion, and double-lined rules to represent equivalences.
Each rule is proved as a lemma with respect to the definitions above.

The first four rules recover some structural reasoning on the syntax of the
trees from its semantic interpretation. These rules are much closer to what
\euttn provides by construction for \itreesn:
leaves are bisimilar if they are equal, and computations performing the same
external interaction must remain point-wise bisimilar.
Stepping branches, potentially of distinct arity, can be matched one against
another if and only if both domains of indexes can be injected into the other to
reestablish bisimilarity. In contrast, this condition is sufficient but not
necessary for delayed branches, since the points of the continuation
structurally immediately accessible do not correspond to accessible states in
the LTS. Finally, bisimilarity is a congruence for \bindn.

Another illustration of this absence of equivalence for delayed branching nodes
as head constructor is that such a computation may be strongly bisimilar to a
computation with a different head constructor.
The simplest example is that \sbisimn{} can ignore (finite numbers of) \guard
nodes.
Another example is that stuck processes behave as a unit for delayed branching nodes.
We furthermore obtain the equational theory that we  expect for
nondeterministic effects. Delayed branching is associative, commutative, idempotent,
and can be merged into delayed branching nodes of larger
arity w.r.t. \sbisimn{}.\footnote{Stating these facts generically in the arity of branching is quite
  awkward, we hence state them here for binary branching, but adapting them at
  other arities is completely straightforward.}
In contrast, stepping branches are only commutative, and almost idempotent,
provided we introduce an additional \step.
This \step can  in turn be ignored by moving to weak bisimilarity, making stepping branches commutative and properly idempotent; however, it crucially remains \emph{not} associative, a standard fact in process algebra.\footnote{This choice would be referred as external in this community}

Finally, two  delayed spins are always
bisimilar (neither process can step) while two stepping spins
are bisimilar if and only if they are  both
nullary (neither one can step), or both non-nullary.

We omit the formal equations for sake of space here, but we additionally prove
that the \ilc{iter} combinator deserves its name: the Kleisli category of the
\ilc{ctree E} monad is iterative w.r.t. strong bisimulation (\linkt{Eq/IterFacts.v}{1}).
Concretely, we prove that the four equations described in \cite{XZHH+20},
Section~4, hold true.
The fact that they hold w.r.t. strong bisimulation is a direct consequence of the design choice taken in our definition of \ilc{iter}: recursion is guarded by a \guard.
We conjecture that one could provide an alternate iterator guarding recursion by
a \step, and recover the iterative laws w.r.t. weak bisimulation, but have not
proved it and leave it as future work.

Naturally, this equational theory gets trivially lifted at the language level for \impbr (\linkimp{ImpBr.v}{125}).
The acute reader may notice that in exchange for being able to work with strong
bisimulation, we have mapped the silently looping program to
\ilc{spinD}, hence identifying it with stuck processes.
For an
alternate model observing recursion, one would need to investigate the use of
the alternate iterator mentioned above and work with weak bisimilarity.

\subsubsection{Proof system for bisimulation proofs}

As is usual, the laws in Figure~\ref{fig:sbisim-laws}, enriched with
domain-specific equations, allow for deriving further equations purely
equationally.  But to ease the proof of these primitive laws, as well
as new nontrivial equations requiring explicit bisimulation proofs, we
provide proof rules that are valid during bisimulation proofs.

\begin{figure}
  \begin{mathpar}
    \inferrule{}{\Ret{v}\sbisimR\Ret{v}}
    \\\\
    \inferrule{\forall v,~R~(k~v)~(k'~v)}{\Vis{e}{k}\sbisimR\Vis{e}{k'}}
    \and
    \inferrule{(\forall x,\exists y,~R~(k~x)~(k'~y)) \land(\forall y,\exists x,~R~(k~x)~(k'~y))}{\brS{n}{k}\sbisimR\brS{m}{k'}}
    \and
    \inferrule{\forall v,~R~(k~v)~(k'~v)}{\brS{n}{k}\sbisimR\brS{n}{k'}}
    \and
    \inferrule{(\forall x,\exists y,~(k~x)\!\sbisimR\!(k'~y)) \!\land\!(\forall y,\exists x,~(k~x)\sbisimR(k'~y))}{\brD{n}{k}\!\sbisimR\!\brD{m}{k'}}
    \and
    \inferrule{t\sbisimR u}{\guard~t\!\sbisimR\!\guard~u}
    \and
    \inferrule{t~R~u}{\step~t\!\sbisimR\!\step~u}
  \end{mathpar}
  \caption{Proof rules for coinductive proofs of \sbisimn (\linkt{Eq/SBisim.v}{673})}
  \label{fig:sbisim-upto}
\end{figure}
Given a bisimulation candidate $R$,
we write $t \sbisimR u$ for $sb~(\compa{sb}~R):$ one needs to ``play the
bisimulation game'' by crossing $sb,$ and may rely on $R$ to conclude by
coinduction, while furthermore being able to exploit the companion $\compa{sb}$
to leverage sound up-to principles.
We depict the main rules we use in Figure~\ref{fig:sbisim-upto}. These proof
rules are notably convenient because they avoid an
exponential explosion in the number of cases in our proofs, which otherwise
would arise due to the systematic binary split entailed by the natural way to
play the bisimulation game.
These rules essentially match up counterpart \ctreen constructors at the level of bisimilarity,
but additionally make a distinction as to whether applying the rule soundly acts
as playing the game---i.e., the premises refer to $R$, allowing to
conclude using the coinduction hypothesis---or whether they do not---, i.e., the
premises still refer to $\sbisimR$.
The latter situation arises when using the proof rules that strip off
delayed branches from the structure of our trees: on either side, they do not
entail any step in the corresponding LTSs, but rather correspond to recursive calls to its
inductive constructor.

Furthermore, we provide a rich set of valid up-to principles:
\begin{lemma}{Enhanced coinduction for \sbisimn (\linkt{Eq/SBisim.v}{219})}
  \label{lem:sbisim-upto}
  The functions $\reflC$, $\symC$, $\transC$, $\visC$, $\bindC{\sbisim}$,
  $\uptorelC{\equ}$ and $\uptorelC{\sbisim}$ provide valid up-to principles for $\sbisimn$.
\end{lemma}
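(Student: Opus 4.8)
The plan is to carry out the whole argument inside Pous's companion framework: showing that a monotone function $f$ is a valid up-to principle for $\sbisimn = \gfptop{sb}$ amounts to showing $f$ lies below the companion $\compa{sb}$, and for that it suffices to establish the $sb$-compatibility inequality $f \circ sb \subseteq sb \circ f$ (or, for the structural-closure functions, the companion-saturated relaxation $f \circ sb \subseteq sb \circ \compa{sb} \circ f$, which still entails $f \le \compa{sb}$ since $sb \le \compa{sb}$ and $\compa{sb}$ is idempotent and closed under composition). Monotonicity of each function in Figure~\ref{fig:upto} is immediate, as all of them are built positively out of $\cup$, $\wedge$, $\forall$ and $\exists$. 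So the real content is the compatibility check, done case by case on a transition of the left component, using the inductive characterisation of the LTS from Figure~\ref{fig:lts}.

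For $\reflC$, $\symC$, $\transC$, $\uptorelC{\equ}$ and $\uptorelC{\sbisim}$ this is routine diagram chasing on the bisimulation game. $\reflC$ is compatible because the diagonal is a bisimulation (match each transition by itself). $\symC$ is compatible because the $sb$ game is symmetric by construction, so $sb$ commutes with relational converse, i.e. $(sb\,R)^{-1} = sb(R^{-1})$. $\transC$ is compatible because matchings compose: from $\lstep{s}{l}{s'}$ one chases through an intermediate state to produce $\lstep{t}{l}{t'}$ together with the two witnesses whose composition relates $s'$ and $t'$; the converse clause is symmetric. $\uptorelC{\sbisim}$ is compatible by the standard soundness argument for bisimulation up to (strong) bisimilarity: unfold $\sbisim = sb(\sbisim)$ at the two ends and again compose matchings. $\uptorelC{\equ}$ is handled identically, the only extra ingredient being that the transition relation of Figure~\ref{fig:lts} respects $\equn$ (if $s \equ s'$ and $\lstep{s}{l}{u}$ then $\lstep{s'}{l}{u'}$ for some $u' \equ u$), which lets $\equ$ slide past transitions; these last two also use $\equ \subseteq \sbisim$.

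The two interesting functions are the structural closures. For $\visC$ the point is that a $\visn$ node admits exactly one shape of transition, $\lstep{\Vis{e}{k}}{\lobs{e}{v}}{k~v}$; matching it against $\Vis{e}{k'}$ forces the target $k'~v$, and the side condition of $\visC$ is precisely that the continuations are pointwise related, so the matched targets land in $sb$ applied to the candidate, which is exactly where the companion-saturated form of the criterion is used. The genuine obstacle, and the heaviest case, is $\bindC{\sbisim}$. Given $(x \bindc k, y \bindc l)$ with $x \sbisim y$ and the continuations pointwise related, a transition out of $x \bindc k$ is dissected with the three transition-under-bind lemmas of Figure~\ref{lemma-bind}. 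In the first case the transition comes from a non-$\lret$ transition $\lstep{x}{l}{x'}$ propagated under the bind (with target only $\equ$-equal to $x' \bindc k$); we use $x \sbisim y$ to obtain a matching $\lstep{y}{l}{y'}$ with $x' \sbisim y'$, propagate it on the right, note the resulting pair is again in $\bindC{\sbisim}$ (same continuations, new prefixes $x' \sbisim y'$), and absorb the stray $\equ$ using that the bind combinator is a congruence for $\equn$, i.e. via the $\uptorelC{\equ}$ closure. In the second case $\lstep{x}{\lret{v}}{\stuck}$ and then $\lstep{k~v}{l}{w}$; since the label $\lret{v}$ must be preserved, $x \sbisim y$ yields $\lstep{y}{\lret{v}}{\stuck}$, and the pointwise relation on continuations turns $\lstep{k~v}{l}{w}$ into a matching transition of $l~v$, hence of $y \bindc l$ again by Figure~\ref{lemma-bind}; the subtlety is that this hand-off leaves the bind context entirely, so the closure only subsumes a plain related pair up to $\equ$, which is exactly what forces the companion-aware formulation. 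The converse direction is symmetric, and apart from this bookkeeping the whole proof is a direct unfolding of definitions.
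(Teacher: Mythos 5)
Your proposal is correct and follows essentially the same route as the paper's development: each function is placed below the companion via a compatibility-style check, arguing by case analysis on the inductively characterised LTS transitions and relying on the bind-inversion lemmas of Figure~\ref{lemma-bind} and on transitions respecting $\equn$. One caveat: the sufficient condition you state for the structural closures, $f \circ sb \subseteq sb \circ \compa{sb} \circ f$, is literally too strong to close the $\bindC{\sbisim}$ case in which the game hands off to the continuations---there the matched targets land in the bare candidate $R$, which need not sit inside $\compa{sb}(\bindC{\sbisim}(R))$---so the standard repair is to prove the inequality for $f \sqcup \mathrm{id}$ (equivalently, to play the second-order game using the companion's closure under joins and composition), which your own case analysis already implicitly does.
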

In particular, the equation $\guard~t\sbisim t$ can be used for rewriting during
bisimulation proofs, allowing for asymmetric stripping of guards.

\subsection{Eliminating nondeterministic stepping choices}
\label{subsec:guarded}

Out of programming convenience, \ctreesn offer both stepping and delayed
branching of arbitrary arity.
However, as hinted at in Section~\ref{sec:ctrees}, this is
superfluous: a stepping branch can always be simulated by delaying the same
choice, and guarding each branch by a unary \step. We make this intuition formal
by proving that the following combinator, precisely performing this
transformation, preserves the behavior of the computation with respect to \sbisimn.

\begin{definition}[Elimination of nondeterministic stepping choice (\linkt{Misc/brSElim.v}{13})]
~
\begin{lstlisting}[style=customcoq]
Definition BrSElim {E X} (t : ctree E X) : ctree E X :=
  iter (fun t => match t with
                 | Ret  r   => Ret (inr r)
                 | brD  k   => x <- brd;; Ret (inl (k x))
                 | brS  k   => x <- brd;; Step (Ret (inl (k x)))
                 | Vis  e k => x <- trigger e;; Ret (inl (k x))
                 end) t.
\end{lstlisting}
\end{definition}

\begin{lemma}{Stepping nondeterministic branches can be eliminated (\linkt{Misc/brSElim.v}{249})}
\quad  \(\forall t,~\ilc{BrSElim}~t\sbisim t\)
\end{lemma}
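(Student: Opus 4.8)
The plan is to establish this by coinduction for $\sbisim$: we take $R \defeq \{(\ilc{BrSElim}~t,\, t) \mid t : \ctree{E}{X}\}$ as the bisimulation candidate and show $\ilc{BrSElim}~t \sbisimR t$ for every $t$, freely using the up-to principles of Lemma~\ref{lem:sbisim-upto}---crucially up-to $\sbisim$, which lets us asymmetrically strip the $\guard$ nodes that $\ilc{iter}$ introduces and thereby keep $R$ this simple.

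The first step is a one-step unfolding of $\ilc{BrSElim}$: unfolding $\ilc{iter}$ once, applying the monadic laws, and commuting $\ilc{bind}$ past the head constructor of $t$ using the reduction rules of Figure~\ref{fig:defs}, one obtains, up to $\equn$:
\begin{mathpar}
  \ilc{BrSElim}~(\Ret{r}) \equ \Ret{r} \and
  \ilc{BrSElim}~(\Vis{e}{k}) \equ \Vis{e}{\lambda x.\,\guard~(\ilc{BrSElim}~(k~x))} \\
  \ilc{BrSElim}~(\brD{n}{k}) \equ \brD{n}{\lambda x.\,\guard~(\ilc{BrSElim}~(k~x))} \and
  \ilc{BrSElim}~(\brS{n}{k}) \equ \brD{n}{\lambda x.\,\step~(\guard~(\ilc{BrSElim}~(k~x)))}
\end{mathpar}
the extra $\guard$ in the last three cases coming from the guarded recursive call inside $\ilc{iter}$, and the arity $n$ being preserved throughout.

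One then proceeds by case analysis on $t$. The $\Ret$ case is immediate. For $\Vis{e}{k}$ and $\brD{n}{k}$, after rewriting with the unfolding above (licit since $\equn$ may be used as an up-to step), the matching proof rules of Figure~\ref{fig:sbisim-upto} apply, leaving for each $x$ the obligation to relate $\guard~(\ilc{BrSElim}~(k~x))$ and $k~x$ in the companion; this is discharged by rewriting with $\guard~u \sbisim u$---legitimate during bisimulation proofs by the remark following Lemma~\ref{lem:sbisim-upto}---and then invoking the coinduction hypothesis $\ilc{BrSElim}~(k~x) \mathrel{R} k~x$. The interesting case is $t = \brS{n}{k}$, where one must show $\brD{n}{\lambda x.\,\step~(\guard~(\ilc{BrSElim}~(k~x)))} \sbisim \brS{n}{k}$ by playing the bisimulation game directly, since the two sides have different head constructors. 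By the LTS rules of Figure~\ref{fig:lts}, $\brS{n}{k}$ has exactly $n$ transitions, the $x$-th being a $\tau$-step to $k~x$; on the left, the inductive rule for $\brDn$ descends into each of the $n$ branches, and $\step~u$---a unary $\brSn$---contributes exactly one transition, a $\tau$-step to $u = \guard~(\ilc{BrSElim}~(k~x))$. The two LTSs thus have matching transitions (all labelled $\tau$, none visible or returning), so each pair of successors $\guard~(\ilc{BrSElim}~(k~x))$ and $k~x$ is related---again after stripping the guard---by the coinduction hypothesis; the case $n = 0$ degenerates to two stuck processes and holds vacuously.

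The main obstacle is the bookkeeping in the unfolding step: commuting $\ilc{iter}$ and $\ilc{bind}$ past each head constructor while tracking the $\step$ and $\guard$ layers precisely, all within the \ilc{Proper}-with-respect-to-$\equn$ discipline so that the successive rewrites compose cleanly. Once the unfolding equations are in place, the transition analysis in the $\brS$ case is routine, and the asymmetric guard-stripping permitted during bisimulation proofs is exactly what spares us from having to close $R$ under finite $\guard$-prefixes.
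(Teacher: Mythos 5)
Your overall setup---the candidate $R \defeq \{(\ilc{BrSElim}~t,\,t)\}$, the one-step unfolding of $\ilc{iter}$ up to $\equn$, and the treatment of the $\Ret$, $\Vis$, and $\brS{n}{}$ cases---matches the paper's proof, and your unfolding equations and the transition analysis in the $\brS{n}{}$ case are correct. The gap is in the $\brD{n}{k}$ case. You treat it in parallel with the $\Vis$ case, applying the structural rule of Figure~\ref{fig:sbisim-upto} and then ``invoking the coinduction hypothesis.'' But the $\brD{}{}$ rule is precisely the one whose premises refer to $\sbisimR$ rather than to $R$: crossing a delayed branch is not a step in the LTS, so it does not count as playing the game, and the coinduction hypothesis is \emph{not} available afterwards. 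After applying that rule and stripping the $\guard$, you are left with the obligation $\ilc{BrSElim}~(k~x) \sbisimR k~x$ for an arbitrary subtree $k~x$---the same shape of goal you started from, with no decreasing measure (the delayed branches of $t$ may be infinitely deep, as in $\tm{spinD}$). The argument does not close.

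The paper resolves this by abandoning structural reasoning in the $\brD{}{}$ case and instead playing the game against the induced LTS, via two auxiliary transport lemmas proved by induction on the transition derivation: a transition $\lstep{\ilc{BrSElim}~t}{l}{u}$ yields a transition $\lstep{t}{l}{t'}$ with $u \sbisim \ilc{BrSElim}~t'$, and conversely. These lemmas absorb the finitely-many-but-unboundedly-deep delayed branches that a single LTS step may traverse; with them, each transition of either side is matched and the successors land in $R$ up to $\sbisim$, which the companion accepts. Your proposal needs this (or an equivalent inductive characterization of transitions through $\ilc{BrSElim}$) to be complete.
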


The proof of this statement is quite illustrative of bisimulation proofs over
\ctreesn.
We provide its sketch \arxiv{in the appendix}{in the extended version~\cite{arxiv}} as a means to illustrate the layers of facilities put in place in the formal development
to allow for the translation of paper intuitions, most notably the enhanced
coinduction principles that the companion allows us to build. It also
illustrates the difficulties that the structure raises when bridging the distance between the syntax (the tree) and its implicit semantics (the LTS).


\section{Interpretation from and to \ctreesn}
\label{sec:interp}
The \itreen ecosystem fundamentally relies on the incremental interpretation of
effects, represented as external events, into their monadic implementations.
Through this section, we show how \ctreesn fit into this narrative both by
supporting the interpretation of their own external events, and by being
a suitable target monad for \itreesn, for the implementation of nondeterministic
branching.

\subsection{Interpretation}
\label{subsec:interp}

\itreesn support interpretation: provided a \emph{handler} \ilc{h:E ~> M}
implementing its signature of events \ilc{E} into a suitable monad \ilc{M},
the \ilc{(interp h):itree E ~> M} combinator provides an implementation of
any computation into \ilc{M}.
The only restriction imposed on the target monad \ilc{M} is that it must support
its own \ilc{iter} combinator, i.e., be iterative, so that the full
coinductive nature of the tree can be internalized in \ilc{M}.
For this implementation to be sensible and amenable to verification in
practice, one must, however, check an additional property: \ilc{interp h}
should form a monad morphism---in particular, it should map \euttn
\itreesn to equivalent monadic computations in \ilc{M}.

\begin{figure}
  \begin{lstlisting}[style=customcoq,basicstyle=\small\ttfamily]
    Definition interp (h : E ~> M) : ctree E ~> M := fun R =>
      iter (fun t => match t with
      | Ret r   => ret (inr r)
      | BrD n k => bind (mBrD n) (fun x => ret (inl (k x)))
      | BrS n k => bind (mBrS n) (fun x => ret (inl (k x)))
      | Vis e k => bind (h e)       (fun x => ret (inl (k x)))
      end).
  \end{lstlisting}
  \caption{Interpreter for \ctreesn (class constraints omitted) (\linkt{Interp/Interp.v}{32})}
  \label{fig:interp}
\end{figure}

Unsurprisingly, given their structure, \ctreesn enjoy their own \ilc{interp}
combinator. Its definition, provided in Figure~\ref{fig:interp}, is very close
to its \itreen counterpart.  The interpreter relies on the \ctreen version of
\ilc{iter}, chaining the implementations of the external events in the process.
The target monad must naturally still be iterative, but must also explain how it
internalizes branching nodes through the \ilc{mBrD} and \ilc{mBrS} operations.

Perhaps more surprisingly, the requirement that \ilc{interp h} define a monad
morphism unearths interesting subtleties.
Let us consider the elementary case where the interface \ilc{E} is implemented in terms
of (possibly pure) uninterpreted computations, that is when \ilc{M := ctree F}.
The requirement becomes: $\forall t~ u, t \sbisim u \rightarrow \ilc{interp h}~t
\sbisim \ilc{interp h}~u.$
But this result does not hold for an arbitrary \ilc{h}: intuitively, our
definition for \sbisimn has implicitly assumed that implementations of external
events may eliminate reachable states in the computation's induced LTS---through pure implementations---but should not be allowed to introduce new ones.

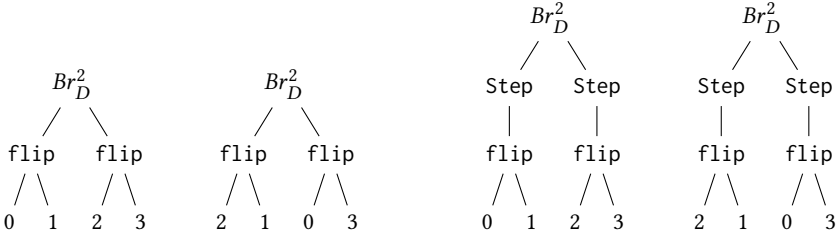
\begin{figure}
  {\small
    \hfil
    \begin{forest}
      [\brDtwo{}{}
      [\ebr [$0$] [$1$]]
      [\ebr [$2$] [$3$]]]
    \end{forest}
    \hfil
    \begin{forest}
      [\brDtwo{}{}
      [\ebr [$2$] [$1$]]
      [\ebr [$0$] [$3$]]]
    \end{forest}
    \hfil
  }
  {\small
    \hfil
    \begin{forest}
      [\brDtwo{}{}
      [\step [\ebr [$0$] [$1$]]]
      [\step [\ebr [$2$] [$3$]]]]
    \end{forest}
    \hfil
    \begin{forest}
      [\brDtwo{}{}
      [\step [\ebr [$2$] [$1$]]]
      [\step [\ebr [$0$] [$3$]]]]
    \end{forest}
    \hfil
  }
  \caption{Two strongly bisimilar trees before interpretation (left), but not after (right)}
  \label{fig:interp-counter}
\end{figure}

The counter-example in Figure~\ref{fig:interp-counter},
where $\ebr$ is the binary event introduced in Section~\ref{sec:nondeterminism},
fleshes out this intuition. Indeed, both trees are strongly bisimilar:
each of them can emit the label $\lobs{\ebr}{\false}$ by stepping to either the
$\Ret{0}$ or $\Ret{2}$ node, or emit the label
$\lobs{\ebr}{\true}$ by stepping to either the
$\Ret{1}$ or $\Ret{3}$ node.
However, they are  strongly bisimilar because the induced LTS  processes the question to the environment---$\ebr$--- and its
answer---$\false/\true$---in a single step, such that the computations never
observe that they have had access to distinct continuations.
However, if one were to introduce in the tree a \step node before the external
events, for instance using the handler \ilc{h := fun e => Step (trigger e)},
a new state allowing for witnessing the distinct continuations would become
available in the LTSs, leading to non-bisimilar interpreted trees.

We hence say a handler \ilc{h : E ~> ctree F} is \emph{simple} if it implements
each event \ilc{e} either as a pure leaf, i.e., \ilc{h e = Ret x}, or simply
translates it, i.e., \ilc{h e = trigger f}.
Similarly, we say that \ilc{h : E ~> stateT S (ctree F)} is simple if it is
point-wise simple.
We show that we recover the desired property for the subclass of simple handlers:

\begin{lemma}[Simple handlers interpret into monad morphisms
  (\linkt{Interp/Interp.v}{788})]
~\\  \vspace{-1em}
 \begin{itemize}
  \item If \ilc{h : E ~> ctree F} is simple, then $\forall t~u,~t\sbisim u \rightarrow$ \ilc{interp h} $ t \sbisim $ \ilc{interp h} $ u.$
  \item If \ilc{h : E ~> stateT (ctree F)} is simple, then $\forall
    t~u,~t\sbisim u \rightarrow \forall s,$ \ilc{interp h} $ t~s \sbisim $
    \ilc{interp h} $ u~s.$
  \end{itemize}
\end{lemma}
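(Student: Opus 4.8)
The plan is to prove the first bullet by coinduction and then obtain the \ilc{stateT} variant by re-running the same argument with a state parameter carried along inertly. For the first bullet I would play the bisimulation game with the candidate $R \defeq \{(x,y)\mid \exists t~u,~x\sbisim \ilc{interp h}~t \wedge t\sbisim u \wedge y\sbisim \ilc{interp h}~u\}$ --- that is, the obvious candidate $\{(\ilc{interp h}~t,\ilc{interp h}~u)\mid t\sbisim u\}$ saturated ``up to $\uptorelC{\sbisim}$'', which Lemma~\ref{lem:sbisim-upto} licenses --- and prove $R\subseteq sb~R$; symmetry of $R$ reduces the game to its forward half.

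Two auxiliary families of facts do the real work. First, \emph{unfolding lemmas} obtained from one step of \ilc{iter} together with the monadic laws for \equn: $\ilc{interp h}~(\brD{n}{k})\equ\brD{n}{(\lambda x\Rightarrow\guard~(\ilc{interp h}~(k~x)))}$ (and likewise for $\brS{}{}$), $\ilc{interp h}~(\Ret{r})\equ\Ret{r}$, and --- crucially using that $h$ is \emph{simple} --- either $\ilc{interp h}~(\Vis{e}{k})\equ\guard~(\ilc{interp h}~(k~x))$ when $h~e=\Ret{x}$, or $\ilc{interp h}~(\Vis{e}{k})\equ\Vis{f}{(\lambda v\Rightarrow\guard~(\ilc{interp h}~(k~v)))}$ when $h~e=\trigger{f}$. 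Second, a \emph{transition-correspondence lemma}, proved by induction on the (necessarily finite) derivation of $\lstep{\ilc{interp h}~t}{l}{X}$: such a transition always decomposes into a finite run $t=t_0\xrightarrow{\lobs{e_1}{x_1}}t_1\cdots\xrightarrow{\lobs{e_n}{x_n}}t_n$ of \emph{resolved} observable steps (each with $h~e_j=\Ret{x_j}$, steps that \ilc{interp h} silently contracts) followed by a single \emph{genuine} step of $t_n$ that \ilc{interp h} mirrors faithfully, with the event possibly renamed by $h$: $\lstep{t_n}{\ltau}{t'}$ goes with $l=\ltau$; $\lstep{t_n}{\lobs{e}{v}}{t'}$ with $h~e=\trigger{f}$ goes with $l=\lobs{f}{v}$; $\lstep{t_n}{\lret{r}}{\stuck}$ goes with $l=\lret{r}$ --- and $X\sbisim\ilc{interp h}~t'$ in each case ($\guard$ nodes introduced by \ilc{iter} and $\brD$ nodes of $t$ are transparent to transitions and are threaded through verbatim). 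I would also prove the symmetric ``backward'' lemma: a genuine transition of a tree lifts to the matching transition of its interpretation, whereas a resolved $\lobsKW$-step of a tree only contributes the equivalence $\ilc{interp h}~t\sbisim\ilc{interp h}~t'$.

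With these lemmas the forward game is pure bookkeeping. From $\lstep{x}{l}{X}$ and $x\sbisim\ilc{interp h}~t$ get $\lstep{\ilc{interp h}~t}{l}{X'}$ with $X\sbisim X'$; the correspondence lemma extracts the resolved run $t=t_0\to\cdots\to t_n$ and the genuine step $\lstep{t_n}{l_0}{t'}$; transport the run and the genuine step across $t\sbisim u$ (bisimilarity matches every label) to obtain $u=u_0\to\cdots\to u_n$ with $t_n\sbisim u_n$ and $\lstep{u_n}{l_0}{u'}$ with $t'\sbisim u'$; feed the $u$-side run through the backward lemma ($\ilc{interp h}~u\sbisim\ilc{interp h}~u_n$) and the genuine $u_n$-step through the backward lemma as well ($\lstep{\ilc{interp h}~u_n}{l}{\cdot}$, with the same renamed label because $u_n$ fires the same event $e$), obtaining $\lstep{y}{l}{Y}$ with $Y\sbisim\ilc{interp h}~u'$; now $X\sbisim\ilc{interp h}~t'$, $t'\sbisim u'$, $Y\sbisim\ilc{interp h}~u'$ place $(X,Y)\in R$.

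The main obstacle is exactly the phenomenon flagged in the discussion preceding the lemma, now internal to the proof: a $\Vis$ node whose event $h$ resolves to a pure leaf is \emph{deleted} from the induced LTS by interpretation, so a single step of $\ilc{interp h}~t$ need not match a single step of the tree $t$ but a finite run of resolved observable steps plus one genuine step. Formulating the correspondence lemma so that it isolates that run (finite because each contraction still spends a $\guard$ or a $\brD$ in the transition derivation), and making the $\guard$-bookkeeping line up --- discharged by rewriting with $\guard~t\sbisim t$, a valid up-to principle by Lemma~\ref{lem:sbisim-upto} --- is the delicate part. Simplicity of $h$ is precisely what forbids the bad alternative (a handler prefixing a $\step$ before the event, which would introduce a genuine new LTS state distinguishing the two continuations, as in Figure~\ref{fig:interp-counter}), and hence guarantees that the genuine step is mirrored one-for-one. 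The \ilc{stateT} case requires no new idea: a point-wise simple handler returns, at each state, either a pure pair (state possibly updated, no branching, no event) or a state-preserving translation of a single event, so the state is threaded through both correspondence lemmas without ever being forked, and the whole argument replays with an extra $S$-parameter.
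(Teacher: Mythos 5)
Your overall strategy---coinduction on a candidate closed under $\sbisim$, driven by a pair of transition-correspondence lemmas that decompose one step of $\ilc{interp h}~t$ into a finite run of handler-resolved $\lobsKW$-steps of $t$ followed by a single genuine step, with simplicity of \ilc{h} guaranteeing that every transition of the interpreted tree has a preimage---is sound, and it is the natural route given the infrastructure the paper sets up (the paper itself gives no prose proof, only the link to the formal development). You correctly isolate both where simplicity is used and why the decomposition terminates. One auxiliary claim, however, is false as stated: the ``backward'' lemma for a resolved step, which you phrase as ``a resolved $\lobsKW$-step of a tree only contributes the equivalence $\ilc{interp h}~t\sbisim\ilc{interp h}~t'$.'' Take $u=\brDtwo{\Vis{e}{k_1}}{\Vis{e'}{k_2}}$ with $h~e=\Ret{x}$ and $h~e'=\trigger{f}$: then $\lstep{u}{\lobs{e}{x}}{k_1~x}$ is a resolved step, but $\ilc{interp h}~u\equ\brDtwo{\guard~(\ilc{interp h}~(k_1~x))}{\Vis{f}{\cdots}}$ retains the $\Vis{f}{\cdots}$ branch and is not bisimilar to $\ilc{interp h}~(k_1~x)$ in general---only the one-sided relation $\ilc{interp h}~(k_1~x)\ssim\ilc{interp h}~u$ holds.

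The gap is localized and fixable, because your game only ever uses this lemma in the direction that survives: to lift a transition of $\ilc{interp h}~u_n$ to one of $\ilc{interp h}~u$. Restate the resolved-step backward lemma as a transition inclusion---if $\lstep{u_j}{\lobs{e}{x}}{u_{j+1}}$ with $h~e=\Ret{x}$, then every transition of $\ilc{interp h}~u_{j+1}$ is literally a transition of $\ilc{interp h}~u_j$ with the same label and target, since the resolved $\Vis$ node becomes a \guard sitting under the same delayed branches, all transparent to the LTS---and the rest of your bookkeeping goes through unchanged, yielding $\lstep{\ilc{interp h}~u}{l}{Y'}$ directly rather than up to $\sbisim$. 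Everything else checks out: the forward decomposition is well-founded by induction on the transition derivation (each contraction spends a \guard or a $\brDn$ constructor), the resolved run transports across $t\sbisim u$ with the same events and answers because the labels pin them down and the handler depends only on the event, and that same observation is what keeps the threaded state aligned on both sides in the \ilc{stateT} case.
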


The latter result is, in particular, sufficient to transport \impbr equations
established before interpretation---such as the theory of \cbr{$\_$}{$\_$}---through
interpretation (\linkimp{ImpBr.v}{174}). More generally, we can reason after
interpretation to establish equations relying both on the nondeterminism and
state algebras, for instance to establish the equivalence $p_3
\equiv \cbr{p_2}{p_3}$ mentioned in Section~\ref{sec:nondeterminism} (\linkimp{ImpBr.v}{213}).

\subsection{Refinement}
\label{subsec:refinement}

Interpretation provides a general theory for the implementation of external events.
Importantly, \ctreesn also support an analogous facility for the \emph{refinement} of its
internal branches: one can shrink the set of accessible paths in a computation---and, in particular, determinize it.

\begin{figure}
  \begin{lstlisting}[style=customcoq]
    Definition refine (h : bool -> forall (n:nat), M (fin n)) : ctree E ~> M := fun R =>
    iter (fun t => match t with
    | Ret r   => ret (inr r)
    | BrD n k => bind (h false n) (fun x => ret (inl (k x)))
    | BrS n k => bind (h true  n) (fun x => ret (inl (k x)))
    | Vis e k => bind (mtrigger e) (fun x => ret (inl (k x)))
    end).
  \end{lstlisting}
  \vspace{-2ex}
  \caption{Refiner for \ctreesn (class constraints omitted) (\linkt{Interp/Refine.v}{17})}
  \label{fig:refine}
  \vspace{-2ex}
\end{figure}

We provide, to this end, a new combinator, \ilc{refine}, defined in Figure~\ref{fig:refine}.
Similar to \ilc{interp}, it takes as an argument a handler specifying how to
implement our object of interest into a monad \ilc{M}. Since here we are
interested in the implementation of branching nodes, the handler describes how
to monadically choose a branch---the boolean flag allows for different
behaviors between delayed and stepping branches.\footnote{This definition
  becomes even more uniform with \ilc{interp} once we allow for parameterized,
  arbitrary indexing of branching nodes, as  in the enhanced \ctreesn branch of our  development.}
As usual, the implementation monad must be iterative, but this time it must also
specify how to re-embed the external events using an operation called \ilc{mtrigger} (a concept already used in \citeauthor{YZZ22}~\cite{YZZ22}).

\begin{figure}
  \begin{lstlisting}[style=customcoq]
    Definition refine_cst (h : bool -> forall (n:nat), fin (S n)) : ctree E ~> ctree E :=
    refine (fun b n => match n with
    | O => stuckD
    | S n => if b then Step (Ret (h b n)) else Guard (Ret (h b n))
    end).

    Definition refine_state (h : St -> forall (b:bool) (n:nat), St * fin (S n))
    : ctree E ~> stateT St (ctree E) :=
    refine (fun b n s => match n with O => stuckD | S n => Ret (h s b n)  end).
  \end{lstlisting}
  \caption{Constant and stateful refinements (\linkt{Interp/Refine.v}{51})}
  \label{fig:refinementa}
\end{figure}

As hinted at by the combinator's name,
the source program should be able to simulate the refined program.
Fixing \ilc{M} to \ilc{ctree F}, this is expressed as
$\forall t,~$\ilc{refine h}$~t \ssim t$.
However, one cannot hope to obtain such a result for an arbitrary \ilc{h}, because it
could implement internal branches with an observable computation that can't be
simulated by $t$.
We hence provide two illustrative families of well-behaved refinements, depicted on
Figure~\ref{fig:refinementa}.
Constant refinements, implemented into \ilc{ctree E}, lift a function systematically,
returning the same branch given the nature and arity of the branching node.
Stateful refinements carry a memory into a piece of state \ilc{St} and implement
the refinement into \ilc{stateT St (ctree E)} by using the current state to pick
a branch, and update the state. A particular example of such a stateful
refinement is a round-robin scheduler.

\begin{lemma}{The constant and stateful refinements are proper refinements
    (\linkt{Interp/Refine.v}{82})
    \footnote{The second theorem actually cannot
      be stated in the main development provided as artifact: the return types of the
      trees are not identical --- the refined tree computes additionally a final
      state. We prove it in the enhanced \ctreesn branch with
      support for heterogeneous relations and arbitrary relations on labels.}
  }
  \begin{align*}
    \forall h~t,~\ilc{refine_cst}~h~t\ssim t \qquad \mathrm{and} \qquad
    \forall h~t~s,~\ilc{refine_state}~h~t~s\ssim t    
  \end{align*}  \label{lem:refine}
\end{lemma}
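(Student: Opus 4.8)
The plan is to prove both statements by coinduction on the simulation $\ssim$, in the style of the $\ilc{BrSElim}$ lemma above. For the first statement I would take the candidate relation $\{(\ilc{refine_cst}~h~t,~t)\mid t\}$ and play the simulation game against it, allowing myself the up-to principles available for $\ssim$ (the analogues of those in Lemma~\ref{lem:sbisim-upto}: reflexivity, transitivity, and rewriting up to $\equ$ and up to $\sbisim$ --- the latter in particular lets me strip \guard{} nodes on either side). Because $\ssim$ is the ``half game'' of strong bisimilarity, there is a single obligation: whenever $\lstep{(\ilc{refine_cst}~h~t)}{l}{u'}$, exhibit $u$ with $\lstep{t}{l}{u}$ and $u'$ related to $u$ by the candidate (possibly after an up-to rewriting).

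The heart of the argument is a transition-characterisation lemma for refined trees. Unfolding $\ilc{refine_cst}$ one step through the $\equ$-equations for \ilc{iter} and \ilc{bind}, and using the guard/step equations together with the bind-transition lemmas of Figure~\ref{lemma-bind}, every transition of $\ilc{refine_cst}~h~t$ can be described in terms of one of $t$. I would prove this by induction on the derivation of $\lstep{\cdot}{l}{u'}$ --- legitimate since $\xrightarrow{l}$ is inductively defined, crawling through \brDn{} nodes --- after strengthening the statement to quantify over all trees $\equ$-equivalent to $\ilc{refine_cst}~h~t$, so that inversion on the derivation (which pins down the head constructor of the refined tree, hence the last rule used) leaves genuine sub-derivations for the induction hypothesis. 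The case analysis is on the head of $t$: for $t=\Ret~r$, $\ilc{refine_cst}~h~t\equ\Ret~r$ has the single move $\lstep{\Ret~r}{\lret{r}}{\stuck}$, matched by $t$ itself (and $\ilc{refine_cst}~h~\stuck\equ\stuck$); for $t=\Vis~e~k$, $\ilc{refine_cst}~h~t\equ\Vis~e~(\lambda v.\,\guard~(\ilc{refine_cst}~h~(k~v)))$, so a move $\lobs{e}{v}$ lands in $\guard~(\ilc{refine_cst}~h~(k~v))$, matched by $\lstep{t}{\lobs{e}{v}}{k~v}$ after stripping the \guard; for $t=\brS{n}{k}$ with $n=0$ both sides are stuck, and with $n=S~n'$ the handler's \step makes $\ilc{refine_cst}~h~t$ take exactly one \ltau{}-move, matched by selecting the branch of $t$ that $h$ picks; for $t=\brD{n}{k}$ with $n=0$ both sides are stuck, and with $n=S~n'$ the handler's \guard adds no transition, so every move of $\ilc{refine_cst}~h~t$ is already a move of $\ilc{refine_cst}~h~(k~v_0)$ (with $v_0$ the branch $h$ picks), to which the induction hypothesis applies, and composing with the delayed-branch rule yields the matching move of $t$. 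In each case the target of the matched move is, up to \guard-stripping and $\equ$, of the form $\ilc{refine_cst}~h~u$, so the candidate is reestablished and the coinduction closes. The choice of \step for stepping branches and \guard for delayed ones in $\ilc{refine_cst}$ is precisely what makes this work: a \brSn{} keeps its single silent step and a \brDn{} stays silent, so no spurious transition is created that $t$ could not match.

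The stateful case follows the same script at type $\stateT{St}{(\ctreef{E})}$, reasoning pointwise in the state: the handler of $\ilc{refine_state}$ is even tamer --- it implements every branching node by a pure $\Ret$, introducing neither branching nor \step --- so the transition characterisation is if anything simpler. The only genuine complication, flagged in the footnote to the statement, is that the refined computation additionally threads a final state, making the relation heterogeneous; this is discharged in the enhanced-\ctreesn development, which supports heterogeneous relations (and arbitrary relations on labels).

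I expect the main obstacle to be the bookkeeping of the induction on the transition derivation across the \ilc{iter}/\ilc{bind} unfoldings: one must reshape $\ilc{refine_cst}~h~t$ up to $\equ$ before the transition lemmas apply, while keeping both the inner induction and the outer coinduction hypothesis well-founded. This is exactly the ``distance between the tree and its implicit LTS'' difficulty the paper highlights for $\ilc{BrSElim}$; the payoff of the companion-based infrastructure is that \guard-stripping and $\equ$/$\sbisim$-rewriting can be used freely on both sides of the game, keeping the candidate as small as the one above.
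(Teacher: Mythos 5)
Your proposal is correct and matches the approach taken in the formal development (and mirrored in the paper's \ilc{BrSElim} proof sketch): a coinduction on the half game with the singleton-parametrised candidate, a transition-characterisation lemma proved by induction on the inductively defined transition relation after strengthening up to \equ{}, and the companion's up-to principles ($\uptorelC{\equ}$, $\uptorelC{\sbisim}$) to strip the \guard{}s introduced by \ilc{iter}. You also correctly identify the two points the paper relies on --- that only one transport direction is needed since $\ssim$ is the half game, and that the stateful case is deferred to the heterogeneous-relations branch because the refined tree additionally returns a state.
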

Finally, the shallow nature of \ctreesn also offers testing opportunities.
\citeauthor{XZHH+20}~\cite{XZHH+20} describe how external events such as IO interactions
can alternatively be implemented in \ocaml and linked against at extraction.
Similarly, we demonstrate on \impbr how to execute a \ctreen by running an impure refinement
implemented in \ocaml by picking random branches along the execution.

\subsection{ITree embedding}
\label{subsec:itree}

We have used \ctreesn directly as a domain to represent the syntax
of \impbr, as well as in our case studies (see Section \ref{sec:ccs} and \ref{sec:yield}).
\ctreesn can, however, fulfill their promise sketched in
Section~\ref{sec:background}, and be used as a domain to host the monadic implementation of
external representations of nondeterministic events in an \itreen.

\begin{figure}
  \begin{lstlisting}[style=customcoq,basicstyle=\small\ttfamily]
    Definition inject {E} : itree E ~> ctree E := interp (fun e => trigger e).
    Definition internalize {E} : ctree (Choose +' E) ~> ctree E :=
      interp (fun e => match e with | inl1 (choose n) => brs | inr1 e => trigger e).
    Definition embed {E} : itree (Choose +' E) ~> ctree E :=
      fun _ t => internalize (inject t).
    \end{lstlisting}
    \caption{Implementing external branching events into the \ctreen monad (\linkt{Interp/ITree.v}{24})}
   \label{fig:embed}
\end{figure}

To demonstrate this approach, we consider the family of events \ilc{choose (n: nat) : Choose (fin n)},
and aim to define an operator \ilc{embed} taking an \itreen computation modeling
nondeterministic branching using these events, and implementing them as stepping
branches into a \ctreen.
This operator, defined in Figure~\ref{fig:embed}, is the composition of two
transformations. First, we \ilc{inject} \itreesn into \ctreesn by (\itreen) interpretation.
This injection rebuilds the original tree as a \ctreen, where \itau nodes
have become \ilc{Guard} nodes, and an additional \ilc{Guard} has been introduced
in front of each external event.
Second, we \ilc{internalize} the external branching contained in a \ctreesn
implementing a \ilc{Choose} event, using the isomorphic stepping branch.
The resulting embedding forms a monad morphism transporting \euttn \itreesn into \sbisimn \ctreesn:
\begin{lemma}{\ilc{embed} respects \euttn (\linkt{Interp/ITree.v}{346})}
\label{lemma:embed}
\quad
\(
    \forall t~u, \euttn~ t~ u \implies embed(t)\sbisim embed(u)
\)
\end{lemma}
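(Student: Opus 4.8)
The plan is to exploit the decomposition $\ilc{embed} = \ilc{internalize} \circ \ilc{inject}$ from Figure~\ref{fig:embed} and prove each half in isolation: that \ilc{inject} maps $\euttn$ into $\sbisim$, and that \ilc{internalize} maps $\sbisim$ into $\sbisim$. Chaining the two implications then discharges the goal $\euttn~t~u \implies \ilc{embed}(t) \sbisim \ilc{embed}(u)$. Both \ilc{inject} and \ilc{internalize} are instances of \ilc{interp}, hence $\equ$-proper, so nothing extra is needed to glue the two steps together.

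For the first half, \ilc{inject} is \itreen interpretation with the trivial handler $\lambda e.\ \ilc{trigger}~e$: it rebuilds the source tree, turning each \itau{} node into a \guard{} and interspersing a few additional \guard{}s around \visn{} nodes, but leaving \visn{} events and $\tm{Ret}$ leaves otherwise intact. The essential point is that every such \guard{} is \emph{transparent} for the LTS of Figure~\ref{fig:lts}: it creates no LTS state and fires no transition. I would take the coinductive witness $\mathcal R$ of $\euttn~t~u$ and prove that $\{(\ilc{inject}~t',\ilc{inject}~u') \mid \mathcal R~t'~u'\}$ progresses under $sb$, working up-to $\equ$ and up-to \guard{} elimination---both licensed by Lemma~\ref{lem:sbisim-upto} ($\uptorelC{\equ}$ is valid and $\guard~t \sbisim t$ is usable for rewriting). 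The finite runs of \itau{} nodes that $\euttn$ may skip become finite \guard{} chains that the LTS merely walks through; matched \visn{} and $\tm{Ret}$ nodes stay matched once the surrounding \guard{}s are peeled on both sides; and two silently diverging \itreesn both become \spinDn{}, which is stuck and bisimilar to itself. (Alternatively, this half follows from the fact, inherited from the \itreen framework, that \ilc{interp} is a monad morphism of \itreesn into any iterative monad, instantiated here at $\ctreef{E}$ equipped with $\sbisim$, whose \ilc{iter} laws were recorded in Section~\ref{subsec:bisim}.)

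The second half is the delicate one, because the handler of \ilc{internalize} is \emph{not} \emph{simple} in the sense of the simple-handler lemma of Section~\ref{subsec:interp}: it sends $\tm{choose}~n$ to a stepping branch, which is neither a pure leaf nor a \ilc{trigger}. What rescues the argument is that, unfolding its \ilc{iter}, \ilc{internalize} turns a node carrying a $\tm{choose}~n$ event with continuation $k$ into $\brS{n}{(\lambda i.\ \guard~(\ilc{internalize}~(k~i)))}$, and a node carrying an external $E$-event $e$ into $\Vis{e}{(\lambda x.\ \guard~(\ilc{internalize}~(k~x)))}$---no new LTS state is ever created, so the net effect on the induced LTS is a mere \emph{relabeling}: each $\tm{choose}$ observation collapses onto a \ltau{} transition, each external observation is kept, and targets are recursively transformed. (Contrast a handler such as $\lambda e.\ \step~(\ilc{trigger}~e)$, which inserts a genuine new state and breaks $\sbisim$---exactly the counterexample of Figure~\ref{fig:interp-counter}.) Formally I would first establish a transition-correspondence lemma for \ilc{internalize}---$\lstep{t}{l}{t'}$ holds iff $\lstep{\ilc{internalize}(t)}{g(l)}{u}$ for some $u \equ \ilc{internalize}(t')$, where $g$ relabels every $\tm{choose}$ observation to \ltau{} and fixes all other labels---by unfolding \ilc{iter}, pushing transitions across \ilc{bind} with the lemmas of Figure~\ref{lemma-bind}, and discharging the transparent \guard{}s. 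Since relabeling along a function carries bisimulations to bisimulations, $\{(\ilc{internalize}(x),\ilc{internalize}(y)) \mid x \sbisim y\}$ then progresses under $sb$, giving $t \sbisim u \implies \ilc{internalize}(t) \sbisim \ilc{internalize}(u)$ and completing the composition.

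The main obstacle is this second half. Because it steps outside the ``simple handler'' discipline for which preservation of $\sbisim$ was already available, \ilc{internalize} needs its own bisimulation proof, and that proof cannot stay structural: one must descend to the induced LTS and track simultaneously the \guard{}s that \ilc{iter} sprinkles through the tree---present in the syntax but invisible to the LTS---and the collapse of the $\tm{choose}$ events, which are genuine stepping transitions on the \itreen{} side, onto anonymous \ltau{} transitions. The transition-correspondence lemma for \ilc{internalize} is the load-bearing step; once it is available, both the first half and the final composition are routine given the up-to principles of Lemma~\ref{lem:sbisim-upto}.
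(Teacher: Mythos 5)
Your overall architecture---splitting \ilc{embed} into \ilc{inject} followed by \ilc{internalize}, proving that the former maps \euttn{} into $\sbisim$ and that the latter preserves $\sbisim$---is a reasonable reading of Figure~\ref{fig:embed}, and your analysis of \ilc{internalize} is essentially correct: the handler is not \emph{simple}, but since it sends $\tm{choose}~n$ to an atomic stepping branch whose continuations are pure leaves, the induced LTS is merely relabeled (each $\tm{choose}$ observation collapses onto a \ltau), and relabeling preserves bisimilarity. (A small repair: the targets of the transported transitions carry residual \guard{}s introduced by \ilc{iter}, so your transition-correspondence should be stated up to $\sbisim$, not up to $\equ$.)

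The genuine gap is in the half you declare routine. To show that $\{(\tm{inject}~t',\tm{inject}~u')\mid \mathcal{R}~t'~u'\}$ progresses under $sb$, you must take an arbitrary transition of \ilc{inject}~$t$ and produce a matching one from \ilc{inject}~$u$, and your only handle on $t$ and $u$ is one unfolding of the inductive endofunction $\tm{euttF}$. Inducting on $\tm{euttF}~t~u$ works in the $\Ret{}$, $\visn$, and \emph{asymmetric} \itau{} cases, but the \emph{symmetric} \itau{} constructor yields only membership of the tails in the coinductive candidate---no smaller $\tm{euttF}$ derivation to recurse on---while on the \ctreen{} side both \itau{}s have become \guard{}s, which fire no LTS transition, so the bisimulation game has not advanced and the coinduction hypothesis is not yet usable. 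Your observation that two silently diverging \itreesn{} both map to \spinDn{} is the right intuition but cannot be invoked at that point in the proof: mid-game you do not know whether you are in the diverging case. The missing ingredient is the paper's auxiliary lemma (\linkt{Interp/ITree.v}{276}): the \emph{assumed} transition out of the embedding entails that $t$ is not $\itau^{\omega}$, i.e., that a \ilc{Vis} or \ilc{Ret} node is \emph{inductively} reachable; a nested induction on that reachability witness is what makes the symmetric \itau{} case decrease. The same caveat defeats your fallback argument---the iterativity laws of Section~\ref{subsec:bisim} do not by themselves yield that \ilc{interp} respects \euttn{}; any proof of that fact must confront the same $\itau^{\omega}$ issue. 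So the load-bearing step is not the \ilc{internalize} relabeling lemma but this reachability lemma, which your proposal omits.
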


The proof of this theorem highlights how \itau nodes in \itreesn collapse
two distinct concepts that nondeterminism forces us to unravel in \ctreesn.
The \euttn relation is defined as the \gfpn of an inductive endofunction \tm{euttF}.
In particular, one can recursively and \emph{asymmetrically} strip finite amounts of \itau---the corecursion is completely oblivious to these nodes in the structures.
Corecursively, however, \itau nodes can be matched \emph{symmetrically}---a construction that is useful in exactly one case, namely to relate the silently spinning computation, $\itau^{\omega},$ to itself.
From the \ctreesn perspective, recursing in \tm{euttF} corresponds to recursion in
the definition of the LTS: \itau nodes are \guard nodes.
But corecursing corresponds to a step in the LTS: $\itau^{\omega}$ corresponds to $\step^{\omega}$.
\itreesn' \itau thus corresponds to either a \guard or a \step.
Nondeterminism forces us to separate both concepts, as
whether a node in the tree constitutes an accessible state in the LTS
becomes semantically relevant. We obtain two "dual"
notions that have no equivalent in \itreesn: we may have finite amounts of
asymmetric corecursive choices, i.e., finitely many \step{}s, and
structurally infinite stuck processes, i.e., $\guard^{\omega}$.

In the proof of Lemma~\ref{lemma:embed}, this materializes by the fact that an induction
on \tm{euttF} leaves us disappointed in the symmetric \itau case: we have no
applicable induction hypothesis, but expose in our embedding a $\guard$,
which does not allow us to progress in the bisimulation. We must resolve the
situation by proving that being able to step in \ilc{embed t} implies
that \ilc{t} is not $\itau^{\omega}$, i.e., that we can inductively
reach a \ilc{Vis} or a \ilc{Ret} node (\linkt{Interp/ITree.v}{276}).

\paragraph{Limitations: on guarding recursive calls using \guard.}
We have shown that \ctreesn equipped with \ilc{iter} as recursor and strong
bisimulation as equivalence form an iterative monad. Furthermore, building
interpretation atop this \ilc{iter} combinator gives rise to a
monad morphism respecting \euttn, hence is suitable
for implementing non-deterministic effects represented as external in an
\itreen.

However, we stress that the underlying design choice in the definition of
\ilc{iter}, guarding recursion using \guard, is not without consequences.
It has strong benefits, mainly that a lot of reasoning can be performed against strong
bisimulation. More specifically, this choice allows the user to reserve weak
bisimulation for the purposes of ignoring domain-specific steps of computations that may be relevant
both seen under a stepping or non-stepping lens---e.g., synchronizations
in \ccs---but it does not impose this behavior on recursive calls.
However, it also leads to a coarse-grained treatment of silent divergence:
in particular, the silently diverging \itreen (an infinite chain of \itau) is embedded into an
infinite chain of \guard, which, we have seen, corresponds to a stuck LTS.
For some applications---typically, modeling other means of being
stuck and later interpreting them into a nullary branch---one would
prefer to embed this tree into the infinite chain of \step to avoid equating
both computations. While one could rely on manually introducing a \step in the
body iterated upon when building the model, that approach is a bit cumbersome.

Instead, a valuable avenue would be to develop the theory accompanying the
alternate iterator mentioned in Section~\ref{subsec:bisim} and guarding
recursion using \step.  Naturally, the corresponding monad would not be
iterative with respect to strong bisimulation, but we conjecture that it would
be against weak bisimulation.  From this alternate iterator would arise an
alternative embedding of \itreesn into \ctreesn: we conjecture it would still
respect \euttn, but seen as a morphism into \ctreesn equipped with weak
bisimulation. The development accompanying this paper does not yet support this
alternate iterator, we leave implementing it to future work.

Currently, the user has the choice between (1) not
observing recursion at all, but getting away with strong bisimulation in
exchange, (2) manually inserting \step at recursive calls that they chose to
observe. With support for this alternate iterator, the user
would be given additional option to (3) systematically \ltau-observe recursion,
at the cost of working with weak bisimulation everywhere.



\section{Case study: a model for ccs}
\label{sec:ccs}
We claim that \ctreesn form a versatile tool for building semantic models of
nondeterministic systems, concurrent ones in particular. In this section, we
illustrate the use of \ctreesn as a model of concurrent communicating processes
by providing a semantics for Milner's Calculus of
Communicating Systems (\ccs)~\cite{Milner:CC1989}. The results we obtain---the usual
algebra, up-to principles, and precisely the same equivalence relation as the
usual operational-based strong bisimulation---are standard, per se, but
they are all established by exploiting the generic notion of bisimilarity of \ctreesn. The result is a
shallowly embedded model for \ccs in Coq that could be easily, and modularly, combined with other
language features.

\subsection{Syntax and operational semantics}


\begin{figure}
  {
    \begin{mathpar}
      l::=~ \tau \sep \comm \sep \bar{\comm} \and
      \pP ::=~ \pnil \sep \prf{l} \pP \sep \pls\pP\pQ \sep \para\pP\pQ \sep \new\comm\pP \sep \bang{\pP}
   \end{mathpar}
  }%
\begin{mathpar}
  \inferrule{~}{\stepccs {\prf a P} a P}
  \and
  \inferrule{\stepccs P l P'}{\stepccs {\pls P Q} l P'}
  \and
  \inferrule{\stepccs Q l Q'}{\stepccs {\pls P Q} l Q'}
  \and
  \inferrule{\stepccs P l P'}{\stepccs {\para P Q} l {\para P' Q}}
  \and
  \inferrule{\stepccs Q l Q'}{\stepccs {\para P Q} l {\para P Q'}}
  \and
  \inferrule{\stepccs P \comm P' \quad \stepccs Q {\bar \comm} Q'}{\stepccs {\para P Q} \tau {\para {P'} {Q'}}}
  \and
  \inferrule{\stepccs P l P' \quad \compat l \{\comm,\bar \comm\}}{\stepccs {\new \comm P} l {\new \comm P'}}
  \and
  \inferrule{\stepccs {P \parallel \bang P} l P'}{\stepccs {\bang P} l P'}
\end{mathpar}

\caption{Syntax for \ccs (\linkccs{Syntax.v}{51}) and its operational semantics (\linkccs{Operational.v}{12})}
\label{fig:ccs}
\end{figure}

The syntax and operational semantics of \ccs are shown in Figure~\ref{fig:ccs}.
The language assumes a set of \textit{names}, or communication channels, ranged over by $\comm$.
For any name $\comm,$ there is a co-name $\bar \comm$ satisfying
$\bar {\bar \comm} = \comm$. An \textit{action} is represented
by a label $l$; it is either a communication label $\comm$ or $\bar\comm$,
representing the sending/reception of a message on a channel, or the reserved
action $\tau$, which represents an internal action.

The standard operational semantics, shown in the figure, is
expressed as a labeled transition system, where states are terms $P$ and labels
are actions $l$. The \ccs operators are the following: $\pnil$ is the process with
no behavior. A prefix process ${\prf l P}$ emits an action $l$ and then becomes
the process $P$. The internal choice $\pls\pP\pQ$ behaves either like the
process $\pP$ or like the process $\pQ$, in the same fashion as the
\textsc{BrInternal} semantics for \cbrn in Section~\ref{sec:nondeterminism}.
The parallel composition of two processes $\para\pP\pQ$ interleaves
the behavior of the two processes, while allowing the two processes to
communicate. If the process $\pP$ emits a name $\comm$ and the process $\pQ$
emits its co-name $\bar\comm$, then the two processes can progress simultaneously
and the parallel composition emits an internal action $\tau$. Channel
restriction $\new\comm\pP$ prevents the process $\pP$ from emitting an action
$\comm$ or $\bar\comm$: the operational rule states that any emission of another
action is allowed. Finally the replicated process $\bang{\pP}$ behaves as an
unbounded replication of the process $\pP$. Operationally, $\bang{\pP}$ has the
behavior of $\para P {\bang{\pP}}$.

\subsection{Model}
\begin{figure}

  \begin{mathpar}
    \spnil \defeq \stuck
    \and
    \sprf a p \defeq \triggernoparens{a}\bind p
    \and
    \spls p q \defeq \brDtwo{p}{q}
    \and \sbang p \defeq \sparabang p p
\\
    \snew{c}{P} \defeq \interp ~\tm{h\_new} ~c ~P
\and
\text{where }\tm{h\_new}~c~e =
\left\{
  \begin{array}{ll}
         {\stuck} &\text{if }e= {act ~c}  \text{ or } e=  {act ~\bar{c}} \\
         \triggernoparens{e}&\text{otherwise}
  \end{array}
\right.

    \spara{p}{q} \defeq  \cofix{F} ~p~ q \cdot
    \brDthree{}{}{}
    \begin{array}[t]{l}
    (p'\bget\gethead p\bind  \actL~\gfpc{F}~q~p') \\
    (q'\bget\gethead q\bind  \actR~\gfpc{F}~p~q')\\
    (p'\bget\gethead p\bind q'\bget\gethead q\bind actLR~\gfpc{F}~p'~q')
    \end{array}

    \\
\begin{array}{l}
  \actL~\gfpc{F}~q  ~(\brS{n}{k}) \defeq \brS{n}{(\lambda i \cdot \gfpc{F} ~(k~i)~q)}
  \\
  \actL~\gfpc{F}~q ~(\Vis{e}{k}) \defeq \Vis{e}{(\lambda i \cdot \gfpc{F} ~(k~i)~q)}
\end{array}

\and
\begin{array}{l}
\actR~\gfpc{F}~p ~(\brS{n}{k}) \defeq \brS{n}{(\lambda i \cdot \gfpc{F} ~p~(k~i))}
\\
\actR~\gfpc{F}~p ~(\Vis{e}{k}) \defeq \Vis{e}{(\lambda i \cdot \gfpc{F} ~p~(k~i))}
\end{array}
\\
  \actLR~{\gfpc{F}} ~r~r' \defeq
  \left\{
  \begin{array}{ll}
  \step \cdot \gfpc{F} ~(k~())~(k'~()) \;\;&\text{ if } \exists a.\,~ r=\Vis{(act ~a)}{k} \land r'=\Vis{(act~\bar{a})}{k'}\\
	{\stuck} &\text{ otherwise}
\end{array}\right.

 \end{mathpar}

  \caption{Denotational model for \ccs using \ccssem as a domain (\linkccs{Denotation.v}{62})}
  \label{fig:ccs-den}
\end{figure}







We define a denotational model for \ccs using \ilc{ctree actE void} as domain,
written \ccssem in the following.
As witnessed by this type, processes do not return any value, but may emit actions
modeled as external events expecting \ilc{unit} for answer: \ilc{Inductive actE
  ::= | act a : actE unit}.
Figure~\ref{fig:ccs-den} defines the semantic operators
associated with each construct of the language. They are written as over-lined
versions of their syntactic counterparts, and defined over \ccssem.


The empty process is modeled as a stuck tree---we cannot observe it.  Actions
are directly defined as visible events, and thus the prefix triggers the action,
and continues with the remaining of the process.  As discussed in
Section~\ref{sec:nondeterminism}, the delayed branching node fits exactly with
the semantics of the choice operator in \ccs, only progressing if one of the
composed terms progresses.  Restriction raises a minor issue: the compositional
definition implies that the \ctreen for the restricted term has already been
produced when we encounter the restriction and, a priori, that tree might
contain visible actions on the name being restricted. We enforce scoping by
replacing those actions by a stuck tree, $\stuck$, effectively cutting these
branches.  This is done using the \interp\ operator from \ctreesn, with
\tm{h\_new}, a handler that does the substitution.

Parallel composition is more intricate, as the operator
requires significant introspection of the composed terms. The traditional
operational semantics of \ccs is not explicitly constructive:
each of the three reduction rules depends on the existence of specific
transitions in the sub-processes.
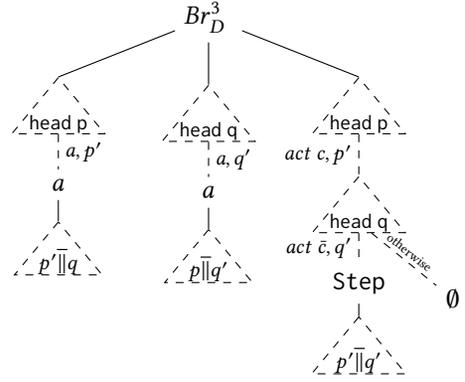
\begin{wrapfigure}{r}{0.5\textwidth}
  \centering
  \usetikzlibrary{trees}

\tikzset{
  subtree/.style={
    draw,dashed,
    isosceles triangle,shape border rotate=90,
    isosceles triangle apex angle=80,
    inner sep=0.5pt,
    anchor=apex,
    scale=0.8,
    minimum size=20pt,
  },
  edge/.style={
    dashed,
  },
}
  
\begin{tikzpicture}[on grid, node distance=1.4cm and 2cm]
  \node (root) {$\brDthree{}{}{}$};

  \node[subtree, below left=of root] (hpl) { $\gethead p$};
  \node[below=0.8cm of hpl] (a1) {$a$};
  \node[subtree, below=1cm of a1] (rec1) {$\spara{p'}{q}$ };
  \draw (root) to (hpl.apex);
  \draw[edge] (hpl) to node[scale=0.8,right]{$a, p'$} (a1);
  \draw (a1) to (rec1);
  
  \node[subtree,below=1.5cm of root] (hql) { $\gethead q$};
  \node[below=0.8cm of hql] (a2) {$a$};
  \node[subtree, below=1cm of a2] (rec2) {$\spara{p}{q'}$ };
  \draw (root) to (hql.apex);
  \draw[edge] (hql) to node[scale=0.8,right]{$a, q'$} (a2);
  \draw (a2) to (rec2);
  
  \node[subtree,below right=of root] (hpr) { $\gethead p$};
  \node[subtree,below =1.3cm of hpr] (hqr) { $\gethead q$};
  \node[below=0.8cm of hqr] (step) {$\step$};
  \node[subtree, below=1cm of step] (rec3) {$\spara{p'}{q'}$ };
  \draw (root) to (hpr.apex);
  \draw[edge] (hpr) to node[scale=0.8,left]{$act~c, p'$} (hqr.apex);
  \draw[edge] (hqr) to node[scale=0.8,left]{$act~\bar{c}, q'$} (step);
  \draw (step) to (rec3);

  \node[below right=1.00cm and 1.25cm of hqr] (stuck) {$\stuck$};
  \draw[edge] (hqr) to node[scale=0.5, above, sloped]{otherwise} (stuck);
\end{tikzpicture}
  \caption{Depiction of the tree resulting from $\spara{p}{q}$}
  \label{fig:ccs-spara-tree}
  \vspace{-2ex}
\end{wrapfigure}
We perform this necessary introspection, in a constructive way, defining the
tree as an explicit cofixpoint, and using the \gethead\ operator introduced in
Section~\ref{subsec:head}.  While the operation \gethead{p} precisely captures
the desired set of actions that $p$ may perform, computing this set could, in
general, silently diverge. We therefore cannot bluntly initiate the computation
by sequencing the heads of $p$ and $q$, as divergence in the former may render
inaccessible valid transitions in the latter.\footnote{Technically, the variant
  of \ccs considered here actually cannot generate such a computation, so we
  could therefore rule this case out extensionally. The case could, however,
  easily arise in a variant of \ccs relying on recursive processes rather than
  replication, and in other calculi, so we therefore favor this more general,
  reusable, approach.}  Instead, we initiate the computation with a ternary
delayed choice: the left (resp.  middle) branch captures the behaviors starting
with an interaction by $p$ (resp. $q$), while the right branch captures the
behaviors starting with a synchronisation between $p$ and $q$. Essentially, the
tree nondeterministically explores the set of applicable instances of the three
operational rules for parallel composition.  In particular,
if the operational rule to step in the left (resp. right) process is
non-applicable, the left (resp. middle) branch of the resulting tree silently
diverges.  The right branch silently diverges if neither process can step, but,
in general, it also contains branches considering the interaction of
incompatible actions; we cut these branches by inserting \stuck.  In all cases,
the operator continues corecursively, having progressed in either or both
processes.  Figure~\ref{fig:ccs-spara-tree} shows the \ctreen resulting from $\spara{p}{q}$.


The last operator to consider is the replication $\sbang$. In theory, it
could be expressed in terms of parallel composition directly, as the cofix
$\sbang p \defeq \spara{p}{~\sbang{p}}$. Unfortunately, although it is sound, defining the
$\sbang{}$ operator in this way is too involved for \coq's syntactic criterion on
\cofixes to recognize that the corecursive call is guarded under $\spara{}{}$.
To circumvent this difficulty, we use an auxiliary operator,
$\sparabang{p}{q}$,  capturing the parallel composition of a process
$p$ with a replicated process $\bang{q}$.
It nondeterministically explores
the four kinds of interactions that such a process could exhibit: a step in $p$;
the creation of a copy of $q$ performing a step to $p'$, before being composed
in parallel with $p$; the creation of a copy of $q$ synchronizing with $p$; or
the emission of two copies of $q$ synchronizing one with another.  We finally
define the replication operator as $\sbang p \defeq \sparabang{p}{p}$. We omit the formal definition here,
since it is similar to that for $\spara{}{}$, but point out the interested
reader to its formal counterpart (\linkccs{Denotation.v}{125})\arxiv{ or to the appendix}{}.

With these tools at hand, the model $\sembra{\cdot}:~ \ccs \rightarrow \ccssem$
is defined by recursion on the syntax.

\paragraph{Equational Theory}

We provide a first validation of our model by proving that it satisfies the expected equational
theory with respect to \ctreesn's notion of strong bisimulation, enabling the usual
algebraic reasoning advocated for process calculi.
In particular, we prove that our  definition for the replication is
sane in that it validates equationally the expected definition: $\sbang{p}
\sbisim \spara {\sbang{p}}{p}$ (\linkccs{Denotation.v}{1452}).
We also prove an illustrative collection of expected equations satisfied
by our operators (\linkccs{Denotation.v}{809}):
\begin{mathpar}
  \spls p q \sbisim \spls q p
  \and
  \spls p {(\spls q r)} \sbisim \spls {(\spls p q)} r
  \and
  \spls p \spnil \sbisim p
  \and
  \spls p p \sbisim p
  \\
  \spara p \spnil \sbisim p
  \and
  \spara p q \sbisim \spara q p
  \and
  \spara p {(\spara q r)} \sbisim \spara {(\spara p q)} r
\end{mathpar}

To facilitate these proofs, we first prove sound up-to principles at the level
of \ccs for each constructor: strong bisimulation up-to $\sprf{c}{[\cdot]}$,
$\spls{[\cdot]}{[\cdot]}$, $\spara{[\cdot]}{[\cdot]},$ $\sbang{[\cdot]},$
and $\snew{c}{[\cdot]}$ are all valid principles, allowing us to rewrite \sbisimn under
semantic contexts during bisimulation proofs. Additionally to these language-level up-to
principles, we  inherit the ones generically supported by \sbisimn
~(Lem.~\ref{lem:sbisim-upto}).







\subsection{Equivalence with the operational strong bisimilarity}

In addition to proving that we recover in our semantic domain the expected up-to
principles and the right algebra, we furthermore show that the model is sound
and complete with respect to strong bisimulation compared to its
operational counterpart.
We do so by first establishing an asymmetrical bisimulation between \ccs and
\ccssem, matching operational steps over the syntax to semantic steps in the
\ctreen. We write $\bar l$ for the obvious translation of labels between both LTSs.

\begin{definition}[Strong bisimulation between \ccs and \ccssem.]
A relation $\mathcal{R}:~ rel(\ccs,\ccssem)$ is a strong bisimulation if and only if,
for any label $l$, \ccs term $P$, and  \ccssem tree $q$.\\
\begin{minipage}[c]{.7\textwidth}
\[
P ~ \mathcal{R} ~ q
\land \stepccs{P}{l}{P'}
\implies
\exists q',~ P' ~\mathcal{R}~ q'
\land \lstep{q}{\bar{l}}{q'}
\]
\begin{center}
and conversely
\end{center}
\vspace{-1ex}
\[
P ~ \mathcal{R} ~ q
\land \lstep{q}{\bar{l}}{q'}
\implies
\exists P'.~ P' ~\mathcal{R}~ q'
\land \lstep{P}{l}{P'}
\]
\vspace{.5ex}
\end{minipage}\begin{minipage}[c]{.25\textwidth}
\begin{tikzpicture}
	\node (s) at (-1,1) {$P$};
	\node (R) at (0,1) {$\mathcal{R}$};
	\node (t) at (1,1) {$q$};
	\node (sp) at (-1,0) {$P^\prime$}
		edge [<-, shorten <=.06cm] node[auto] {$l$} (s);
	\node (Rp) at (0,0) {$\mathcal{R}$};
	\node (tp) at (1,0) {$q^\prime$}
  edge [<-, shorten <=.04cm] node[auto,swap] {$\bar{l}$} (t);
	\node (ccs) at (-0.84,0.27) {\tiny{ccs}};
\end{tikzpicture}
\end{minipage}
\end{definition}

\begin{lemma}[\linkccs{OpDenot.v}{251}]
  The relation $R \defeq \{(P,q)~\mid~\sembra{P}\sbisim q\}$ is a strong bisimulation.
\end{lemma}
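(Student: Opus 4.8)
The plan is to verify directly the two clauses of the bisimulation game for an arbitrary pair $(P,q)\in R$ --- i.e.\ with $\sembra{P}\sbisim q$ --- by routing everything through $\sembra{P}$ and relying on two step-correspondence lemmas between $P$ and its model: \emph{soundness}, $\stepccs{P}{l}{P'}$ implies $\lstep{\sembra{P}}{\bar l}{t}$ for some $t\sbisim\sembra{P'}$; and \emph{completeness}, $\lstep{\sembra{P}}{m}{t}$ implies $m=\bar l$ for a unique \ccs label $l$ and $\stepccs{P}{l}{P'}$ for some $P'$ with $t\sbisim\sembra{P'}$. That every label out of $\sembra{P}$ has the shape $\bar l$ is immediate, since $\ccssem$ has empty return type (so no $\lret{v}$ transition is ever enabled) and its only events are $act~c$. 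Granting these lemmas, the forward clause unfolds as follows: from $\stepccs{P}{l}{P'}$, soundness yields $\lstep{\sembra{P}}{\bar l}{t}$ with $t\sbisim\sembra{P'}$; since $\sembra{P}\sbisim q$, bisimilarity supplies $q'$ with $\lstep{q}{\bar l}{q'}$ and $t\sbisim q'$; and transitivity of $\sbisim$ gives $\sembra{P'}\sbisim q'$, i.e.\ $(P',q')\in R$. The converse clause is symmetric, using completeness to turn an LTS step of $q$ --- which, being matched by a step of $\sembra{P}$, necessarily carries a label $\bar l$ --- into a matching operational step of $P$, then closing up by transitivity. The slack gained by phrasing $R$ with $\sbisim$ instead of syntactic equality is precisely what makes these two transitivity closures legal.

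The two step-correspondence lemmas are proved by induction on the operational derivation and on the syntax of $P$ (with inversion on the LTS transition), respectively, reducing in both cases to one case per \ccs operator. The leaf-like cases are routine: $\spnil=\stuck$ emits nothing; $\sprf{a}{p}=\triggernoparens{a}\bind p$ steps through its leading $\visn$ node and, by the monad laws and the transition lemmas for \bindn in Figure~\ref{lemma-bind}, reaches $p$ up to $\equ$-unfoldings, matching $\stepccs{\prf a P}{a}{P}$; and $\spls{p}{q}=\brDtwo{p}{q}$ exposes exactly the union of the transitions of $p$ and $q$ via the $\brDn$ transition rule, matching the two sum rules. Restriction needs a small dedicated analysis of how transitions pass through $\interp~\tm{h\_new}~c$: the handler sends $act~c$ and $act~\bar c$ to $\stuck$, silencing exactly those actions while transparently propagating all the others --- mirroring the side condition $\compat{l}{\{\comm,\bar\comm\}}$.

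The real obstacle is parallel composition (and, by the same token, replication). One must characterize the transitions of $\spara{p}{q}$, a ternary delayed choice whose three branches are built by sequencing $\gethead{p}$ and $\gethead{q}$ with the helpers $\actL,\actR,\actLR$. This hinges on two auxiliary facts: (i) $\gethead{t}$ is a tree of delayed branches whose leaves record exactly the immediately-observable subtrees of $t$, so that $\lstep{\gethead{t}}{\lret{\alpha}}{\stuck}$ holds iff $\alpha$ is such a head; and (ii) stepping through the $p'\bget\gethead{p}\bind\ldots$ prefixes --- again via the \bindn transition lemmas --- makes each branch expose precisely the transitions predicted by the matching operational rule: the left and right branches reproduce the two interleaving rules, while the third branch emits a $\step$ exactly when $p$ offers some $act~c$ and $q$ the complementary $act~\bar c$ (and $\stuck$ otherwise), matching the synchronization rule. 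Replication is then handled through the concrete four-branch structure of $\sparabang{}{}$ together with the previously-established identity $\sbang{p}\sbisim\spara{\sbang{p}}{p}$ and commutativity of $\spara{}{}$; note that a naive structural induction on $\bang P$ does not go through, since the operational rule unfolds it to $P\parallel\bang P$, so this case must lean on that algebraic equation and the parallel analysis rather than a recursive call. Throughout, transitions land not on $\sembra{P'}$ itself but on trees such as $\Ret{\unit}\bind k$ or $F~p'~q$ that are only $\equ$- or $\sbisim$-equivalent to it --- one more reason the $\sbisim$-closure baked into $R$ is indispensable.
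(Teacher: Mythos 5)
Your proposal is correct and follows essentially the same route as the paper's (mechanized) proof: establish per-operator step-correspondence between $\stepccs{P}{l}{P'}$ and the LTS of $\sembra{P}$ --- using the \bindn transition lemmas of Figure~\ref{lemma-bind}, the characterization of $\gethead{t}$ for the parallel case, and the equation $\sbang{p}\sbisim\spara{\sbang{p}}{p}$ for replication --- and then close each clause of the game by transitivity of $\sbisim$, which is exactly the slack the definition of $R$ is designed to provide. No gaps.
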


We derive from this result that the operational and semantic strong
bisimulations define exactly the same relation over \ccs:

\begin{lemma}[\linkccs{OpDenot.v}{430}]
  \( \forall P~Q,~\sembra{P}\sbisim\sembra{Q} \text{ iff } P \sim_{\ccs} Q\)
\end{lemma}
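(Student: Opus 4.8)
The plan is to reduce the equivalence to the mixed bisimulation of the previous lemma, using only the standard closure properties of bisimulations. Write $R \defeq \{(P,q)~\mid~\sembra{P}\sbisim q\}$ for that relation, and $R^{op} \defeq \{(q,P)~\mid~\sembra{P}\sbisim q\}$ for its converse. Since the mixed bisimulation condition is phrased as two mirror-image clauses, $R^{op}$ is again a bisimulation, now of type $rel(\ccssem,\ccs)$. Moreover the label translation $\bar{\cdot}$ is a bijection between \ccs actions and the labels a denotation in $\ccssem$ can ever emit: at type $\ccssem = \ilc{ctree actE void}$ no $\lretKW$ transition is possible, and $\tau \mapsto \ltau$, $a \mapsto \lobs{(act~a)}{()}$; hence matching a transition labelled $\bar l$ on one side always recovers the transition labelled $l$ on the other. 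First I would record these facts, together with the routine observation that the relational composition of two bisimulations (composing a mixed one with an ordinary one on the shared type, or a mixed one with a mixed converse) is again a bisimulation.

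For the right-to-left direction, assume $P \sim_{\ccs} Q$ and consider the relation $R^{op} ; {\sim_{\ccs}} ; R$ on $\ccssem$. Since $\sim_{\ccs}$ is a bisimulation on \ccs, $\sbisim$ is the greatest bisimulation on $\ccssem$ (being defined as a \gfpn), and composition preserves bisimulations, this composite is a bisimulation on $\ccssem$ and therefore contained in $\sbisim$. Now $\sembra{P} \mathrel{R^{op}} P$ and $Q \mathrel{R} \sembra{Q}$ both hold by reflexivity of $\sbisim$, so $\sembra{P}\ (R^{op};{\sim_{\ccs}};R)\ \sembra{Q}$ with witnesses $P$ and $Q$, whence $\sembra{P}\sbisim\sembra{Q}$. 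The left-to-right direction is dual: assume $\sembra{P}\sbisim\sembra{Q}$ and consider $R ; {\sbisim} ; R^{op}$ on \ccs. It is a bisimulation on \ccs, hence contained in $\sim_{\ccs}$, and $P\ (R;{\sbisim};R^{op})\ Q$ holds with witnesses $\sembra{P}$ and $\sembra{Q}$ (using $P \mathrel{R} \sembra{P}$, the hypothesis $\sembra{P}\sbisim\sembra{Q}$, and $\sembra{Q} \mathrel{R^{op}} Q$), so $P \sim_{\ccs} Q$.

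The genuine content of the whole result lives in the previous lemma—that $R$ is a mixed bisimulation—whose proof has to connect operational \ccs steps to the inductive transition relation of Figure~\ref{fig:lts} case by case, including threading the head-of-computation introspection through the parallel-composition operator. Granting that lemma, the only obstacle remaining here is bookkeeping: stating precisely the closure lemmas ``the converse of a bisimulation is a bisimulation'' and ``the composition of bisimulations is a bisimulation'' in the mixed setting, and verifying that $\bar{\cdot}$ really is a bijection onto the reachable labels so that the composed relations always match transitions at the corresponding label. I expect no substantive difficulty beyond that. (Alternatively, one could avoid the composition machinery and instead exhibit the concrete candidate $\{(P,Q)\mid\sembra{P}\sbisim\sembra{Q}\}$ for the forward direction and unfold the mixed bisimulation by hand, but the compositional argument is cleaner.)
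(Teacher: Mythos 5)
Your proposal takes essentially the same route as the paper: the statement is obtained as a direct corollary of the preceding lemma that $R = \{(P,q)\mid\sembra{P}\sbisim q\}$ is a mixed bisimulation, transferring transitions through $R$ and its converse and using that $\sbisim$ and $\sim_{\ccs}$ are the greatest bisimulations on their respective LTSs (plus reflexivity of $\sbisim$ to seed $P \mathrel{R} \sembra{P}$). The closure-under-converse/composition bookkeeping and the observation that $\bar{\cdot}$ is a bijection onto the labels reachable at type $\ilc{ctree actE void}$ are exactly the remaining content, and your argument goes through.
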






\section{Case study: modeling cooperative multithreading}
\label{sec:yield}
As a second case study, we consider cooperative multithreading.
Cooperative scheduling is used in languages such as Javascript, async
Rust, or Akka/Scala actors, but is also a very general model, as preemptive
multi-tasking is equivalent to cooperative scheduling where threads are willing
to yield (i.e., let other threads run) at any time.
We extend the syntax of \imp with two constructs:
%
\begin{mathpar}
  \ccomm \defeq \cskip \sep \cassign{x}{e} \sep \cseq{c1}{c2} \sep \cwhile{b}{c} \sep \cfork{c1}{c2} \sep \cyield
\end{mathpar}

The command \efork forks two copies of the currently running program, the first
of which first runs $c1$ and the second of which first runs $c2$,
before they each proceed with the rest of the program.
Importantly, $c2$ retains control, and $c1$ will only execute when $c2$ voluntarily yields control or ends.
This yielding of control is achieved by the $\cyield$ statement, which signals
that the current thread stops and lets a new thread run---possibly the same one
again.

This semantics implements a mechanism like the \tm{fork} system call, which duplicates the current process, but without a possibility for joining threads.
For example, the program
\[
\cseq
    {(\cfork
      {(\cassign{x}{1})}
      {(\cseq
        {\cyield}
        {\cassign{x}{2}}
        )}
      )}
    {\cassign{y}{x}}
\]
forks two copies of the program, with the ``main'' thread immediately yielding, allowing for either thread to run next.
Its semantics is to first spawn a thread for $\cassign{x}{1}$, then have the
main thread reach the \cyield, giving $\cassign{x}{1}$ a chance to run.
Assuming the spawned thread goes next, it runs in sequence $\cassign{x}{1}$ and
$\cassign{y}{x}$,
after which the main thread recovers control and finishes its execution.
$\cassign{y}{x}$ is part of both threads and is thus executed twice, after each assignment to $x$.

Alternatively, some cooperative scheduling languages~\cite{Abadi2010} consider a spawn
operator that simply spawns an independent thread: we can encode this behavior
in several ways.
Notably, if $\tm{spawn}$ always occurs in tail position, there is no
continuation to duplicate. For instance, the program
\[
\cfork{\cassign{x}{1}}
      {(\cfork
        {(\cassign{x}{2})}
        {\cskip}
        )}
\]
spawns two threads that set $x$ to different values, and terminates.
The two spawned threads can then be scheduled in either order, resulting
in $x = 1$ or $x = 2$ in the final state.
This constraint could be syntactically enforced in the language if relevant.
Alternatively, one can use the command $\cwhile{\true}{\cyield}$ to
``terminate'' a thread; for instance to prevent the first thread above from
reaching $\cassign{y}{x}$. With fancier encodings using reserved
shared-variables, nested ``joins'' and other synchronization operations can be
modeled.
In this case study, we do not concern ourselves with such extensions, and
restrict ourselves to the formalization of the syntax described above.


\subsection{Model}
\label{sec:yieldmodel}

As with the approach described in Section~\ref{sec:background}, the semantics of
the language is defined in two stages.
First, we represent statements as computations of type \ilc{ctree (YieldE +
  SpawnE + MemE) unit}, where \ilc{YieldE} and \ilc{SpawnE}
are used to represent \cyield and \efork respectively.
The remaining statements of \imp are defined in a standard way---we omit them.
The new event families are defined as follows:

\begin{minipage}[b]{2.5in}
\begin{lstlisting}[style=customcoq]
  Variant YieldE : Type -> Type :=
  | Yield : YieldE unit.
\end{lstlisting}
\end{minipage}
\begin{minipage}[b]{2.5in}
\begin{lstlisting}[style=customcoq]
  Variant SpawnE : Type -> Type :=
  | Spawn : SpawnE bool.
\end{lstlisting}
\end{minipage}

\ilc{Yield} carries no additional information and acts purely as a signal to yield control, and \ilc{Spawn} introduces a binary branch in the \ctreen, allowing us to store the asynchronous thread in one branch and the main thread that continues running in the other branch.
Formally, these events are used to denote the corresponding statements:
\begin{align*}
    \sembra{\cyield} \defeq& \trigger{\ilc{Yield}} \qquad
    \sembra{\cfork{c1}{c2}} \defeq& b \gets \trigger{\ilc{Spawn}} \bind \text{if } b \text{ then } \sembra{c1} \text{ else } \sembra{c2}
\end{align*}

The second stage is more complex than the simple stateful
interpretation of standard \imp: we need to interpret the concurrency-related events.
In particular, since the semantics of these events involves scheduling other threads that work
together, we cannot hope to use \interp, as is done with memory events---this
situation is similar to $\spara{}{}$ for \ccs, in both cases the combinator
cannot be implemented as a simple fold.
Instead, we define a custom interpreter for \ilc{Spawn} and \ilc{Yield} events,
namely a scheduler operating over a thread pool of threads, each of which is a
\ctreen that may contain \ilc{Spawn} and \ilc{Yield} events,
and producing a final \ctreen devoid of these concurrency events.

\newcommand{\SOME}[1]{\ensuremath{\lfloor #1 \rfloor}\xspace}
\newcommand{\NONE}{\SOME{\mbox{-}}\xspace}

\begin{figure}
  \[
\begin{array}{rcl}
\schedule~v_0~\NONE &\defeq& \ret \unit \\
\schedule~v_n~\NONE &\defeq& \brS{n}{(\lambda i \cdot \schedule~v_n~\SOME{i})} \qquad \text{for } n > 0\\
\schedule~v_n~\SOME{i} &\defeq \\
\mathrm{if}\ v[i] = \quad& & \mathrm{then} \\
\ret \unit & & \brD{1}{(\lambda \_ \cdot \schedule~v[-i]_{n-1}~\NONE)} \\
\brS{m}{k} & & \brS{m}{(\lambda i' \cdot \schedule~v[i \mapsto k~i']_n~\SOME{i})} \\
\brD{m}{k} & & \brD{m}{(\lambda i' \cdot \schedule~v[i \mapsto k~i']_n~\SOME{i})} \\
\Vis{\ilc{Yield}}{k} & & \brD{1}{(\lambda \_ \cdot \schedule~v[i \mapsto k~\unit]_n~\NONE)} \\
\Vis{\ilc{Spawn}}{k} & & \brS{1}{(\lambda \_ \cdot \schedule~(k~\true :: v[i \mapsto k~\false])_{n+1}~\SOME{i + 1})} \\
\Vis{\ilc{e}}{k} & & \Vis{e}{(\lambda x \cdot \schedule~(v[i \mapsto k~x])_n~\SOME{i})}
\end{array}
\]
\caption{The definition of \schedule (\linkyield{Par.v}{626})}
\label{fig:schedule}
\end{figure}

As is usual with cooperative scheduling, the thread pool, encoded as a vector, contains a designated
\emph{active} thread, i.e., the currently running thread.
We write $v_n$ for a vector of size $n$, and vector operations for removing the
$i$-th element as $v[-i]$, updating the $i$-th element to $x$ as $v[i \mapsto
x]$, and adding an element $x$ to the front as $x :: v$
(so $x$ is the new $0$-th element in the resulting vector).

The scheduler, $\schedule$, is defined formally in Figure~\ref{fig:schedule}.
References to $\schedule$ in its body should be interpreted as corecursive
calls---we abuse notations to lighten the presentation.
It takes two arguments: the thread pool, and the index of the active
thread in the pool. If no thread is active, the second argument is empty, written \NONE (we use an option datatype in \coq).
If there is an active thread, \SOME{i}, the $\schedule$ makes progress in that
thread.
Otherwise, $\schedule$ nondeterministically chooses the next thread to run.
If the thread pool is empty, then $\schedule$ emits a returning \ctreen.
In more detail, there are several ways the active thread may progress.
If the thread performs a branching or a memory operation, the $\schedule$
returns a \ctreen that also performs the same operation,
and in each branch replaces the active thread (index $i$)
in the thread pool with its continuation.
If the active thread returns, it is removed from the thread pool; there is no active thread after the removal.
If the thread yields, then $\schedule$ removes the \ilc{Yield} and starts
scheduling with no active thread (which will nondeterminstically pick a new active thread, potentially the same one again).
If the thread spawns a new thread, then $\schedule$ adds the $\true$ branch of
that thread to the thread pool, updates the active thread to the $\false$
branch, and continues running on the $\false$ branch (note that this shifts the active thread from index $i$ to $i+1$).

Using $\schedule$, we can interpret the $\ilc{Yield}$ and $\ilc{Spawn}$ events in concert, allowing us to eliminate them from the \ctreesn.
We denote this ``scheduled'' semantics by  $\ssembra{p} \defeq
\schedule~[\sembra{p}]_1~\NONE: \ilc{ctree MemE unit}$: it denotes a \ctreen with only memory events.
As described in Section~\ref{sec:itrees},
the final step of the second stage of modeling our \imp programs  can use a handler for the memory events to obtain a computation in the state monad of type \ilc{stateT mem (ctree voidE) unit}.

\subsection{Equational theory}

The model described in Section~\ref{sec:yieldmodel} allows us to derive some
program equivalences at source-level w.r.t. weak bisimilarity of their models.
For example, the following programs all just run $c$, though some of them first perform
some ``invisible'' steps related to concurrency (\linkyield{Lang.v}{331}):
\[
\ssembra{\cfork {c} {\cskip}} \wbisim
\ssembra{\cseq{\cyield}{c}} \wbisim
\ssembra{c}
\]
We use weak rather than strong bisimilarity
since $\schedule$ can introduce stepping branches when interpreting the two concurrency events.
We emphasize that these equations are not compositional, they only hold in the absence of additional concurrent threads, hence why \cyield behaves as a no op: the monadic equivalence we obtain is a congruence for the source language after the first stage of interpretation, but not after the second stage.

Other equations, especially ones that make use of multiple threads in nontrivial
ways, rely on the stability of $\schedule$ under \sbisimn:
\begin{lemma}[$\schedule$ preserves $\sbisim$ (\linkyield{Par.v}{1908})]
  \label{lem:schedule}
  If the delayed branches of every element of vectors $v_n$ and $w_n$ have arity less than 2, and the elements of both vectors are strongly bisimilar up to a permutation $\rho$, then $\schedule~v_n~\SOME{i} \sbisim \schedule~w_n~\SOME{\rho~i}$.
\end{lemma}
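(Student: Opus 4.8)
The plan is to prove the statement by coinduction over the LTS view of \ctreesn{}, conducted up to $\sbisim$ and $\equ$ through the companion. Since the two arguments of $\schedule$ are threaded through its recursion, I would first generalize so that the coinduction candidate is closed under that recursion: let $\mathcal{R}$ relate $\schedule~v_n~a$ with $\schedule~w_n~b$ whenever either $a=b=\NONE$, or $a=\SOME{i}$ and $b=\SOME{\rho~i}$ for a permutation $\rho$ of the indices; and, in all cases, $v[k]\sbisim w[\rho~k]$ for every $k$ (with $\rho$ the identity when $a=\NONE$) and every delayed node of every thread of either pool has arity less than $2$ --- the lemma's hypothesis, which in particular is stable under taking subtrees. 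The lemma then follows by instantiating $\mathcal{R}$.

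The core ingredient is an inversion principle for the transitions of $\schedule~v_n~\SOME{i}$. Because every delayed node of the active thread $v[i]$ has arity at most $1$, its delayed structure is a single, possibly infinite, chain of \guard{}s which $\schedule$ mirrors one-for-one; hence $v[i]$ is either stuck --- the chain never reaches a non-delayed node, and then $\schedule~v_n~\SOME{i}$ is stuck as well --- or it reduces through finitely many \guard{}s to a unique head $h$, which is a \ilc{Ret}, a \ilc{Vis}, or a \ilc{brS} node. Invoking the valid up-to principle $\guard~t\sbisim t$ to strip those guards asymmetrically, I would replace $v[i]$, and likewise $w[\rho~i]$, by its head, so the coinductive step reduces to a case analysis on $h$ following the clauses of Figure~\ref{fig:schedule}: $h=\Ret{\unit}$ (the thread is removed and the transitions become those of $\schedule~v[-i]_{n-1}~\NONE$, namely $\lstep{}{\lret{\unit}}{\stuck}$ when $n=1$ and otherwise $\tau$-steps picking a new active thread), $h=\brS{m}{k}$ ($\tau$-steps to $\schedule~v[i\mapsto k~i']_n~\SOME{i}$), $h=\Vis{\ilc{Yield}}{k}$ ($\tau$-steps that, through the inserted guard and the subsequent pick, reach $\schedule~v[i\mapsto k~\unit]_n~\SOME{j}$), $h=\Vis{\ilc{Spawn}}{k}$ (a $\tau$-step to $\schedule~(k~\true :: v[i\mapsto k~\false])_{n+1}~\SOME{i+1}$), and $h=\Vis{e}{k}$ for a memory event (an $\lobs{e}{x}$-step to $\schedule~(v[i\mapsto k~x])_n~\SOME{i}$); correspondingly, the only transitions of $\schedule~v_n~\NONE$ are $\lstep{}{\lret{\unit}}{\stuck}$ when $n=0$ and $\tau$-steps to $\schedule~v_n~\SOME{j}$ when $n>0$.

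With this characterization the bisimulation game is routine in each case. From $v[i]\sbisim w[\rho~i]$ (equivalently $h\sbisim h'$ after stripping), the transition that $h$ forces on the left is matched by one of $w[\rho~i]$, and inversion pins $h'$ to the same shape as $h$ with pointwise-bisimilar data (e.g.\ $\brS{m}{k}$ against $\brS{m'}{k'}$ with a two-sided correspondence between the index sets, or $\Vis{\ilc{Yield}}{k}$ against $\Vis{\ilc{Yield}}{k'}$ with $k~\unit\sbisim k'~\unit$, by the laws of Figure~\ref{fig:sbisim-laws}); replaying the corresponding $\schedule$ clause on the right yields the matching transition, and I check the outcome lies in $\mathcal{R}$. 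This amounts to index and permutation arithmetic: an in-place update keeps $\rho$; removing the active thread replaces $\rho$ by its restriction-and-reindexing on $n-1$ indices; a spawn replaces $\rho$ by $\rho'$ with $\rho'(0)=0$ and $\rho'(k+1)=\rho(k)+1$, so that $\rho'(i+1)=\rho~i+1$; and a pick of $j$ on the left is matched by a pick of $\rho~j$ on the right. In each case the element-wise bisimilarity and the bounded-arity invariant are preserved, the latter because every thread in the new pool is a subtree of an old one. The converse half of the game is symmetric, using $\rho^{-1}$.

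I expect the main obstacle to be making the guard-reduction analysis of $\schedule$ fully rigorous: it interleaves an inductive peeling of delayed nodes --- sound because $\lstep{\cdot}{\cdot}{\cdot}$ is inductively defined --- with the coinductive unfolding of $\schedule$, and it must faithfully account for the extra \guard{}s and \step{}s that $\schedule$ itself inserts in the return, yield, and spawn clauses, so that, for instance, a $\tau$-transition of the schedule is correctly attributed to a \ilc{brS} node in the active thread, to a spawn, or to picking a new thread after a return or a yield. The accompanying permutation and index arithmetic across the \emph{remove-and-shift} and \emph{cons-and-shift} operations is the other, more bureaucratic, difficulty, where off-by-one mistakes are easy; and the stuck case deserves its own small lemma, since it is precisely where the bounded-arity hypothesis is needed --- ruling out a diverging delayed chain that hides a reachable branch.
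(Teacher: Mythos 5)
Your proposal is correct and, as far as one can tell from the paper (which defers the proof to the Coq development and only remarks that the arity bound ``greatly simplifies the proof by constraining the shape that the strongly bisimilar CTrees can take''), it follows exactly the intended route: a coinduction up to $\equ$ and $\sbisim$ whose crux is that the arity bound forces the active thread's delayed structure to be a single chain of \guard{} nodes, so the thread is either stuck or determines a unique head; this yields the inversion principle for the transitions of \schedule and reduces the game to a case analysis on that head plus permutation and index bookkeeping. One small repair to your candidate: restricting the \NONE{} pairs to the identity permutation would not close under the yield and return clauses, which produce pools related by the ambient $\rho$ rather than the identity --- either let those pairs carry an arbitrary permutation or simply drop them, since $\schedule~\cdot~\NONE$ only ever occurs under a \guard{} and immediately unfolds to a $\brS{n}{}$ pick, so only \SOME{\cdot} states are ever reached in the LTS.
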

The arity requirement is satisfied by all denotations of programs in this language.
This condition greatly simplifies the proof by constraining the shape that the
strongly bisimilar \ctreesn can take.

Lemma~\ref{lem:schedule} allows us to permute the thread pool, which is useful
in examples such as (\linkyield{Lang.v}{247}):
\[
\ssembra{\cfork {c1} {(\cfork {c2} {\cskip})}} \wbisim
\ssembra{\cfork {c2} {(\cfork {c1} {\cskip})}}
\]
This program spawns two asynchronous threads then yields control to one of the two.
This equivalence captures the natural fact that it does not matter which thread is spawned first, since neither can run until both are spawned.

Furthermore, Lemma~\ref{lem:schedule} allows us to validate some simple optimizations that do not directly involve reasoning about concurrency or memory, such as (\linkyield{Lang.v}{363}):
\begin{align*}
  &\hfil\ssembra{\cfork {c1} {(\cfork {(\cwhile{\true}{\cyield})} {\cskip})}}\hfil\\
  \wbisim\quad&\hfil\ssembra{\cfork {(\cseq{\cyield}{\cwhile{\true}{\cyield})} {(\cfork {c1} {\cskip})}}}\hfil
\end{align*}
This example is similar to the previous one, except that one of the spawned threads is a while loop, which we wish to unroll by one iteration.
Crucially, the loop and its unrolled form are strongly bisimilar, so this equivalence follows from Lemma~\ref{lem:schedule} just as in the previous example.
Other optimizations that can be done before interpreting events, such as
constant folding or dead code elimination, can be proven sound similarly.

Finally, equivalences involving memory operations are still valid as well (\linkyield{Lang.v}{443}):
\[
\cfork {(\cassign{x}{2})} {(\cassign{x}{1})} \equiv
\cassign{x}{2}
\]
where $\equiv$ here refers to equivalence ($\wbisim$ in this case) after interpreting both concurrency and memory events.
This result follows from the result in Section~\ref{subsec:interp}, which allows us to transport equations made before interpreting state events into computations in the state monad after interpetation.


\section{Related Work}
\label{sec:rw}
Since Milner's seminal work on \ccs~\cite{Milner:CC1989} and the $\pi$-calculus~\cite{picalculus},
process algebras have been the topic of a vast literature~\cite{BPS01}.
We mention only a few parts of it that are most relevant to our work.
In the Coq realm, Lenglet et al.~\cite{lengletS18} have formalized HO$\pi$,
a minimal $\pi$-calculus, notably exhibiting the difficulty inherent to the
formal treatment of name extrusion.
Beyond its formalization, dealing with scope extrusion as part of a compositional semantics
is known to be a challenging problem~\cite{Crafa2012,Cristescu2013}.
By restricting to \ccs in our case study, we have side-stepped this difficulty.
Foster et al.~\cite{FHW21} formalize in Isabelle/HOL a semantics for CSP
and the Circus language using a variant implementation of \itreesn, where
continuations to external events are partial functions. They, however, only model
deterministic processes, leaving support for nondeterministic ones to future
work. This paper introduces the tools to address that problem.
CSP has also been extensively studied by Brookes~\cite{Brookes02} by providing a
model based on the compositional construction of infinite sets of traces:
\ctreesn offers a complementary coinductive model to this more set-theoretic approach.
Furthermore, Brookes tackles questions of fairness, an avenue that we have not yet
explored in our setup.

Formal semantics for nondeterminism are especially relevant when dealing with
low-level concurrent semantics.  In shared-memory-based programming languages,
rather than message passing ones, concurrency gives rise to the additional
challenge of modeling their memory models, a topic that has received considerable
attention.
Understanding whether monadic approaches such as the one proposed in this paper
are viable to tackle such models vastly remains to be investigated.
Early suggestions that they may include \citeauthor{LesaniXKBCPZ22}~\cite{LesaniXKBCPZ22}
: the authors prove correct concurrent objects
implemented using \itreesn, assuming a sequentially consistent model of shared
memory.  They relate the \itreesn semantics to a trace-based one to reason about
refinement, something that we conjecture would not be necessary when starting
from \ctreesn.
Operationally specified memory models, in the style of which
increasingly relaxed models have been captured and sometimes formalized,
intuitively seem to be a better fit.
Major landmarks in this axis include the work by Sevcík
et al. on modeling TSO using a central synchronizing transition system linking
the program semantics to the memory model in the CompCertTSO
compiler~\cite{compcerttso}; or Kang et al.'s promising
semantics~\cite{promising,promising2} that have captured large subsets of the C++11
concurrency model without introducing out-of-thin-air behaviors.
On the other side of the spectrum,
axiomatic models in the style of Alglave et al.'s~\cite{cat1,cat2}
framework appear less likely to transpose to our constructive setup.

Our model for cooperative multithreading is partially reminiscent of Abadi and Plotkin's work~\cite{Abadi2010}:
they define a denotational semantics based on partial traces that they prove fully abstract, and satisfying
an algebra of stateful concurrency.
The main difference between the two approaches is that partial traces use the memory state explicitly to define the
composition of traces, where \ctreesn can express the semantics of a similar language independently of the memory model.
The formal model we describe here tackles a slightly different language than theirs, but we should
be able to adapt it reasonably easily to obtain a formalization of their work.
More recently, Din et al.~\cite{2017-LAGC-Old,2022-LAGC-New} have suggested
a novel way to define semantics based on the composition of symbolic traces,
partially inspired by symbolic execution~\cite{King76}.
They use it, in particular, to formalize actor languages, which rely on cooperative scheduling,
with a similar modularity as the one we achieve (orthogonal semantic features can be composed),
but not in a compositional way.

Our work brings proper support for nondeterminism to monadic interpreters in
\coq. As with \itreesn however, the tools we provide are just right to
conveniently build denotational models of first order languages, such as \ccs,
but have difficulty retaining compositionality when dealing with higher-order languages.
In contrast, on paper, game semantics has brought a variety of techniques lifting this limitation.
In particular in a concurrent setup, event structures have spawned a successful line of
work~\cite{RW11,CastellanCRW17} from which inspiration could be drawn for further work on \ctreesn.

\section{Conclusion and Perspectives}
\label{sec:conclusion}
We have introduced \ctreesn, a model for nondeterministic,
recursive, and impure programs in \coq. Inspired by \itreesn, we have introduced
two kinds of nondeterministic branching nodes, and designed a toolbox to reason
about these new computations.
In addition to the \coq library presented in this paper, we have developed an enhanced
implementation (with heavier notations, but more expressive) supporting arbitrary
indexing of branches and heterogeneous bisimulations between \ctreesn.
We have illustrated the expressiveness of the framework through two significant case studies.
Both nonetheless offer avenues for further work, notably through an extension of \ccs to
name passing \emph{à la} $\pi$-calculus, and to further extend the equational theory
for cooperative multithreading that we currently support.
 
The restriction imposed on handlers in Section~\ref{subsec:interp} suggests
another perspective: rather than requiring handlers to implement events in a
non-observable way, we could prevent the programs in Figure~\ref{fig:interp-counter}
from being bisimilar in the first place.
We conjecture that this could be achieved by tweaking the LTS so that
external events induce two steps: the question to the environment first, and the answer second,
resulting in a finer grained bisimilarity. 
Whether such a finer bisimilarity would still allow us to equate all
the programs we want needs further investigation.


\subsection*{Acknowledgements}
We are grateful to the anonymous reviewers for their in-depth comments that
helped both improve this paper, and open avenues of further work.
We thank Gabriel Radanne for providing assistance with TikZ.
Finally, we are most thankful to Damien Pous for developing the coinduction
library this formal development crucially relies on, and for the numerous
advice he provided us with.

\arxiv{
\appendix

\section{Elimination of stepping branches: a proof sketch}

\begin{lemma}{Stepping nondeterministic branches can be eliminated (\linkt{Misc/brSElim.v}{249})}
  \[\forall t,~\ilc{BrSElim}~t\sbisim t\]
\end{lemma}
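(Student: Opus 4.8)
The plan is to prove $\ilc{BrSElim}~t\sbisim t$ for all $t$ by coinduction, conducting the proof up to the companion $\compa{sb}$ so that the proof rules of Figure~\ref{fig:sbisim-upto} and the up-to principles of Lemma~\ref{lem:sbisim-upto} are available --- in particular reflexivity, rewriting up to $\equ$ and up to $\sbisim$ on either side, and the equation $\guard~t\sbisim t$. The coinduction candidate is $\mathcal{R}\defeq\{(\ilc{BrSElim}~t,~t)\mid t\}$. The first thing I would do is establish an \emph{unfolding lemma} describing, up to $\equ$, the shape of $\ilc{BrSElim}~t$ according to the head constructor of $t$: unfolding the \ilc{iter} cofixpoint together with the \ilc{bind}, \guard and \step definitions gives $\ilc{BrSElim}~(\Ret{r}) \equ \Ret{r}$; $\ilc{BrSElim}~(\Vis{e}{k}) \equ \Vis{e}{(\lambda x \cdot \guard~(\ilc{BrSElim}~(k~x)))}$; $\ilc{BrSElim}~(\brD{n}{k}) \equ \brD{n}{(\lambda x \cdot \guard~(\ilc{BrSElim}~(k~x)))}$; and $\ilc{BrSElim}~(\brS{n}{k}) \equ \brD{n}{(\lambda x \cdot \step~(\guard~(\ilc{BrSElim}~(k~x))))}$. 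Each is a routine equational computation, $\equ$ being a congruence for every combinator involved.

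With the unfolding lemma in hand, playing the bisimulation game reduces to a case analysis on the head constructor of $t$ (observed up to $\equ$). The \ilc{Ret} case is immediate: each side performs a single $\lret{r}$ transition to a stuck tree, and we close by reflexivity. In the \ilc{Vis} case, each $\lobs{e}{v}$ transition of $t$ lands on $k~v$ while the matching transition of $\ilc{BrSElim}~t$ lands on $\guard~(\ilc{BrSElim}~(k~v))$; we rewrite this target to $\ilc{BrSElim}~(k~v)$ and conclude with the coinduction hypothesis on $(\ilc{BrSElim}~(k~v),~k~v)$. The \ilc{brS} case is the heart of the transformation: $\brS{n}{k}$ steps with $\ltau$ to each $k~i$, whereas $\brD{n}{(\lambda x \cdot \step~(\guard~(\ilc{BrSElim}~(k~x))))}$ reaches, through its delayed branch $i$ and then the unary \step, a $\ltau$ transition to $\guard~(\ilc{BrSElim}~(k~i))$, again discharged by the coinduction hypothesis after stripping the guard; conversely, every transition of the $\ilc{BrSElim}$ side must factor through one of these \step branches, so it is precisely such a $\ltau$ move. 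The \ilc{brD} case is where the inductive nature of the LTS bites: a transition $\lstep{\brD{n}{k}}{l}{u}$ is derived by iterating the inductive \ilc{brD} rule through arbitrarily nested delayed branches, so here I would proceed by an \emph{inner induction} on the transition derivation, peeling off \ilc{brD} nodes --- and the transparent \guard nodes on the $\ilc{BrSElim}$ side --- until a \ilc{Vis}, \ilc{Ret} or \ilc{brS} head is exposed, at which point the previous cases apply; the converse direction is symmetric.

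The hard part will be the bookkeeping between the \emph{syntax} --- the tree produced by \ilc{iter}/\ilc{bind}, dotted with the \guard nodes those combinators introduce --- and its \emph{implicit semantics}, the induced LTS. Stating and proving the unfolding lemma up to $\equ$ forces one to track the guardedness of the \ilc{iter} cofixpoint and its interaction with \ilc{bind}, and the \ilc{brD} case requires an induction on transition derivations that must stay compatible with the outer coinduction. This is exactly the situation the proof rules of Figure~\ref{fig:sbisim-upto} are designed for, the delayed-branch rules acting as recursive calls into the inductive part of the LTS rather than as steps of the bisimulation game. No individual case is deep; it is their interplay --- and keeping track of which \guard nodes have been introduced where, then discharging them uniformly via $\guard~t\sbisim t$ --- that makes the argument a good illustration of the layered facilities, most notably the enhanced coinduction principles built with the \texttt{coinduction} library, on which the development relies.
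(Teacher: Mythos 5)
Your proposal is correct and follows essentially the same route as the paper's proof: the same coinduction candidate, an unfolding of $\ilc{BrSElim}$ up to $\equ$ per head constructor, the same treatment of the \ilc{Ret}, \ilc{Vis} and \ilc{brS} cases via up-to-$\equ$/up-to-$\sbisim$ rewriting and guard stripping, and an induction on the transition derivation to handle the \ilc{brD} case. The only difference is organizational: the paper packages that inner induction into two standalone transport lemmas (a transition of $\ilc{BrSElim}~t$ yields a transition of $t$ whose target is $\sbisim$ to $\ilc{BrSElim}$ of the new state, and conversely), each strengthened with $\equ$-shape information, rather than carrying it out inline within the bisimulation game.
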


\begin{proof}
We proceed by coinduction, with the natural candidate: $\{(\ilc{BrSElim}~ t, t)~\mid~t\ilc{ : ctree E X}\}$.
We then case-analyse on the structure of $t$, and
partially reduce a step of iteration in each case (formally, this reduction
relies on the eta-unfolding of \ilc{unary_stepping_form}, which only holds
with respect to \equn, and hence is made possible by the validity of strong bisimulation up-to \equn). 
\begin{itemize}
  \item Case \Ret{x}: we can simply conclude by reflexivity (validity of \reflC).
  \item Case \Vis{e}{k}: the left-hand-side is of the shape
    $\triggernoparens{e}~\bindc \lambda x \Rightarrow \guard~(\ilc{BrSElim}~(k~x))$.
    It rewrites to
    $\Vis{e}{\lambda x \Rightarrow \guard~(\ilc{BrSElim}~(k~x))}$
    (\uptorelC{\equn}),
    from which we can use the \visn structural rule
    (Figure~\ref{fig:sbisim-upto})
    to step without having to explicitly consider each process stepping.
    At this point, we can almost conclude by coinductive hypothesis, except for
    the trailing \guard on the left-hand side; however, we can simply eliminate
    it by rewriting the equation $\guard~t \sbisim t$ (\uptorelC{\sbisimn}).
  \item Case \brS{n}{k}: We need to play explicitly the game in both directions---taking the branch $x$ of the \brS{n}{} node is the same as taking both $x$
    in \brD{n}{} and crossing the \step.
    We then conclude by coinduction by simply swiping off a $\guard$ as
    in the previous case (\uptorelC{\sbisimn}). 
  \item Case \brD{n}{k}: Things are more involved in this case. Indeed, the case
    analysis on $t$ has exposed a bit of syntax, a \brD{n}{} node, that appears
    on both side of the equation. However, as highlighted in
    Figure~\ref{fig:sbisim-upto}, processing these nodes does \emph{not} 
    give access to the coinduction hypothesis.
    Rather, we introduce auxiliary lemmas to capture the state of the trees that
    correspond to the next state of the LTS: one to transport a step in
    \ilc{BrSElim} to one in its argument, and one the other way around.
    \begin{mathpar}
      \small{
      \inferrule* [right=L1]
      {\lstep{\mathtt{BrSElim}~t}{l}{u}}
      {\exists t',~\lstep{t}{l}{t'}\land u \sbisim \mathtt{BrSElim}~t'}

      \inferrule*[Right=L2]{\lstep{t}{l}{u}}{\exists
        u',~\lstep{\mathtt{BrSElim}~t}{l}{u'}\land u' \sbisim
        \mathtt{BrSElim}~u}
      }
    \end{mathpar}
    Each of these lemmas are proved by reinforcing them to structurally specify
    the shape of the arguments with respect to \equn, and then proceeding by induction on
    the transition.
    Once we have these lemmas at hand, the main proof is straightforward: we
    transport a step in the function to its argument by L1, use the hypothesis of
    bisimilarity to transport it to the argument on the other side of the game,
    and transport it back to the function by L2, allowing us to
    conclude by coinduction hypothesis. 
\end{itemize}
\end{proof}

\section{A semantic replication operator in CCS$^{\#}$}

For completeness, we provide below the denotational model of the $\sbang$ operator. The principles of this definition has been provided in the body of the article. Recall that we use an auxiliary operator $\sparabang {} {}$ to ensure syntactically that the coinductive definition is well-guarded.

  \begin{mathpar}

 \sbang p \defeq \sparabang p p    \\\\

 \sparabang p q \defeq  \cofix{F} ~p~ q  \cdot
 \brDfour ~~
 \begin{array}[t]{l}
 (p'\bget\gethead p\bind  \actL~ \gfpc{F}~q~p' ) \\
 (q'\bget\gethead q\bind  \actRPB~ \gfpc{F}~p~q')\\
 (p'\bget\gethead p\bind q'\bget\gethead{q}\bind \actLRPB~\gfpc{F}~p'~q')\\
 (q'\bget\gethead q\bind q''\bget\gethead{q}\bind \actRRPB~\gfpc{F}~q'~q'')
 \end{array}

\begin{array}{l}
  \actL~\gfpc{F}~q  ~(\brS{n}{k}) \defeq \brS{n}{(\lambda i \cdot \gfpc{F} ~(k~i)~q)}
  \\
  \actL~\gfpc{F}~q ~(\Vis{e}{k}) \defeq \Vis{e}{(\lambda i \cdot \gfpc{F} ~(k~i)~q)}
\end{array}

 \begin{array}{l}
 \actRPB~\gfpc{F}~p ~(\brS{n}{k}) \defeq \brS{n}{(\lambda i \cdot \gfpc{F} ~(\spara p {k~i})~q )}
 \\
 \actRPB~\gfpc{F}~p ~(\Vis{e}{k}) \defeq \Vis{e}{(\lambda i \cdot \gfpc{F} ~(\spara p {k~i})~q)}
 \end{array}

 \actLRPB~{\gfpc{F}} ~r~r' \defeq 
 \left\{
 \begin{array}{ll}					
 	\step \cdot \gfpc{F} ~(\spara {k~()}{k'~()})~q \;\;\;\;&\text{ if } \exists a.\, r=\Vis{(act ~a)}{k} \land r'=\Vis{(act~\bar{a})}{k'}\\
 	{\stuck} &\text{otherwise} 
 \end{array}
 \right.
 \\

 \actRRPB~{\gfpc{F}} ~r~r' \defeq 
 \left\{
 \begin{array}{ll}					
 	\step \cdot \gfpc{F} ~(\spara p {\spara {k~()}{k'~()}})~q &\text{ if } \exists a.\,~ r=\Vis{(act ~a)}{k}\land r'=\Vis{(act~\bar{a})}{k'}\\
 	{\stuck} &\text{otherwise} 
 \end{array}
 \right.
 \end{mathpar}


}{}

\bibliographystyle{ACM-Reference-Format}
\bibliography{references}


\begin{thebibliography}{50}


\ifx \showCODEN    \undefined \def \showCODEN     #1{\unskip}     \fi
\ifx \showDOI      \undefined \def \showDOI       #1{#1}\fi
\ifx \showISBNx    \undefined \def \showISBNx     #1{\unskip}     \fi
\ifx \showISBNxiii \undefined \def \showISBNxiii  #1{\unskip}     \fi
\ifx \showISSN     \undefined \def \showISSN      #1{\unskip}     \fi
\ifx \showLCCN     \undefined \def \showLCCN      #1{\unskip}     \fi
\ifx \shownote     \undefined \def \shownote      #1{#1}          \fi
\ifx \showarticletitle \undefined \def \showarticletitle #1{#1}   \fi
\ifx \showURL      \undefined \def \showURL       {\relax}        \fi
\providecommand\bibfield[2]{#2}
\providecommand\bibinfo[2]{#2}
\providecommand\natexlab[1]{#1}
\providecommand\showeprint[2][]{arXiv:#2}

\bibitem[\protect\citeauthoryear{Abadi and Plotkin}{Abadi and Plotkin}{2010}]%
        {Abadi2010}
\bibfield{author}{\bibinfo{person}{Mart\'in Abadi} {and}
  \bibinfo{person}{Gordon~D Plotkin}.} \bibinfo{year}{2010}\natexlab{}.
\newblock \showarticletitle{{A model of cooperative threads}}.
\newblock \bibinfo{journal}{\emph{Logical Methods in Computer Science}}
  \bibinfo{volume}{6}, \bibinfo{number}{4} (\bibinfo{year}{2010}),
  \bibinfo{pages}{1--39}.
\newblock
\showISSN{18605974}
\urldef\tempurl%
\url{https://doi.org/10.2168/LMCS-6 (4:2) 2010}
\showDOI{\tempurl}


\bibitem[\protect\citeauthoryear{Abramsky and Melli\`{e}s}{Abramsky and
  Melli\`{e}s}{1999}]%
        {AM99}
\bibfield{author}{\bibinfo{person}{Sampson Abramsky} {and}
  \bibinfo{person}{Paul-Andre\'{e} Melli\`{e}s}.}
  \bibinfo{year}{1999}\natexlab{}.
\newblock \showarticletitle{Concurrent Games and Full Completeness}. In
  \bibinfo{booktitle}{\emph{Proceedings of the 14th Annual ACM/IEEE Symposium
  on Logic in Computer Science}}. \bibinfo{publisher}{IEEE Computer Society
  Press}.
\newblock


\bibitem[\protect\citeauthoryear{Alglave, Deacon, Grisenthwaite, Hacquard, and
  Maranget}{Alglave et~al\mbox{.}}{2021}]%
        {cat2}
\bibfield{author}{\bibinfo{person}{Jade Alglave}, \bibinfo{person}{Will
  Deacon}, \bibinfo{person}{Richard Grisenthwaite}, \bibinfo{person}{Antoine
  Hacquard}, {and} \bibinfo{person}{Luc Maranget}.}
  \bibinfo{year}{2021}\natexlab{}.
\newblock \showarticletitle{Armed Cats: Formal Concurrency Modelling at Arm}.
\newblock \bibinfo{journal}{\emph{{ACM} Trans. Program. Lang. Syst.}}
  \bibinfo{volume}{43}, \bibinfo{number}{2} (\bibinfo{year}{2021}),
  \bibinfo{pages}{8:1--8:54}.
\newblock
\urldef\tempurl%
\url{https://doi.org/10.1145/3458926}
\showDOI{\tempurl}


\bibitem[\protect\citeauthoryear{Alglave, Maranget, and Tautschnig}{Alglave
  et~al\mbox{.}}{2014}]%
        {cat1}
\bibfield{author}{\bibinfo{person}{Jade Alglave}, \bibinfo{person}{Luc
  Maranget}, {and} \bibinfo{person}{Michael Tautschnig}.}
  \bibinfo{year}{2014}\natexlab{}.
\newblock \showarticletitle{Herding cats: modelling, simulation, testing, and
  data-mining for weak memory}. In \bibinfo{booktitle}{\emph{{ACM} {SIGPLAN}
  Conference on Programming Language Design and Implementation, {PLDI} '14,
  Edinburgh, United Kingdom - June 09 - 11, 2014}},
  \bibfield{editor}{\bibinfo{person}{Michael F.~P. O'Boyle} {and}
  \bibinfo{person}{Keshav Pingali}} (Eds.). \bibinfo{publisher}{{ACM}},
  \bibinfo{pages}{40}.
\newblock
\urldef\tempurl%
\url{https://doi.org/10.1145/2594291.2594347}
\showDOI{\tempurl}


\bibitem[\protect\citeauthoryear{Altenkirch, Danielsson, and Kraus}{Altenkirch
  et~al\mbox{.}}{2017}]%
        {ADK17}
\bibfield{author}{\bibinfo{person}{Thorsten Altenkirch},
  \bibinfo{person}{Nils~Anders Danielsson}, {and} \bibinfo{person}{Nicolai
  Kraus}.} \bibinfo{year}{2017}\natexlab{}.
\newblock \showarticletitle{Partiality, Revisited}. In
  \bibinfo{booktitle}{\emph{Foundations of Software Science and Computation
  Structures}}, \bibfield{editor}{\bibinfo{person}{Javier Esparza} {and}
  \bibinfo{person}{Andrzej~S. Murawski}} (Eds.). \bibinfo{publisher}{Springer
  Berlin Heidelberg}, \bibinfo{address}{Berlin, Heidelberg},
  \bibinfo{pages}{534--549}.
\newblock
\showISBNx{978-3-662-54458-7}


\bibitem[\protect\citeauthoryear{Bergstra, Ponse, and Smolka}{Bergstra
  et~al\mbox{.}}{2001}]%
        {BPS01}
\bibfield{editor}{\bibinfo{person}{J.A. Bergstra}, \bibinfo{person}{A. Ponse},
  {and} \bibinfo{person}{S.A. Smolka}} (Eds.). \bibinfo{year}{2001}\natexlab{}.
\newblock \bibinfo{booktitle}{\emph{Handbook of Process Algebra}}.
\newblock \bibinfo{publisher}{Elsevier Science}.
\newblock
\showISBNx{978-0-444-82830-9}
\urldef\tempurl%
\url{https://doi.org/10.1016/B978-044482830-9/50038-2}
\showDOI{\tempurl}


\bibitem[\protect\citeauthoryear{Bloom, Istrail, and Meyer}{Bloom
  et~al\mbox{.}}{1988}]%
        {bloom1988}
\bibfield{author}{\bibinfo{person}{B. Bloom}, \bibinfo{person}{S. Istrail},
  {and} \bibinfo{person}{A.~R. Meyer}.} \bibinfo{year}{1988}\natexlab{}.
\newblock \showarticletitle{Bisimulation Can't Be Traced}. In
  \bibinfo{booktitle}{\emph{Proceedings of the 15th ACM SIGPLAN-SIGACT
  Symposium on Principles of Programming Languages}} (San Diego, California,
  USA) \emph{(\bibinfo{series}{POPL '88})}. \bibinfo{publisher}{Association for
  Computing Machinery}, \bibinfo{address}{New York, NY, USA},
  \bibinfo{pages}{229–239}.
\newblock
\showISBNx{0897912527}
\urldef\tempurl%
\url{https://doi.org/10.1145/73560.73580}
\showDOI{\tempurl}


\bibitem[\protect\citeauthoryear{Brandauer, Castegren, Clarke,
  Fernandez{-}Reyes, Johnsen, Pun, Tarifa, Wrigstad, and Yang}{Brandauer
  et~al\mbox{.}}{2015}]%
        {encore}
\bibfield{author}{\bibinfo{person}{Stephan Brandauer}, \bibinfo{person}{Elias
  Castegren}, \bibinfo{person}{Dave Clarke}, \bibinfo{person}{Kiko
  Fernandez{-}Reyes}, \bibinfo{person}{Einar~Broch Johnsen},
  \bibinfo{person}{Ka~I Pun}, \bibinfo{person}{Silvia Lizeth~Tapia Tarifa},
  \bibinfo{person}{Tobias Wrigstad}, {and} \bibinfo{person}{Albert~Mingkun
  Yang}.} \bibinfo{year}{2015}\natexlab{}.
\newblock \showarticletitle{Parallel Objects for Multicores: {A} Glimpse at the
  Parallel Language Encore}. In \bibinfo{booktitle}{\emph{Formal Methods for
  Multicore Programming - 15th International School on Formal Methods for the
  Design of Computer, Communication, and Software Systems, {SFM} 2015,
  Bertinoro, Italy, June 15-19, 2015, Advanced Lectures}}
  \emph{(\bibinfo{series}{Lecture Notes in Computer Science},
  Vol.~\bibinfo{volume}{9104})}, \bibfield{editor}{\bibinfo{person}{Marco
  Bernardo} {and} \bibinfo{person}{Einar~Broch Johnsen}} (Eds.).
  \bibinfo{publisher}{Springer}, \bibinfo{pages}{1--56}.
\newblock
\urldef\tempurl%
\url{https://doi.org/10.1007/978-3-319-18941-3\_1}
\showDOI{\tempurl}


\bibitem[\protect\citeauthoryear{Brookes}{Brookes}{2002}]%
        {Brookes02}
\bibfield{author}{\bibinfo{person}{Stephen~D. Brookes}.}
  \bibinfo{year}{2002}\natexlab{}.
\newblock \showarticletitle{Traces, Pomsets, Fairness and Full Abstraction for
  Communicating Processes}. In \bibinfo{booktitle}{\emph{Proc.\ 13th Intl.\
  Conf. on Concurrency Theory ({CONCUR} 2002)}} \emph{(\bibinfo{series}{LNCS},
  Vol.~\bibinfo{volume}{2421})}, \bibfield{editor}{\bibinfo{person}{Lubos
  Brim}, \bibinfo{person}{Petr Jancar}, \bibinfo{person}{Mojm{\'{\i}}r
  Kret{\'{\i}}nsk{\'{y}}}, {and} \bibinfo{person}{Anton{\'{\i}}n Kucera}}
  (Eds.). \bibinfo{publisher}{Springer}, \bibinfo{address}{Berlin Heidelberg},
  \bibinfo{pages}{466--482}.
\newblock


\bibitem[\protect\citeauthoryear{Capretta}{Capretta}{2005}]%
        {Cap05}
\bibfield{author}{\bibinfo{person}{Venanzio Capretta}.}
  \bibinfo{year}{2005}\natexlab{}.
\newblock \showarticletitle{General Recursion via Coinductive Types}.
\newblock \bibinfo{journal}{\emph{Logical Methods in Computer Science}}
  \bibinfo{volume}{1}, \bibinfo{number}{2} (\bibinfo{year}{2005}),
  \bibinfo{pages}{1--18}.
\newblock
\showISSN{1860-5974}
\urldef\tempurl%
\url{https://doi.org/10.2168/LMCS-1(2:1)2005}
\showDOI{\tempurl}


\bibitem[\protect\citeauthoryear{Castellan, Clairambault, Rideau, and
  Winskel}{Castellan et~al\mbox{.}}{2017}]%
        {CastellanCRW17}
\bibfield{author}{\bibinfo{person}{Simon Castellan}, \bibinfo{person}{Pierre
  Clairambault}, \bibinfo{person}{Silvain Rideau}, {and} \bibinfo{person}{Glynn
  Winskel}.} \bibinfo{year}{2017}\natexlab{}.
\newblock \showarticletitle{Games and Strategies as Event Structures}.
\newblock \bibinfo{journal}{\emph{Log. Methods Comput. Sci.}}
  \bibinfo{volume}{13}, \bibinfo{number}{3} (\bibinfo{year}{2017}).
\newblock
\urldef\tempurl%
\url{https://doi.org/10.23638/LMCS-13(3:35)2017}
\showDOI{\tempurl}


\bibitem[\protect\citeauthoryear{Crafa, Varacca, and Yoshida}{Crafa
  et~al\mbox{.}}{2012}]%
        {Crafa2012}
\bibfield{author}{\bibinfo{person}{Silvia Crafa}, \bibinfo{person}{Daniele
  Varacca}, {and} \bibinfo{person}{Nobuko Yoshida}.}
  \bibinfo{year}{2012}\natexlab{}.
\newblock \showarticletitle{Event structure semantics of parallel extrusion in
  the pi-calculus}. In \bibinfo{booktitle}{\emph{Lecture Notes in Computer
  Science (including subseries Lecture Notes in Artificial Intelligence and
  Lecture Notes in Bioinformatics)}}, Vol.~\bibinfo{volume}{7213 LNCS}.
  \bibinfo{pages}{225--239}.
\newblock
\showISBNx{9783642287282}
\showISSN{03029743}
\urldef\tempurl%
\url{https://doi.org/10.1007/978-3-642-28729-9_15}
\showDOI{\tempurl}


\bibitem[\protect\citeauthoryear{Cristescu, Krivine, and Varacca}{Cristescu
  et~al\mbox{.}}{2013}]%
        {Cristescu2013}
\bibfield{author}{\bibinfo{person}{Ioana Cristescu}, \bibinfo{person}{Jean
  Krivine}, {and} \bibinfo{person}{Daniele Varacca}.}
  \bibinfo{year}{2013}\natexlab{}.
\newblock \showarticletitle{{A Compositional Semantics for the Reversible
  $\pi$-Calculus}}. In \bibinfo{booktitle}{\emph{Proceedings - Symposium on
  Logic in Computer Science}}. \bibinfo{pages}{388--397}.
\newblock
\showISSN{10436871}
\urldef\tempurl%
\url{https://doi.org/10.1109/LICS.2013.45}
\showDOI{\tempurl}


\bibitem[\protect\citeauthoryear{Din, H{\"a}hnle, Johnsen, Pun, and
  Tapia~Tarifa}{Din et~al\mbox{.}}{2017}]%
        {2017-LAGC-Old}
\bibfield{author}{\bibinfo{person}{Crystal~Chang Din}, \bibinfo{person}{Reiner
  H{\"a}hnle}, \bibinfo{person}{Einar~Broch Johnsen}, \bibinfo{person}{Ka~I.
  Pun}, {and} \bibinfo{person}{Silvia~Lizeth Tapia~Tarifa}.}
  \bibinfo{year}{2017}\natexlab{}.
\newblock \showarticletitle{Locally Abstract, Globally Concrete Semantics of
  Concurrent Programming Languages}. In \bibinfo{booktitle}{\emph{Automated
  Reasoning with Analytic Tableaux and Related Methods}},
  \bibfield{editor}{\bibinfo{person}{Renate~A. Schmidt} {and}
  \bibinfo{person}{Cl{\'a}udia Nalon}} (Eds.). \bibinfo{publisher}{Springer
  International Publishing}, \bibinfo{address}{Cham}, \bibinfo{pages}{22--43}.
\newblock


\bibitem[\protect\citeauthoryear{Din, Hähnle, Henrio, Johnsen, Pun, and
  Tarifa}{Din et~al\mbox{.}}{2022}]%
        {2022-LAGC-New}
\bibfield{author}{\bibinfo{person}{Crystal~Chang Din}, \bibinfo{person}{Reiner
  Hähnle}, \bibinfo{person}{Ludovic Henrio}, \bibinfo{person}{Einar~Broch
  Johnsen}, \bibinfo{person}{Violet Ka~I Pun}, {and} \bibinfo{person}{Silvia
  Lizeth~Tapia Tarifa}.} \bibinfo{year}{2022}\natexlab{}.
\newblock \bibinfo{title}{LAGC Semantics of Concurrent Programming Languages}.
\newblock
\newblock
\urldef\tempurl%
\url{https://doi.org/10.48550/ARXIV.2202.12195}
\showDOI{\tempurl}


\bibitem[\protect\citeauthoryear{Foster, Hur, and Woodcock}{Foster
  et~al\mbox{.}}{2021}]%
        {FHW21}
\bibfield{author}{\bibinfo{person}{Simon Foster}, \bibinfo{person}{Chung{-}Kil
  Hur}, {and} \bibinfo{person}{Jim Woodcock}.} \bibinfo{year}{2021}\natexlab{}.
\newblock \showarticletitle{Formally Verified Simulations of State-Rich
  Processes Using Interaction Trees in Isabelle/HOL}. In
  \bibinfo{booktitle}{\emph{32nd International Conference on Concurrency
  Theory, {CONCUR} 2021, August 24-27, 2021, Virtual Conference}}
  \emph{(\bibinfo{series}{LIPIcs}, Vol.~\bibinfo{volume}{203})},
  \bibfield{editor}{\bibinfo{person}{Serge Haddad} {and}
  \bibinfo{person}{Daniele Varacca}} (Eds.). \bibinfo{publisher}{Schloss
  Dagstuhl - Leibniz-Zentrum f{\"{u}}r Informatik},
  \bibinfo{pages}{20:1--20:18}.
\newblock
\urldef\tempurl%
\url{https://doi.org/10.4230/LIPIcs.CONCUR.2021.20}
\showDOI{\tempurl}


\bibitem[\protect\citeauthoryear{Harper}{Harper}{2013}]%
        {Harper13}
\bibfield{author}{\bibinfo{person}{Robert Harper}.}
  \bibinfo{year}{2013}\natexlab{}.
\newblock \bibinfo{booktitle}{\emph{Practical Foundations for Programming
  Languages}}.
\newblock \bibinfo{publisher}{Cambridge University Press}.
\newblock


\bibitem[\protect\citeauthoryear{Henrio, Madelaine, and Zhang}{Henrio
  et~al\mbox{.}}{2016}]%
        {henrio:01299562}
\bibfield{author}{\bibinfo{person}{Ludovic Henrio}, \bibinfo{person}{Eric
  Madelaine}, {and} \bibinfo{person}{Min Zhang}.}
  \bibinfo{year}{2016}\natexlab{}.
\newblock \showarticletitle{A Theory for the Composition of Concurrent
  Processes}. In \bibinfo{booktitle}{\emph{Formal Techniques for Distributed
  Objects, Components, and Systems}}, \bibfield{editor}{\bibinfo{person}{Elvira
  Albert} {and} \bibinfo{person}{Ivan Lanese}} (Eds.).
  \bibinfo{publisher}{Springer International Publishing},
  \bibinfo{address}{Cham}, \bibinfo{pages}{175--194}.
\newblock
\showISBNx{978-3-319-39570-8}


\bibitem[\protect\citeauthoryear{Hur, Neis, Dreyer, and Vafeiadis}{Hur
  et~al\mbox{.}}{2013}]%
        {paco}
\bibfield{author}{\bibinfo{person}{Chung-Kil Hur}, \bibinfo{person}{Georg
  Neis}, \bibinfo{person}{Derek Dreyer}, {and} \bibinfo{person}{Viktor
  Vafeiadis}.} \bibinfo{year}{2013}\natexlab{}.
\newblock \showarticletitle{The Power of Parameterization in Coinductive
  Proof}. In \bibinfo{booktitle}{\emph{Proceedings of the 40th Annual ACM
  SIGPLAN-SIGACT Symposium on Principles of Programming Languages}} (Rome,
  Italy) \emph{(\bibinfo{series}{POPL '13})}. \bibinfo{publisher}{ACM},
  \bibinfo{address}{New York, NY, USA}, \bibinfo{pages}{193--206}.
\newblock
\showISBNx{978-1-4503-1832-7}
\urldef\tempurl%
\url{https://doi.org/10.1145/2429069.2429093}
\showDOI{\tempurl}


\bibitem[\protect\citeauthoryear{Kang, Hur, Lahav, Vafeiadis, and Dreyer}{Kang
  et~al\mbox{.}}{2017}]%
        {promising}
\bibfield{author}{\bibinfo{person}{Jeehoon Kang}, \bibinfo{person}{Chung{-}Kil
  Hur}, \bibinfo{person}{Ori Lahav}, \bibinfo{person}{Viktor Vafeiadis}, {and}
  \bibinfo{person}{Derek Dreyer}.} \bibinfo{year}{2017}\natexlab{}.
\newblock \showarticletitle{A promising semantics for relaxed-memory
  concurrency}. In \bibinfo{booktitle}{\emph{Proceedings of the 44th {ACM}
  {SIGPLAN} Symposium on Principles of Programming Languages, {POPL} 2017,
  Paris, France, January 18-20, 2017}},
  \bibfield{editor}{\bibinfo{person}{Giuseppe Castagna} {and}
  \bibinfo{person}{Andrew~D. Gordon}} (Eds.). \bibinfo{publisher}{{ACM}},
  \bibinfo{pages}{175--189}.
\newblock
\urldef\tempurl%
\url{https://doi.org/10.1145/3009837.3009850}
\showDOI{\tempurl}


\bibitem[\protect\citeauthoryear{King}{King}{1976}]%
        {King76}
\bibfield{author}{\bibinfo{person}{James~C. King}.}
  \bibinfo{year}{1976}\natexlab{}.
\newblock \showarticletitle{Symbolic Execution and Program Testing}.
\newblock \bibinfo{journal}{\emph{Commun. ACM}} \bibinfo{volume}{19},
  \bibinfo{number}{7} (\bibinfo{date}{July} \bibinfo{year}{1976}),
  \bibinfo{pages}{385--394}.
\newblock


\bibitem[\protect\citeauthoryear{Kiselyov and Ishii}{Kiselyov and
  Ishii}{2015}]%
        {freer}
\bibfield{author}{\bibinfo{person}{Oleg Kiselyov} {and} \bibinfo{person}{Hiromi
  Ishii}.} \bibinfo{year}{2015}\natexlab{}.
\newblock \showarticletitle{Freer monads, more extensible effects}. In
  \bibinfo{booktitle}{\emph{Proceedings of the 8th {ACM} {SIGPLAN} Symposium on
  Haskell, Haskell 2015, Vancouver, BC, Canada, September 3-4, 2015}}.
  \bibinfo{pages}{94--105}.
\newblock
\urldef\tempurl%
\url{https://doi.org/10.1145/2804302.2804319}
\showDOI{\tempurl}


\bibitem[\protect\citeauthoryear{Koenig and Shao}{Koenig and Shao}{2020}]%
        {KS20}
\bibfield{author}{\bibinfo{person}{J\'{e}r\'{e}mie Koenig} {and}
  \bibinfo{person}{Zhong Shao}.} \bibinfo{year}{2020}\natexlab{}.
\newblock \showarticletitle{Refinement-Based Game Semantics for Certified
  Abstraction Layers}. In \bibinfo{booktitle}{\emph{Proceedings of the 35th
  Annual ACM/IEEE Symposium on Logic in Computer Science}} (Saarbr\"{u}cken,
  Germany) \emph{(\bibinfo{series}{LICS '20})}. \bibinfo{publisher}{Association
  for Computing Machinery}, \bibinfo{address}{New York, NY, USA},
  \bibinfo{pages}{633–647}.
\newblock
\showISBNx{9781450371049}
\urldef\tempurl%
\url{https://doi.org/10.1145/3373718.3394799}
\showDOI{\tempurl}


\bibitem[\protect\citeauthoryear{Lampropoulos and Pierce}{Lampropoulos and
  Pierce}{2018}]%
        {quickchick}
\bibfield{author}{\bibinfo{person}{Leonidas Lampropoulos} {and}
  \bibinfo{person}{Benjamin~C. Pierce}.} \bibinfo{year}{2018}\natexlab{}.
\newblock \bibinfo{booktitle}{\emph{QuickChick: Property-Based Testing in
  Coq}}.
\newblock \bibinfo{publisher}{Electronic textbook}.
\newblock
\urldef\tempurl%
\url{https://softwarefoundations.cis.upenn.edu/qc-current/index.html}
\showURL{%
\tempurl}


\bibitem[\protect\citeauthoryear{Lee, Cho, Podkopaev, Chakraborty, Hur, Lahav,
  and Vafeiadis}{Lee et~al\mbox{.}}{2020}]%
        {promising2}
\bibfield{author}{\bibinfo{person}{Sung{-}Hwan Lee}, \bibinfo{person}{Minki
  Cho}, \bibinfo{person}{Anton Podkopaev}, \bibinfo{person}{Soham Chakraborty},
  \bibinfo{person}{Chung{-}Kil Hur}, \bibinfo{person}{Ori Lahav}, {and}
  \bibinfo{person}{Viktor Vafeiadis}.} \bibinfo{year}{2020}\natexlab{}.
\newblock \showarticletitle{Promising 2.0: global optimizations in relaxed
  memory concurrency}. In \bibinfo{booktitle}{\emph{Proceedings of the 41st
  {ACM} {SIGPLAN} International Conference on Programming Language Design and
  Implementation, {PLDI} 2020, London, UK, June 15-20, 2020}},
  \bibfield{editor}{\bibinfo{person}{Alastair~F. Donaldson} {and}
  \bibinfo{person}{Emina Torlak}} (Eds.). \bibinfo{publisher}{{ACM}},
  \bibinfo{pages}{362--376}.
\newblock
\urldef\tempurl%
\url{https://doi.org/10.1145/3385412.3386010}
\showDOI{\tempurl}


\bibitem[\protect\citeauthoryear{Lenglet and Schmitt}{Lenglet and
  Schmitt}{2018}]%
        {lengletS18}
\bibfield{author}{\bibinfo{person}{Sergue{\"{\i}} Lenglet} {and}
  \bibinfo{person}{Alan Schmitt}.} \bibinfo{year}{2018}\natexlab{}.
\newblock \showarticletitle{HO{\(\pi\)} in Coq}. In
  \bibinfo{booktitle}{\emph{Proceedings of the 7th {ACM} {SIGPLAN}
  International Conference on Certified Programs and Proofs, {CPP} 2018, Los
  Angeles, CA, USA, January 8-9, 2018}},
  \bibfield{editor}{\bibinfo{person}{June Andronick} {and}
  \bibinfo{person}{Amy~P. Felty}} (Eds.). \bibinfo{publisher}{{ACM}},
  \bibinfo{pages}{252--265}.
\newblock
\urldef\tempurl%
\url{https://doi.org/10.1145/3167083}
\showDOI{\tempurl}


\bibitem[\protect\citeauthoryear{Lesani, Xia, Kaseorg, Bell, Chlipala, Pierce,
  and Zdancewic}{Lesani et~al\mbox{.}}{2022}]%
        {LesaniXKBCPZ22}
\bibfield{author}{\bibinfo{person}{Mohsen Lesani}, \bibinfo{person}{Li{-}yao
  Xia}, \bibinfo{person}{Anders Kaseorg}, \bibinfo{person}{Christian~J. Bell},
  \bibinfo{person}{Adam Chlipala}, \bibinfo{person}{Benjamin~C. Pierce}, {and}
  \bibinfo{person}{Steve Zdancewic}.} \bibinfo{year}{2022}\natexlab{}.
\newblock \showarticletitle{{C4:} verified transactional objects}.
\newblock \bibinfo{journal}{\emph{Proc. {ACM} Program. Lang.}}
  \bibinfo{volume}{6}, \bibinfo{number}{{OOPSLA}} (\bibinfo{year}{2022}),
  \bibinfo{pages}{1--31}.
\newblock
\urldef\tempurl%
\url{https://doi.org/10.1145/3527324}
\showDOI{\tempurl}


\bibitem[\protect\citeauthoryear{Letan, R{\'{e}}gis{-}Gianas, Chifflier, and
  Hiet}{Letan et~al\mbox{.}}{2018}]%
        {freespec}
\bibfield{author}{\bibinfo{person}{Thomas Letan}, \bibinfo{person}{Yann
  R{\'{e}}gis{-}Gianas}, \bibinfo{person}{Pierre Chifflier}, {and}
  \bibinfo{person}{Guillaume Hiet}.} \bibinfo{year}{2018}\natexlab{}.
\newblock \showarticletitle{Modular Verification of Programs with Effects and
  Effect Handlers in Coq}. In \bibinfo{booktitle}{\emph{Formal Methods - 22nd
  International Symposium, {FM} 2018, Held as Part of the Federated Logic
  Conference, FloC 2018, Oxford, UK, July 15-17, 2018, Proceedings}}.
  \bibinfo{pages}{338--354}.
\newblock
\urldef\tempurl%
\url{https://doi.org/10.1007/978-3-319-95582-7\_20}
\showDOI{\tempurl}


\bibitem[\protect\citeauthoryear{Maillard, Hri\cb{t}cu, Rivas, and
  Van~Muylder}{Maillard et~al\mbox{.}}{2020}]%
        {MHRV20}
\bibfield{author}{\bibinfo{person}{Kenji Maillard},
  \bibinfo{person}{C\u{a}t\u{a}lin Hri\cb{t}cu}, \bibinfo{person}{Exequiel
  Rivas}, {and} \bibinfo{person}{Antoine Van~Muylder}.}
  \bibinfo{year}{2020}\natexlab{}.
\newblock \showarticletitle{The next 700 Relational Program Logics}.
\newblock \bibinfo{journal}{\emph{Proceedings of the ACM on Programming
  Languages}} \bibinfo{volume}{4}, \bibinfo{number}{POPL}, Article
  \bibinfo{articleno}{4} (\bibinfo{year}{2020}), \bibinfo{numpages}{33}~pages.
\newblock
\urldef\tempurl%
\url{https://doi.org/10.1145/3371072}
\showDOI{\tempurl}


\bibitem[\protect\citeauthoryear{Melli{\`e}s and Mimram}{Melli{\`e}s and
  Mimram}{2007}]%
        {MM07}
\bibfield{author}{\bibinfo{person}{Paul-Andr{\'e} Melli{\`e}s} {and}
  \bibinfo{person}{Samuel Mimram}.} \bibinfo{year}{2007}\natexlab{}.
\newblock \showarticletitle{Asynchronous Games: Innocence Without Alternation}.
  In \bibinfo{booktitle}{\emph{CONCUR 2007 -- Concurrency Theory}},
  \bibfield{editor}{\bibinfo{person}{Lu{\'i}s Caires} {and}
  \bibinfo{person}{Vasco~T. Vasconcelos}} (Eds.). \bibinfo{publisher}{Springer
  Berlin Heidelberg}, \bibinfo{address}{Berlin, Heidelberg},
  \bibinfo{pages}{395--411}.
\newblock
\showISBNx{978-3-540-74407-8}


\bibitem[\protect\citeauthoryear{Milner}{Milner}{1989a}]%
        {ccs}
\bibfield{author}{\bibinfo{person}{R. Milner}.}
  \bibinfo{year}{1989}\natexlab{a}.
\newblock \bibinfo{booktitle}{\emph{Communication and Concurrency}}.
\newblock \bibinfo{publisher}{Prentice-Hall, Inc.}, \bibinfo{address}{USA}.
\newblock
\showISBNx{0131149849}


\bibitem[\protect\citeauthoryear{Milner}{Milner}{1989b}]%
        {Milner:CC1989}
\bibfield{author}{\bibinfo{person}{Robin Milner}.}
  \bibinfo{year}{1989}\natexlab{b}.
\newblock \bibinfo{booktitle}{\emph{Communication and Concurrency}}.
\newblock \bibinfo{publisher}{Prentice-Hall, Inc.}, \bibinfo{address}{USA}.
\newblock
\showISBNx{0131149849}


\bibitem[\protect\citeauthoryear{Milner, Parrow, and Walker}{Milner
  et~al\mbox{.}}{1992}]%
        {picalculus}
\bibfield{author}{\bibinfo{person}{Robin Milner}, \bibinfo{person}{Joachim
  Parrow}, {and} \bibinfo{person}{David Walker}.}
  \bibinfo{year}{1992}\natexlab{}.
\newblock \showarticletitle{A calculus of mobile processes, I}.
\newblock \bibinfo{journal}{\emph{Information and Computation}}
  \bibinfo{volume}{100}, \bibinfo{number}{1} (\bibinfo{year}{1992}),
  \bibinfo{pages}{1--40}.
\newblock
\showISSN{0890-5401}
\urldef\tempurl%
\url{https://doi.org/10.1016/0890-5401(92)90008-4}
\showDOI{\tempurl}


\bibitem[\protect\citeauthoryear{Oliveira~Vale, Melli\`{e}s, Shao, Koenig, and
  Stefanesco}{Oliveira~Vale et~al\mbox{.}}{2022}]%
        {VMSJ+22}
\bibfield{author}{\bibinfo{person}{Arthur Oliveira~Vale},
  \bibinfo{person}{Paul-Andr\'{e} Melli\`{e}s}, \bibinfo{person}{Zhong Shao},
  \bibinfo{person}{J\'{e}r\'{e}mie Koenig}, {and} \bibinfo{person}{L\'{e}o
  Stefanesco}.} \bibinfo{year}{2022}\natexlab{}.
\newblock \showarticletitle{Layered and Object-Based Game Semantics}.
\newblock \bibinfo{journal}{\emph{Proc. ACM Program. Lang.}}
  \bibinfo{volume}{6}, \bibinfo{number}{POPL}, Article \bibinfo{articleno}{42}
  (\bibinfo{date}{jan} \bibinfo{year}{2022}), \bibinfo{numpages}{32}~pages.
\newblock
\urldef\tempurl%
\url{https://doi.org/10.1145/3498703}
\showDOI{\tempurl}


\bibitem[\protect\citeauthoryear{Pous}{Pous}{2007}]%
        {pous07}
\bibfield{author}{\bibinfo{person}{Damien Pous}.}
  \bibinfo{year}{2007}\natexlab{}.
\newblock \showarticletitle{Complete Lattices and Up-To Techniques}. In
  \bibinfo{booktitle}{\emph{Programming Languages and Systems}},
  \bibfield{editor}{\bibinfo{person}{Zhong Shao}} (Ed.).
  \bibinfo{publisher}{Springer Berlin Heidelberg}, \bibinfo{address}{Berlin,
  Heidelberg}, \bibinfo{pages}{351--366}.
\newblock
\showISBNx{978-3-540-76637-7}


\bibitem[\protect\citeauthoryear{Pous}{Pous}{2016}]%
        {companion}
\bibfield{author}{\bibinfo{person}{Damien Pous}.}
  \bibinfo{year}{2016}\natexlab{}.
\newblock \showarticletitle{Coinduction All the Way Up}. In
  \bibinfo{booktitle}{\emph{Proceedings of the 31st Annual ACM/IEEE Symposium
  on Logic in Computer Science}} (New York, NY, USA)
  \emph{(\bibinfo{series}{LICS '16})}. \bibinfo{publisher}{Association for
  Computing Machinery}, \bibinfo{address}{New York, NY, USA},
  \bibinfo{pages}{307–316}.
\newblock
\showISBNx{9781450343916}
\urldef\tempurl%
\url{https://doi.org/10.1145/2933575.2934564}
\showDOI{\tempurl}


\bibitem[\protect\citeauthoryear{Pous}{Pous}{2022a}]%
        {coq-coinduction}
\bibfield{author}{\bibinfo{person}{Damien Pous}.}
  \bibinfo{year}{2022}\natexlab{a}.
\newblock \bibinfo{booktitle}{\emph{The coq-coinduction library}}.
\newblock
\urldef\tempurl%
\url{https://github.com/damien-pous/coinduction}
\showURL{%
\tempurl}


\bibitem[\protect\citeauthoryear{Pous}{Pous}{2022b}]%
        {coq-coinduction-example}
\bibfield{author}{\bibinfo{person}{Damien Pous}.}
  \bibinfo{year}{2022}\natexlab{b}.
\newblock \bibinfo{booktitle}{\emph{The coq-coinduction library: examples}}.
\newblock
\urldef\tempurl%
\url{https://github.com/damien-pous/coinduction-examples}
\showURL{%
\tempurl}


\bibitem[\protect\citeauthoryear{Rideau and Winskel}{Rideau and
  Winskel}{2011}]%
        {RW11}
\bibfield{author}{\bibinfo{person}{Silvain Rideau} {and} \bibinfo{person}{Glynn
  Winskel}.} \bibinfo{year}{2011}\natexlab{}.
\newblock \showarticletitle{Concurrent Strategies}. In
  \bibinfo{booktitle}{\emph{Proceedings of the 26th Annual {IEEE} Symposium on
  Logic in Computer Science, {LICS} 2011, June 21-24, 2011, Toronto, Ontario,
  Canada}}. \bibinfo{publisher}{{IEEE} Computer Society},
  \bibinfo{pages}{409--418}.
\newblock
\urldef\tempurl%
\url{https://doi.org/10.1109/LICS.2011.13}
\showDOI{\tempurl}


\bibitem[\protect\citeauthoryear{Sangiorgi}{Sangiorgi}{1998}]%
        {sangiorgi98}
\bibfield{author}{\bibinfo{person}{Davide Sangiorgi}.}
  \bibinfo{year}{1998}\natexlab{}.
\newblock \showarticletitle{On the bisimulation proof method}.
\newblock \bibinfo{journal}{\emph{Mathematical Structures in Computer Science}}
  \bibinfo{volume}{8}, \bibinfo{number}{5} (\bibinfo{year}{1998}),
  \bibinfo{pages}{447–479}.
\newblock
\urldef\tempurl%
\url{https://doi.org/10.1017/S0960129598002527}
\showDOI{\tempurl}


\bibitem[\protect\citeauthoryear{Sangiorgi and Rutten}{Sangiorgi and
  Rutten}{2012}]%
        {SR12}
\bibfield{author}{\bibinfo{person}{Davide Sangiorgi} {and} \bibinfo{person}{Jan
  Rutten}.} \bibinfo{year}{2012}\natexlab{}.
\newblock \bibinfo{booktitle}{\emph{Advanced Topics in Bisimulation and
  Coinduction} (\bibinfo{edition}{2nd} ed.)}.
\newblock \bibinfo{publisher}{Cambridge University Press},
  \bibinfo{address}{USA}.
\newblock
\showISBNx{9781107004979}


\bibitem[\protect\citeauthoryear{Sangiorgi and Walker}{Sangiorgi and
  Walker}{2001}]%
        {SW01}
\bibfield{author}{\bibinfo{person}{Davide Sangiorgi} {and}
  \bibinfo{person}{David Walker}.} \bibinfo{year}{2001}\natexlab{}.
\newblock \bibinfo{booktitle}{\emph{The $\pi$-calculus} (\bibinfo{edition}{1st}
  ed.)}.
\newblock \bibinfo{publisher}{Cambridge University Press},
  \bibinfo{address}{USA}.
\newblock


\bibitem[\protect\citeauthoryear{Sevc{\'{\i}}k, Vafeiadis, Nardelli,
  Jagannathan, and Sewell}{Sevc{\'{\i}}k et~al\mbox{.}}{2013}]%
        {compcerttso}
\bibfield{author}{\bibinfo{person}{Jaroslav Sevc{\'{\i}}k},
  \bibinfo{person}{Viktor Vafeiadis}, \bibinfo{person}{Francesco~Zappa
  Nardelli}, \bibinfo{person}{Suresh Jagannathan}, {and} \bibinfo{person}{Peter
  Sewell}.} \bibinfo{year}{2013}\natexlab{}.
\newblock \showarticletitle{CompCertTSO: {A} Verified Compiler for
  Relaxed-Memory Concurrency}.
\newblock \bibinfo{journal}{\emph{J. {ACM}}} \bibinfo{volume}{60},
  \bibinfo{number}{3} (\bibinfo{year}{2013}), \bibinfo{pages}{22:1--22:50}.
\newblock
\urldef\tempurl%
\url{https://doi.org/10.1145/2487241.2487248}
\showDOI{\tempurl}


\bibitem[\protect\citeauthoryear{Smyth}{Smyth}{1976}]%
        {Smy76}
\bibfield{author}{\bibinfo{person}{M.B. Smyth}.}
  \bibinfo{year}{1976}\natexlab{}.
\newblock \showarticletitle{Powerdomains}. In
  \bibinfo{booktitle}{\emph{Mathematical Foundations of Computer Science}}
  \emph{(\bibinfo{series}{Lecture Notes in Computer Science},
  Vol.~\bibinfo{volume}{4})}. \bibinfo{publisher}{Springer}.
\newblock


\bibitem[\protect\citeauthoryear{Team}{Team}{2022}]%
        {coq}
\bibfield{author}{\bibinfo{person}{The Coq~Development Team}.}
  \bibinfo{year}{2022}\natexlab{}.
\newblock \bibinfo{booktitle}{\emph{The Coq Proof Assistant}}.
\newblock
\urldef\tempurl%
\url{https://doi.org/10.5281/zenodo.5846982}
\showDOI{\tempurl}


\bibitem[\protect\citeauthoryear{Xia, Zakowski, He, Hur, Malecha, Pierce, and
  Zdancewic}{Xia et~al\mbox{.}}{2020}]%
        {XZHH+20}
\bibfield{author}{\bibinfo{person}{{Li-yao} Xia}, \bibinfo{person}{Yannick
  Zakowski}, \bibinfo{person}{Paul He}, \bibinfo{person}{{Chung-Kil} Hur},
  \bibinfo{person}{Gregory Malecha}, \bibinfo{person}{Benjamin~C. Pierce},
  {and} \bibinfo{person}{Steve Zdancewic}.} \bibinfo{year}{2020}\natexlab{}.
\newblock \showarticletitle{Interaction Trees}.
\newblock \bibinfo{journal}{\emph{Proceedings of the ACM on Programming
  Languages}} \bibinfo{volume}{4}, \bibinfo{number}{POPL}
  (\bibinfo{year}{2020}).
\newblock
\urldef\tempurl%
\url{https://doi.org/10.1145/3371119}
\showDOI{\tempurl}


\bibitem[\protect\citeauthoryear{Yoon, Zakowski, and Zdancewic}{Yoon
  et~al\mbox{.}}{2022a}]%
        {YZZ22}
\bibfield{author}{\bibinfo{person}{Irene Yoon}, \bibinfo{person}{Yannick
  Zakowski}, {and} \bibinfo{person}{Steve Zdancewic}.}
  \bibinfo{year}{2022}\natexlab{a}.
\newblock \showarticletitle{Formal Reasoning About Layered Monadic
  Interpreters}.
\newblock \bibinfo{journal}{\emph{Proceedings of the ACM on Programming
  Languages}} \bibinfo{volume}{6}, \bibinfo{number}{ICFP}
  (\bibinfo{year}{2022}).
\newblock


\bibitem[\protect\citeauthoryear{Yoon, Zakowski, and Zdancewic}{Yoon
  et~al\mbox{.}}{2022b}]%
        {YoonZZ22}
\bibfield{author}{\bibinfo{person}{Irene Yoon}, \bibinfo{person}{Yannick
  Zakowski}, {and} \bibinfo{person}{Steve Zdancewic}.}
  \bibinfo{year}{2022}\natexlab{b}.
\newblock \showarticletitle{Formal reasoning about layered monadic
  interpreters}.
\newblock \bibinfo{journal}{\emph{Proc. {ACM} Program. Lang.}}
  \bibinfo{volume}{6}, \bibinfo{number}{{ICFP}} (\bibinfo{year}{2022}),
  \bibinfo{pages}{254--282}.
\newblock
\urldef\tempurl%
\url{https://doi.org/10.1145/3547630}
\showDOI{\tempurl}


\bibitem[\protect\citeauthoryear{Zakowski, Beck, Yoon, Zaichuk, Zaliva, and
  Zdancewic}{Zakowski et~al\mbox{.}}{2021}]%
        {zakowski2021}
\bibfield{author}{\bibinfo{person}{Yannick Zakowski}, \bibinfo{person}{Calvin
  Beck}, \bibinfo{person}{Irene Yoon}, \bibinfo{person}{Ilia Zaichuk},
  \bibinfo{person}{Vadim Zaliva}, {and} \bibinfo{person}{Steve Zdancewic}.}
  \bibinfo{year}{2021}\natexlab{}.
\newblock \showarticletitle{Modular, Compositional, and Executable Formal
  Semantics for LLVM IR}.
\newblock \bibinfo{journal}{\emph{Proc. ACM Program. Lang.}}
  \bibinfo{volume}{5}, \bibinfo{number}{ICFP}, Article \bibinfo{articleno}{67}
  (\bibinfo{date}{aug} \bibinfo{year}{2021}), \bibinfo{numpages}{30}~pages.
\newblock
\urldef\tempurl%
\url{https://doi.org/10.1145/3473572}
\showDOI{\tempurl}


\bibitem[\protect\citeauthoryear{Zakowski, He, Hur, and Zdancewic}{Zakowski
  et~al\mbox{.}}{2020}]%
        {gpaco}
\bibfield{author}{\bibinfo{person}{Yannick Zakowski}, \bibinfo{person}{Paul
  He}, \bibinfo{person}{Chung-Kil Hur}, {and} \bibinfo{person}{Steve
  Zdancewic}.} \bibinfo{year}{2020}\natexlab{}.
\newblock \showarticletitle{An Equational Theory for Weak Bisimulation via
  Generalized Parameterized Coinduction}. In
  \bibinfo{booktitle}{\emph{Proceedings of the 9th ACM SIGPLAN International
  Conference on Certified Programs and Proofs (CPP)}}.
\newblock
\urldef\tempurl%
\url{https://doi.org/10.1145/3372885.3373813}
\showDOI{\tempurl}


\end{thebibliography}

\end{document}